\newtheorem{theorem}{Theorem}
\newtheorem{proposition}{Proposition}
\newtheorem{definition}{Definition}
\newtheorem{lemma}{Lemma}
\newtheorem{remark}{Remark}
\newtheorem{example}{Example}
\newtheorem{fact}{Fact}
\begin{document}

\title{Destructive nodes in multi-agent controllability
\thanks{This work was supported by the National Natural Science Foundation of China (Nos. 61374062, 61174131), Science Foundation of Shandong Province for Distinguished Young Scholars (No. JQ201419) and the Natural Sciences and Engineering Research Council of Canada.}}

\author{Zhijian Ji
\thanks{Zhijian Ji and Haisheng Yu are with the College of Automation Engineering, Qingdao University, Qingdao, 266071,
China. Corresponding author: Zhijian Ji. E-mails: jizhijian@pku.org.cn (Z. Ji), yhsh@qdu.edu.cn (H. Yu)}, 
Tongwen Chen
\thanks{Tongwen Chen is with the Department of Electrical and Computer Engineering, University of Alberta, Edmonton, Alberta, Canada T6G 2V4. E-mail:tchen@ualberta.ca (T. Chen).}, 
        and  Haisheng Yu$^{\dag}$}

\maketitle

\begin{abstract}                          
In this paper, several necessary and sufficient graphical conditions are derived for the controllability of multi-agent systems by taking advantage of the proposed concept of controllability destructive nodes. A key step of arriving at this result is the establishment of a relationship between topology structures of the controllability destructive nodes and a specific eigenvector of the Laplacian matrix. The results on double, triple and quadruple controllability destructive nodes constitute a novel approach to study the controllability. In particular, the approach is applied to the graph consisting of five nodes to get a complete graphical characterization of controllability.
\end{abstract}


\section{Introduction}

Designing control strategies directly from network topologies is challenging, which contributes to an efficient manipulation of networks and a better understanding of the nature of systems. This requires research of the interplay between network topologies and system dynamics \cite{Yuan2013}.  Recently, considerable efforts have been made along this line in the multi-agent literature to understand how communication topological structures are related to controllability, which is also the focus here, where destructive nodes are defined to characterize controllability-relevant topologies. 

Multi-agent controllability was formulated under a leader-follower framework in which the influence over network is achieved by exerting control inputs upon leaders \cite{Tanner2004a}. A system is controllable if followers can be steered to proper positions to form any desirable configuration by regulating the movement of leaders. The earliest necessary and sufficient algebraic condition was presented in \cite{Tanner2004a}, which was expressed in terms of eigenvalues and eigenvectors of submatrices of Laplacian.  Another one was given in \cite{Rahmani2009}, which related controllability to the existence of a common eigenvalue of the system matrix and the Laplacian. Besides, a relationship between controllability and the eigenvectors of Laplacian was presented in \cite{JiWCon}, which provided a method of determining leaders from the elements of eigenvectors. Armed with these results, the virtue that leaders should have was characterized from both algebraic and graphical perspectives  \cite{Ji2012b}. Other algebraic conditions exist in, e.g., \cite{Rahimian2013,Lou2012,Zhangshuo2014,JiIJC,Liu2008,Lozano2008}. Recently, a unified protocol design method was proposed for controllability in  \cite{JiTAC2015}.  

Algebraic conditions lay the foundation for understanding interactions between topological structures and controllability. Previous work has suggested that this issue is quite involved, even for the simplest path graph \cite{Parlangeli2012}.  Special topologies were studied first, such as grid graphs \cite{Notarstefano2013}, symmetric structures \cite{Rahmani2006,Martini2008}, Cartesian product networks \cite{Chapman2012}, multi-chain topologies \cite{Egerstedt2012a,Cao2013} and tree graphs \cite{Ji2012b}. Controllability can be fully addressed by analyzing the eigenvectors of Laplacian, see e.g., \cite{Parlangeli2012,Notarstefano2013}. It can also be tackled through topological construction which sometimes relates to the partition of graphs. For example, topologies were designed by using the vanishing coordinates based partition \cite{Ji2012b} and an eigenvector based partition \cite{Jisubmitted2013}. In particular, the construction of uncontrollable topologies not only facilitates the design of control strategies but also deepens understanding of controllable ones \cite{Cao2013,JiWCon}. Recently, it was proved, via a proper design of protocols, that the controllability of single-integrator, high-order and generic linear multi-agent systems is uniquely determined by the topology structure of the communication graph  \cite{JiTAC2015}.  

The above work guides a further study of this issue. The topology structures of three kinds of the so-called controllability destructive nodes are discriminated and defined. Each structure depicts a topological relationship of destructive nodes to leader nodes so that leaders cannot distinguish the former, and thus destroys the controllability. Moreover, necessary and sufficient graphical conditions are derived by taking advantage of the concept of controllability destructive nodes. The results exhibit a new method of tackling controllability by which a complete graphical characterization of controllability is given for graphs consisting of five nodes.

\section{Preliminaries}

Consider a set of $n+l$ single integrator agents given by
\begin{equation}\label{singmul}
\begin{cases}
\dot x_i    =  u_i ,\qquad\;i = 1, \ldots ,n;\\
\dot z_j   =   u_{n+j},\quad j=1,\ldots,l,
\end{cases}
\end{equation}
where $n$ and $l$ are the number of followers and leaders, respectively;  $x_i$ and $z_j$ are the states of the $i$th and $(n+j)$th agent, respectively. Let $z_{1},\cdots,z_{l}$ act as leaders and be influenced only via  external control
inputs. $\mathcal {N}_i=\{j\,|\;v_i\sim v_j; j\neq i\}$ represents the neighboring set of
$v_i$ and  `$\sim$' denotes the neighboring relation. The followers are governed by neighbor based rule
\begin{equation}\label{ui}
u_i  =\sum\limits_{j \in \mathcal{N}_i } {(x_j  - x_i )}+
  \sum\limits_{(n+j) \in \mathcal{N}_i } {(z_j-x_i )},
\end{equation}
where $j\in\{1,\ldots,n\};$ $ (n+j)\in\{n+1,\ldots,n+l\}.$ $x,$ $z$ denote the stack vectors of $x_i$'s, $z_j$'s, respectively. The information flow between agents is incorporated in a graph $\mathcal{G},$ which consists of a node set $\mathcal{V}=\{v_1,\ldots,v_{n+l}\}$ and an edge set $\mathcal{E}=\{(v_i,v_j)\in\mathcal{V}\times\mathcal{V}|\,v_i\sim v_j\},$ with nodes representing agents and edges indicating the interconnections between them. $\mathcal{L}=D-A$ is the Laplacian, where $A$ is the adjacency matrix of $\mathcal{G}$ and $D$ is the diagonal matrix with diagonal entries $d_i=|\mathcal{N}_i|,$ the cardinality of $\mathcal{N}_i.$ Under (\ref{ui}), the followers'
dynamics is
\begin{equation}\label{ueqfolw}
\dot x=-\mathcal{F}x-\mathcal{R}z,
\end{equation}
where $\mathcal{F}$ is obtained from $\mathcal{L}$ after deleting the last $l$ rows
and $l$ columns; $\mathcal{R}$ consists of the first $n$ elements of the deleted columns. Since (\ref{ueqfolw}) captures the followers' dynamics, the controllability of a multi-agent system can be realized through (\ref{ueqfolw}).
A path of  $\mathcal{G}$ is a sequence of consecutive edges. $\mathcal{G}$ is connected if there is a path between any distinct nodes. A subgraph of $\mathcal{G}$ is a graph whose vertex set is a subset of $\mathcal{V}$ and whose edge set is a subset of $\mathcal{E}$ restricted to this subset. A subgraph is induced from $\mathcal{G}$ if it is obtained by deleting a subset of nodes and all the edges connecting to those nodes. An induced subgraph, which is maximal and connected, is said to be a connected component. Controllability can be studied under the assumption that $\mathcal{G}$ is connected~\cite{JiWCon}.
Let agents $n+1,\ldots,n+l$ play leaders' role. Define 
\begin{align*}
\mathcal{N}_{kf}\mathop  = \limits^\Delta &  \{ {i}|{v_i} \sim {v_k}, {v_i}~\mbox{is a node of follower subgraph}~ \mathcal{G}_f\},\\
\mathcal{N}_{kl}\mathop  = \limits^\Delta &  \{ {j}|{v_j} \sim {v_k},{v_j}~\mbox{is a node of leader subgraph}~ \mathcal{G}_l\}.
\end{align*}
Then $
{\mathcal{N}_k} = {\mathcal{N}_{kf}} \cup {\mathcal{N}_{kl}},$
$
\mathcal{N}_{kf} \cap \mathcal{N}_{kl}=\Phi,
$
where $\Phi$ is the empty set.  Here to focus on subsequent problem: \emph{identify a number of nodes so that the topology associated with them destroys the controllability of the whole graph.}

\begin{proposition}\label{singPro}
The multi-agent system with single-integrator dynamics (\ref{singmul}) is controllable if and only if there does not exist some $\beta$ such that any of the following statements \emph{i)  ii)  iii)  iv)} is satisfied: 
\begin{itemize}
\item[\emph{i)}] $\beta$ is an eigenvalue of $\mathcal{F}$ associated with eigenvector $y=$ $[y_1,\ldots,y_n]^T$ and $y$ is orthogonal to all columns of $\mathcal{R};$

\item[\emph{ii)}]  $\overline{y}=[y_1,\ldots,y_n,0, \ldots ,0]^T$ is an eigenvector of the Laplacian $\mathcal{L}$ associated with the eigenvalue at $\beta;$

\item[\emph{iii)}] $\mathcal{F}$ and $\mathcal{L}$ share a common eigenvalue at $\beta;$

\item[\emph{iv)}] the following equations hold. 
\begin{align}
{d_k}{y_k} - \sum\limits_{i \in {\mathcal{N}_{kf}}} {{y_i}}  =& \beta {y_k}, \,\, k=1,\ldots,n.\label{firsfoll}\\
\sum\limits_{i \in {\mathcal{N}_{kf}}} {{y_i}}  =& 0,\,\, k=n+1,\ldots,n+l.\label{seclead}
\end{align}
\end{itemize}
\end{proposition}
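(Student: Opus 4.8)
\emph{Overview.} The statement gathers four algebraic conditions, each of which I would argue characterizes uncontrollability; controllability then means that none of them holds for any $\beta$. My plan is to prove the chain \emph{i)} $\Leftrightarrow$ \emph{ii)} $\Leftrightarrow$ \emph{iv)} by elementary linear algebra and to treat \emph{iii)} separately, since only one of its two directions is immediate. First I would note that the input enters (\ref{ueqfolw}) through $-\mathcal{R}$, so that controllability of $\dot x=-\mathcal{F}x-\mathcal{R}z$ coincides with controllability of the pair $(\mathcal{F},\mathcal{R})$. Because $\mathcal{F}$ is a principal submatrix of the symmetric Laplacian, it is symmetric, so its left and right eigenvectors coincide and the Popov--Belevitch--Hautus test gives: the pair is uncontrollable iff there exist $\beta$ and $y\neq0$ with $\mathcal{F}y=\beta y$ and $\mathcal{R}^Ty=0$. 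This is exactly \emph{i)}.

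For \emph{i)} $\Leftrightarrow$ \emph{ii)} I would use the block form
\begin{equation*}
\mathcal{L}=\begin{pmatrix}\mathcal{F}&\mathcal{R}\\ \mathcal{R}^T&\mathcal{L}_{22}\end{pmatrix},
\end{equation*}
which follows directly from the definitions of $\mathcal{F}$ and $\mathcal{R}$. Substituting $\overline{y}=[y_1,\ldots,y_n,0,\ldots,0]^T$ into $\mathcal{L}\overline{y}=\beta\overline{y}$ and reading off the two blocks yields $\mathcal{F}y=\beta y$ from the first $n$ rows and $\mathcal{R}^Ty=0$ from the last $l$ rows, i.e.\ precisely the two requirements of \emph{i)}; hence \emph{i)} and \emph{ii)} describe the same pair $(\beta,y)$. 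The equivalence \emph{ii)} $\Leftrightarrow$ \emph{iv)} is then the component-wise transcription of these block identities: since $(\mathcal{F}y)_k=d_ky_k-\sum_{i\in\mathcal{N}_{kf}}y_i$, the $k$th row of $\mathcal{F}y=\beta y$ for $k\le n$ is (\ref{firsfoll}), while the row indexed by a leader $k\in\{n+1,\ldots,n+l\}$ expresses $\mathcal{R}^Ty=0$ as $\sum_{i\in\mathcal{N}_{kf}}y_i=0$, which is (\ref{seclead}).

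It remains to tie in \emph{iii)}. One direction is free: if the system is uncontrollable, then by \emph{ii)} the vector $\overline{y}$ is an eigenvector of $\mathcal{L}$ at $\beta$ while $y$ is an eigenvector of $\mathcal{F}$ at $\beta$, so $\beta\in\sigma(\mathcal{F})\cap\sigma(\mathcal{L})$. The converse---that a shared eigenvalue forces uncontrollability---is where the real work lies, and this is the step I expect to be the main obstacle. The plan is to exploit the Schur-complement factorization
\begin{equation*}
\det(sI-\mathcal{L})=\det(sI-\mathcal{F})\,\det\!\big((sI-\mathcal{L}_{22})-\mathcal{R}^T(sI-\mathcal{F})^{-1}\mathcal{R}\big),\qquad s\notin\sigma(\mathcal{F}),
\end{equation*}
together with a pole--zero cancellation argument. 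Writing $\Pi_\beta$ for the orthogonal projector onto the eigenspace $E_\beta(\mathcal{F})$, the rational matrix $\mathcal{R}^T(sI-\mathcal{F})^{-1}\mathcal{R}$ has residue $\mathcal{R}^T\Pi_\beta\mathcal{R}=(\Pi_\beta\mathcal{R})^T(\Pi_\beta\mathcal{R})$ at $s=\beta$; controllability at $\beta$ is equivalent to $\mathcal{R}^T$ being injective on $E_\beta(\mathcal{F})$, i.e.\ to this residue having rank equal to the multiplicity $m$ of $\beta$ in $\mathcal{F}$.

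\emph{Main obstacle.} The crux is to show that such a full-rank (and, by symmetry of $\mathcal{F}$, positive semidefinite) residue makes the pole of the second determinant at $\beta$ have order exactly $m$, so that it cancels the order-$m$ zero of $\det(sI-\mathcal{F})$ and leaves $\det(sI-\mathcal{L})$ nonzero at $\beta$; contrapositively, $\beta\in\sigma(\mathcal{L})$ would then force uncontrollability. Ruling out any further degeneracy that could lower this pole order is exactly the delicate point, and it is where the positive semidefiniteness of the residue must be used to prevent cancellation; alternatively, one may invoke the common-eigenvalue criterion of \cite{Rahmani2009} directly to close this direction.
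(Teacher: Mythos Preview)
Your treatment of \emph{i)}$\Leftrightarrow$\emph{ii)}$\Leftrightarrow$\emph{iv)} is correct and is essentially what the paper does: it too reads off $\mathcal{F}y=\beta y$ and $\mathcal{R}^{T}y=0$ from the block form of $\mathcal{L}\bar y=\beta\bar y$, and then transcribes these row by row into (\ref{firsfoll}) and (\ref{seclead}). For \emph{iii)} the paper does not argue from scratch either; it cites \cite{JiWCon}, \cite{JiMengACC2007} for the equivalence of uncontrollability with \emph{ii)} and \emph{iii)}, and invokes Theorem~9.5.1 of \cite{Godsil2001} for \emph{ii)}$\Leftrightarrow$\emph{iii)}. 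So your fallback of citing \cite{Rahmani2009} is exactly in the spirit of the paper's own proof.

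Where your proposal diverges is the primary plan for \emph{iii)}$\Rightarrow$\emph{i)} via the Schur complement. You are right to flag the pole-order step as the obstacle, but the fix you suggest (positive semidefiniteness of the residue) is not enough: for general symmetric matrices the implication is simply false. Take
\[
\mathcal{L}=\begin{pmatrix}0&0&1&0\\0&1&0&1\\1&0&0&0\\0&1&0&1\end{pmatrix},\qquad
\mathcal{F}=\begin{pmatrix}0&0\\0&1\end{pmatrix},\qquad
\mathcal{R}=I_{2}.
\]
Here $(\mathcal{F},\mathcal{R})$ is controllable and the residue at $\beta=0$ is $e_{1}e_{1}^{T}$ (rank $1=m$, positive semidefinite), yet $0\in\sigma(\mathcal{L})$. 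Concretely,
\[
(sI-\mathcal{L}_{22})-\mathcal{R}^{T}(sI-\mathcal{F})^{-1}\mathcal{R}
=\mathrm{diag}\Bigl(\tfrac{s^{2}-1}{s},\ \tfrac{s(s-2)}{s-1}\Bigr),
\]
whose determinant $(s+1)(s-2)$ has no pole at $0$ at all: the zero of the second diagonal entry kills the pole of the first. So the ``further degeneracy'' you worried about genuinely occurs, and ruling it out requires Laplacian-specific input, not just symmetry. The clean closure is therefore your stated alternative---invoke the common-eigenvalue criterion of \cite{Rahmani2009} (equivalently \cite{JiMengACC2007})---which is precisely the route the paper takes.
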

\begin{proof}
ii) and iii) were proved respectively in \cite{JiWCon} and \cite{JiMengACC2007}. The remaining is to show that the four statements are equivalent. i)$\Leftrightarrow$ii) and ii)$\Leftrightarrow$iii) follow from $\mathcal{L}\bar y =\beta \bar y$ and Theorem 9.5.1 of \cite{Godsil2001}. Next we show ii)$\Leftrightarrow$iv). $\mathcal{L}\overline{y}=\beta \overline{y}$ yields 
$
\mathcal{F}y = \beta y, {\mathcal{R}^T}y = 0,
$
which respectively leads to (\ref{firsfoll}) and (\ref{seclead}).  
On the contrary, since $y_i=0$ for $i=n+1$, $\ldots,$ $n+l;$
$
\sum_{i \in {\mathcal{N}_{kl}}} {{y_i}}  = 0.
$
Then, by  (\ref{firsfoll}), for  $k=1,\ldots,n,$ 
$
{d_k}{y_k} - \sum_{i \in {\mathcal{N}_k}} {{y_i}}  = {d_k}{y_k} - \sum_{i \in {\mathcal{N}_{kf}}} {{y_i}}  - \sum_{i \in {\mathcal{N}_{kl}}} {{y_i}} 
 = \beta {y_k}.
$
For $k=n+1$ to $n+l,$ since $y_k=0$ and $
\sum_{i \in {\mathcal{N}_{kl}}} {{y_i}}  = 0,$ by (\ref{seclead}),
$
{d_k}{y_k} - \sum\limits_{i \in {\mathcal{N}_k}} {{y_i}} = \beta {y_k}
$
also holds. Thus the eigen-condition is met for each $k,$ i.e., $\mathcal{L}\overline{y}=\beta \overline{y}.$ 
\end{proof}

\section{Controllability destructive nodes} 


\subsection{Double destructive nodes}

\begin{definition}\label{DCDdef}
$v_p$ and $v_q$ are said to be double controllability destructive (DCD) nodes if for any node $v_k$ other than $v_p$ and $v_q,$ $k \in \{ 1, \cdots, n+l\},$ 
$\mathcal{N}_{k}$  contains either both indices $p$ and $q$ or neither of them. 
\end{definition}

\begin{lemma} \label{doulem}
Let $\mathcal{G}$ be a communication graph with leader nodes selected from $\mathcal{V}\setminus\{v_p,v_q\}.$ Then $\bar y = [0,\cdots,0,$ $y_p,0, \cdots 0, y_q,0, \cdots ,0]^T$ with ${y_p},  {y_q} \neq 0$ is an eigenvector of $\mathcal{L}$ if and only if for any $k \neq p, q;$ $k \in \{ 1, \cdots, n+l \};$  $\mathcal{N}_{kf}$ contains either both $p$ and $q$ or neither of them. Moreover, if $p \in \mathcal{N}_{pf},$ ${y_p} =  - {y_q}$ and $d_p=d_q,$ and the corresponding eigenvalue $\lambda  = {d_p} + 1;$ otherwise, $\lambda=d_q.$
\end{lemma}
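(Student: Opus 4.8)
The plan is to expand $\mathcal{L}\bar y=\lambda\bar y$ one row at a time and exploit that $\bar y$ is supported only on coordinates $p$ and $q$. Writing the $k$-th row of $\mathcal{L}$ as $d_k y_k-\sum_{i\in\mathcal{N}_k}y_i$ and substituting $\bar y$, every row collapses to
\[
d_k y_k-\big(y_p\,[\,p\in\mathcal{N}_k\,]+y_q\,[\,q\in\mathcal{N}_k\,]\big)=\lambda y_k,
\]
where $[\,\cdot\,]$ equals $1$ when the index lies in $\mathcal{N}_k$ and $0$ otherwise. I would then treat the rows $k\neq p,q$ (which carry the topological content) separately from the two rows $k=p,q$ (which fix the eigenvalue and the relation between $y_p$ and $y_q$). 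Since $v_p,v_q$ are followers, $p\in\mathcal{N}_k\Leftrightarrow p\in\mathcal{N}_{kf}$ and likewise for $q$, which is what lets the condition be stated through $\mathcal{N}_{kf}$.

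For a row $k\neq p,q$ we have $y_k=0$, so the equation reduces to $y_p\,[\,p\in\mathcal{N}_k\,]+y_q\,[\,q\in\mathcal{N}_k\,]=0$. Because $y_p,y_q\neq0$, this can hold only if $\mathcal{N}_k$ contains neither $p$ nor $q$, or contains both (the latter forcing $y_p=-y_q$); a single one of the two would leave a nonzero residual. This delivers both directions of the equivalence at once: the eigenvector forces the ``both-or-neither'' pattern on every such $\mathcal{N}_{kf}$, and conversely that pattern makes each such row vanish. This is the routine part of the argument.

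For the two rows $k=p,q$ --- the main obstacle --- I would first note that the ``both-or-neither'' pattern forces $p$ and $q$ to share every neighbour outside $\{v_p,v_q\}$, so that $d_p=d_q$ once the single possible edge between them is accounted for. When $p$ and $q$ are non-adjacent, rows $p$ and $q$ read $d_p y_p=\lambda y_p$ and $d_q y_q=\lambda y_q$, giving at once $\lambda=d_q\,(=d_p)$. When they are adjacent, any third node adjacent to $p$ is, by the pattern, also adjacent to $q$ and hence a common neighbour; in a connected graph on more than two nodes such a node exists, and the corresponding row already analysed forces $y_p=-y_q$. Substituting into row $p$, $d_p y_p-y_q=(d_p+1)y_p=\lambda y_p$, so $\lambda=d_p+1$, and row $q$ returns the same value through $d_p=d_q$. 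The points I expect to spend the most care on are justifying $d_p=d_q$ from the neighbourhood condition alone, reconciling the two coupled equations at $p$ and $q$ so that they are consistent precisely when $d_p=d_q$, and observing that the only configuration escaping $\lambda=d_p+1$ in the adjacent case is the isolated edge $\{v_p,v_q\}$ with no common neighbour, where the alternative root $y_p=y_q$, $\lambda=d_p-1$ also appears; this degenerate case is ruled out by connectivity.
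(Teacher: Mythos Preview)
Your approach is correct and essentially identical to the paper's: both expand $\mathcal{L}\bar y=\lambda\bar y$ row by row, use the rows $k\neq p,q$ to obtain the ``both-or-neither'' condition (and $y_p=-y_q$ once a common neighbour exists, guaranteed by connectedness), and use the rows $k=p,q$ to pin down $\lambda$ according to whether $v_p,v_q$ are adjacent. The only cosmetic difference is that you read off $d_p=d_q$ directly from the shared-neighbour pattern, whereas the paper obtains it by equating the two values of $\lambda$ coming from rows $p$ and $q$; your remark on the degenerate isolated-edge case is a nice explicit note that the paper leaves implicit in its appeal to connectedness.
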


\begin{proof} 
The special form of $\bar y$ and the selection of leaders lead to $\sum_{i \in {\mathcal{N}_{kl}}} {{y_i}}  = 0.$

(\emph{Necessity})
$\mathcal{L}\bar y=\lambda \bar y$ means 
\begin{equation}
{d_k}{y_k} - \sum\limits_{i \in {\mathcal{N}_k}} {{y_i}}  = \lambda {y_k}, \,\,\,\,k=1,\ldots,n+l.\label{eigcond}
\end{equation}

For $k\neq p, q,$ since $y_k=0,$ it follows that 
\begin{equation}
{d_k}{y_k} - \sum\limits_{i \in {\mathcal{N}_k}} {{y_i}}  = \sum\limits_{i \in {\mathcal{N}_{kf}}} {{y_i}} \label{simcondt}
\end{equation} 
Combining (\ref{eigcond}) with (\ref{simcondt}) yields that for any $\lambda$  
\begin{equation}
\sum\limits_{i \in {\mathcal{N}_{kf}}} {{y_i}}  = 0.\label{reequaio}
\end{equation}
$\mathcal{N}_{kf}(k\neq p, q)$ has three cases: i) $p,q\in\mathcal{N}_{kf}.$ In this case, the special form of $\bar y$ implies $\sum_{i \in {\mathcal{N}_{kf}}} {{y_i}}  = {y_p} + {y_q} .$ By (\ref{reequaio}),  $y_p=-y_q.$ ii) only $p(\mbox{or}\, q)\in\mathcal{N}_{kf}$. Then $\sum_{i \in {\mathcal{N}_{kf}}} {{y_i}}  = {y_p}(\mbox{or}\, y_q) \neq 0$. This case cannot occur since (\ref{reequaio}) is not met. iii) $p,q\notin\mathcal{N}_{kf}.$ In this case, $\sum_{i \in {\mathcal{N}_{kf}}} {{y_i}}  = 0.$ Thus there exists at least one $k\neq p, q$ with $p,q\in\mathcal{N}_{kf}.$ Otherwise, for any $k\neq p, q,$ the above discussion means $p,q\notin\mathcal{N}_{kf}.$ That is, $v_p, v_q$ are isolated from all the other nodes, which contradicts with the connectedness of $\mathcal{G}.$ So, if $\bar y$ is an eigenvector of $\mathcal{L},$ then for any $k\neq p, q,$ either $p,q\in\mathcal{N}_{kf};$ or $p,q\notin\mathcal{N}_{kf}.$ If $p,q\in\mathcal{N}_{kf}$ occurs, $y_p=-y_q.$ 

For $k= p, q,$ (\ref{eigcond}) and $\sum_{i \in {\mathcal{N}_{kl}}} {{y_i}}  = 0$ yield that
\begin{equation}
({d_k} - \lambda ) \cdot {y_k} = \sum\limits_{i \in {\mathcal{N}_{kf}}} {{y_i}},\,\,\,\, k= p, q.\label{sencase}
\end{equation}
If $p\in\mathcal{N}_{qf},$ then $\sum_{i \in {\mathcal{N}_{qf}}} {{y_i}}  = {y_p}.$ By (\ref{sencase}), $({d_q} - \lambda ) {y_q} =y_p=-y_q.$ So $y_q\neq 0$ results in
$
\lambda  = {d_q} + 1.
$
Since $\mathcal{G}$ is undirected, $p\in\mathcal{N}_{qf}$ is equivalent to $q\in\mathcal{N}_{pf}.$ The same arguments show 
$
\lambda  = {d_p} + 1.
$
As a consequence, $d_p=d_q.$ 
If $p\notin\mathcal{N}_{qf},$ $\sum_{i \in {\mathcal{N}_{qf}}} {{y_i}}  = 0$ follows from  the special form of $\bar y.$ Thus 
$
{d_q}{y_q} - \sum_{i \in {\mathcal{N}_q}} {{y_i}}  = {d_q}{y_q}.
$
By (\ref{eigcond}),  
$
{d_q}{y_q} = \lambda {y_q}.
$
Since $y_q\neq 0,$ 
$
\lambda=d_q.
$
Similar arguments to $q\notin\mathcal{N}_{pf}$ yields $\lambda=d_p.$ The necessity proof is completed.

(\emph{Sufficiency}) For $p\notin\mathcal{N}_{qf},$ if $p,q\in\mathcal{N}_{kf}(k\neq p, q),$ then 
\begin{align}
{d_k}{y_k} - \sum\limits_{i \in {\mathcal{N}_k}} {{y_i}}  =& {d_k}\cdot 0 - \sum\limits_{i \in {\mathcal{N}_{kf}}} {{y_i}}  - \sum\limits_{i \in {\mathcal{N}_{kl}}} {{y_i}}\nonumber \\
 =& -(y_p+y_q),\,\,\,\,\,\, k\neq p, q. \label{leafir}
\end{align}
$y_p=-y_q$ is required to satisfy the eigen-condition in (\ref{eigcond}) for the eigenvalue at $\lambda=d_p.$ 
Since $p,q\in\mathcal{N}_{kf}$ occurs at least for one $k\neq p,q$ (otherwise $\mathcal{G}$ is not connected), $y_p=-y_q$ is a prerequisite for $\bar y$ to be an eigenvector of $\mathcal{L}.$ 
If $p,q\notin\mathcal{N}_{kf} (k\neq p, q),$ then $\sum_{i \in {\mathcal{N}_{kf}}} y_i  =\sum_{i \in {\mathcal{N}_{kl}}} y_i  =0.$ The eigen-condition also holds for any number $\lambda.$
When $k=p,q,$ since the valency of $v_p$ and $v_q$ is equal, $d_p=d_q.$ It  follows from $p\notin\mathcal{N}_{qf}, q\notin\mathcal{N}_{pf}$ that $\sum_{i \in {\mathcal{N}_{kl}}} {{y_i}}=\sum_{i \in {\mathcal{N}_{kf}}} {{y_i}}=0(k=p,q).$ Then
$
{d_k}{y_k} - \sum_{i \in {\mathcal{N}_{k}}} {{y_i}}  = d_k y_k-0=\lambda y_k; k=p,q,
$
where $\lambda=d_p=d_q.$ Hence, with given leaders, the eigen-condition 
is met for each $k=1,\ldots, n+l.$ 
Thus $\overline{y}$ is an eigenvector of $\mathcal{L}$ with the eigenvalue at $\lambda=d_p.$ 

For $p\in\mathcal{N}_{qf}.$ $
\sum_{i \in {\mathcal{N}_{pl}}} {{y_i}}=0, \sum_{i \in {\mathcal{N}_{pf}}} {{y_i}}=y_q.
$
Therefore
$
{d_p}{y_p} - \sum_{i \in {\mathcal{N}_p}} {{y_i}}  
 = (\lambda+1) {y_p},
$
where $\lambda=d_p=d_q.$
Similarly, 
$
{d_q}{y_q} - \sum_{i \in {\mathcal{N}_q}} {{y_i}}=(\lambda+1) {y_q}.
$
The remaining proof is in the same vein as that of $p\notin\mathcal{N}_{qf}$ with the eigenvalue $\lambda$ replaced by $\lambda+1.$ 
\end{proof}

\begin{theorem}\label{doubDcd}
There exist a group of leaders selected from $\Gamma _{p,q}$ such that the multi-agent system with single-integrator dynamics (\ref{singmul}) is controllable if and only if the follower node set does not contain DCD nodes $v_p$ and $v_q,$ where $p\neq q;$ ${\Gamma _{p,q}}\mathop  = \limits^\Delta  \{ 1, \ldots,$ $n + l\}\setminus \{ p,q\}.$
\end{theorem}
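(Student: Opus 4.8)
The plan is to recast controllability in the eigenvector language of Proposition~\ref{singPro} and then treat the two implications separately, proving the ``only if'' part by contraposition and the ``if'' part by exhibiting an explicit leader set. A preliminary observation drives both halves: any leader set $L\subseteq\Gamma_{p,q}$ keeps $v_p$ and $v_q$ among the followers, so for every $k\neq p,q$ one has $p\in\mathcal{N}_k\Leftrightarrow p\in\mathcal{N}_{kf}$ and $q\in\mathcal{N}_k\Leftrightarrow q\in\mathcal{N}_{kf}$. Hence the DCD condition of Definition~\ref{DCDdef} (phrased via $\mathcal{N}_k$) is interchangeable with the neighbourhood condition of Lemma~\ref{doulem} (phrased via $\mathcal{N}_{kf}$), which is what lets me move between the two statements.

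For the ``only if'' direction I would argue the contrapositive: assuming $v_p,v_q$ are DCD nodes, show the system is uncontrollable for \emph{every} $L\subseteq\Gamma_{p,q}$. Fix such an $L$. Since $v_p,v_q$ are followers, the transfer above lets me invoke the sufficiency part of Lemma~\ref{doulem}, producing an eigenvector $\bar y$ of $\mathcal{L}$ supported exactly on $\{p,q\}$. As $L$ avoids $p$ and $q$, this $\bar y$ vanishes at every leader coordinate and is therefore of the form demanded by statement ii) of Proposition~\ref{singPro}, so the system is uncontrollable. Because $L$ was arbitrary, no selection from $\Gamma_{p,q}$ yields controllability, giving the contrapositive.

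For the ``if'' direction, suppose $v_p,v_q$ are not DCD, so some node $v_k$ with $k\neq p,q$ is adjacent to exactly one of them; say $v_k\sim v_p$ and $v_k\not\sim v_q$. I would make the extreme choice $L=\Gamma_{p,q}$, leaving $v_p,v_q$ as the only two followers, and test the reduced pair $(\mathcal F,\mathcal R)$ through statement i) of Proposition~\ref{singPro}: no eigenvector of the $2\times2$ matrix $\mathcal F=\left[\begin{smallmatrix} d_p & -e\\ -e & d_q\end{smallmatrix}\right]$ (with $e=1$ iff $v_p\sim v_q$) is orthogonal to all columns of $\mathcal R$. The leader $v_k$ contributes the column $[-1,0]^T$, forcing any offending eigenvector $y=[y_p,y_q]^T$ to satisfy $y_p=0$; but $[0,y_q]^T$ with $y_q\neq0$ is an eigenvector of $\mathcal F$ only when $e=0$, i.e.\ $v_p\not\sim v_q$, and then every neighbour of $v_q$ is a leader while connectedness of $\mathcal G$ guarantees $v_q$ has one. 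That neighbour yields a column with nonzero second entry to which $[0,y_q]^T$ cannot be orthogonal, a contradiction; the symmetric argument covers the case where $v_k$ meets $v_q$ instead. Hence no offending eigenvector exists and the system is controllable for $L=\Gamma_{p,q}$.

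The main obstacle is this ``if'' direction: one must notice that a single distinguishing node already certifies controllability for the two-follower configuration, and then dispose carefully of the degenerate eigenstructure of $\mathcal F$. The delicate case is $d_p=d_q$ with $e=0$, where $\mathcal F$ is a scalar multiple of the identity and every vector is an eigenvector, so controllability reduces to showing $\mathcal R$ has full row rank; here the column $[-1,0]^T$ from $v_k$ together with a leader neighbour of $v_q$ (again from connectedness) supplies two independent columns. The remaining care lies in the $\mathcal{N}_k$--$\mathcal{N}_{kf}$ bookkeeping and in recognising that the Lemma~\ref{doulem} obstruction is invisible to any leader set avoiding $p$ and $q$.
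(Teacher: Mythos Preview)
Your argument is correct. The ``only if'' half matches the paper exactly: both invoke the sufficiency part of Lemma~\ref{doulem} to produce an eigenvector of $\mathcal{L}$ supported on $\{p,q\}$ and then read off uncontrollability from Proposition~\ref{singPro}.

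The ``if'' half differs in execution, though not in the underlying idea. Both you and the paper take the extreme leader set $L=\Gamma_{p,q}$. The paper then argues by contradiction through Proposition~\ref{singPro}\,ii): uncontrollability would force an eigenvector $\bar y$ of $\mathcal{L}$ supported on $\{p,q\}$; a connectedness argument (essentially identical to yours) rules out $y_p=0$ or $y_q=0$; and then the \emph{necessity} half of Lemma~\ref{doulem} is invoked to conclude that $v_p,v_q$ are DCD, contradicting the hypothesis. You instead work through Proposition~\ref{singPro}\,i) with the explicit $2\times 2$ pair $(\mathcal{F},\mathcal{R})$, which lets you bypass Lemma~\ref{doulem} entirely on this side. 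The trade-off: your route is more self-contained and elementary for the two-follower case, while the paper's route is more uniform with the later TCD and QCD proofs (Theorems~\ref{tripDcd} and~\ref{fivenosin}), where the analogous reduced matrices are $3\times3$ and $4\times4$ and the ``use Lemma necessity to recover the destructive structure'' pattern scales more gracefully than a direct eigenstructure analysis would.
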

\begin{proof}
(\emph{Necessity}) Suppose by contradiction that the follower subgraph $\mathcal{G}_f$ contains DCD nodes $v_p, v_q.$ Lemma \ref{doulem} shows that $\mathcal{L}$ has an eigenvector 
$\bar y = [0, \cdots ,0,$ $y_p,0, \cdots 0, y_q,0, \cdots ,0]^T$ with ${y_p} =  - {y_q} \neq 0.$
By Proposition \ref{singPro}, system (\ref{singmul}) is uncontrollable with any leaders selected from $\Gamma _{p,q}.$ This contradicts the assumption. 

(\emph{Sufficiency}) Suppose by contradiction that the system is uncontrollable with any leaders selected from ${\Gamma _{p,q}}.$ Then the system is uncontrollable with all the elements of $\Gamma _{p,q}$ playing leaders' role. By Proposition \ref{singPro}, $\mathcal{L}$ has an eigenvector 
$\bar y = {[0, \cdots ,0,y_p,0, \cdots 0, y_q,0, \cdots ,0]^T}.$ Next we show $y_p, y_q\neq 0.$
Suppose by contradiction $y_p=0,$ then $y_q\neq 0$ because $\bar y$ is an eigenvector. Since the graph is connected, $\lambda=0$ is a simple eigenvalue associated with the all-one eigenvector $\textbf{1}.$ Thus the eigenvalue $\beta$ associated with $\bar y$ is not zero. In addition, there exist at least one $k\neq q$ with $k\in\mathcal{N}_q;$ otherwise, $v_q$ will be isolated from all the other nodes. The special form of $\bar y$ then results in $\sum\nolimits_{i \in {\mathcal{N}_{kl}}} {{y_i}}  = 0,\sum\nolimits_{i \in {\mathcal{N}_{kf}}} {{y_i}}  = {y_q}.$ Since $y_k=0,$ 
$
{d_k}{y_k} - \sum_{i \in {\mathcal{N}_k}} {{y_i}}  
 =-y_q.
$
The eigen-condition in (\ref{eigcond}) is not met for $v_k$ since $y_k=0$ and $y_q\neq 0.$ This contradicts with the fact that $\bar y$ is an eigenvector. Therefore $y_p\neq 0.$ Similar arguments yield $y_q\neq 0.$ Finally, it follows from Lemma \ref{doulem} that $v_p$ and $v_q$ are DCD nodes since $\bar y$ with $y_p, y_q\neq 0$ is an eigenvector of $\mathcal{L}$. This is in contradiction with the assumption. The proof is completed.   
\end{proof}

\subsection{Triple destructive nodes}


\begin{definition}\label{TCDdef}
$v_p,v_q,v_r$ are said to be triple controllability destructive (TCD) nodes if for any $v_k$ other than $v_p,v_q,v_r;$ $\mathcal{N}_{kf}$ contains either all $p,q,r$ or none of them; and for $v_p,v_q,v_r,$ any of the following four cases occurs:
\begin{itemize}
\item for any $k\in\{p,q,r\},$ $\mathcal{N}_{kf}$ contains the other two indices of $p, q, r;$

\item there is a $k\in\{p,q,r\}$(say $k=p$) with $\mathcal{N}_{pf}$ containing $q,r$ and each of $\mathcal{N}_{qf},\mathcal{N}_{rf}$ contains only $p$ in $\{p,q,r\};$

\item there is a $k\in\{p,q,r\}$(say $k=p$) with $\mathcal{N}_{kf}$ containing one and only one of the other two indices of $p, q, r;$ and its single neighbor node of $p, q, r$(say $q$) also has $k$ as its single neighbor node in $\{p, q, r\};$

\item for any $k\in\{p,q,r\},$ $\mathcal{N}_{kf}$ contains none of $p,$ $q$ and $r.$ 
\end{itemize}
\end{definition}
\begin{remark}
Definition \ref{TCDdef} has no limit as to whether $\mathcal{N}_{kf}$ contains an index $l (l\neq p,q,r).$ It identifies four topologies I to IV(see Fig.\ref{tridet1and2}) which correspond to, respectively, the above first to fourth case of $\mathcal{N}_{kf}$ of $v_p, v_q, v_r.$ 
\end{remark}
\begin{figure}[H]
\begin{center}
\subfigure[]{\includegraphics[width=1.62cm]{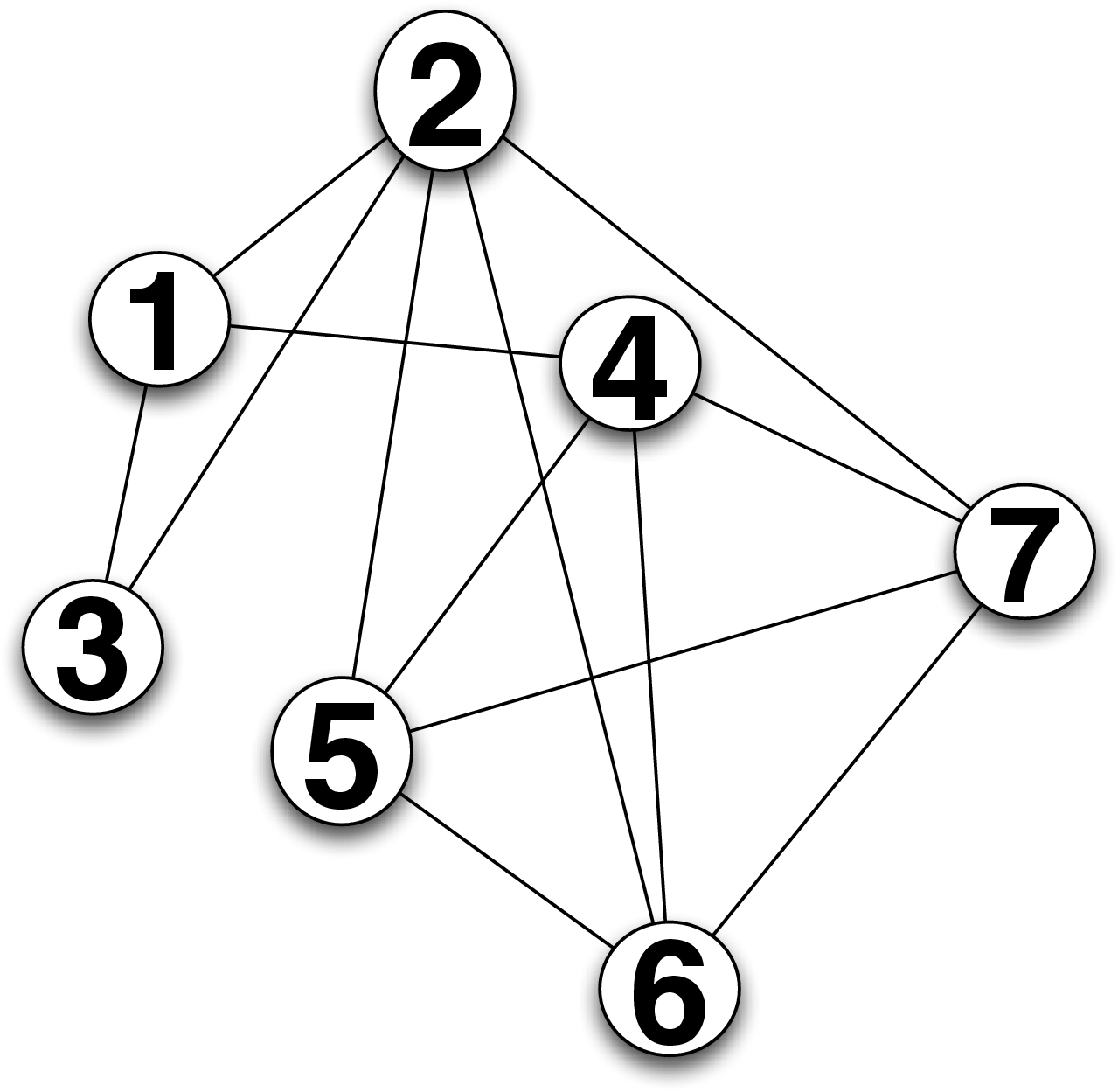}}
\subfigure[]{\includegraphics[width=1.62cm]{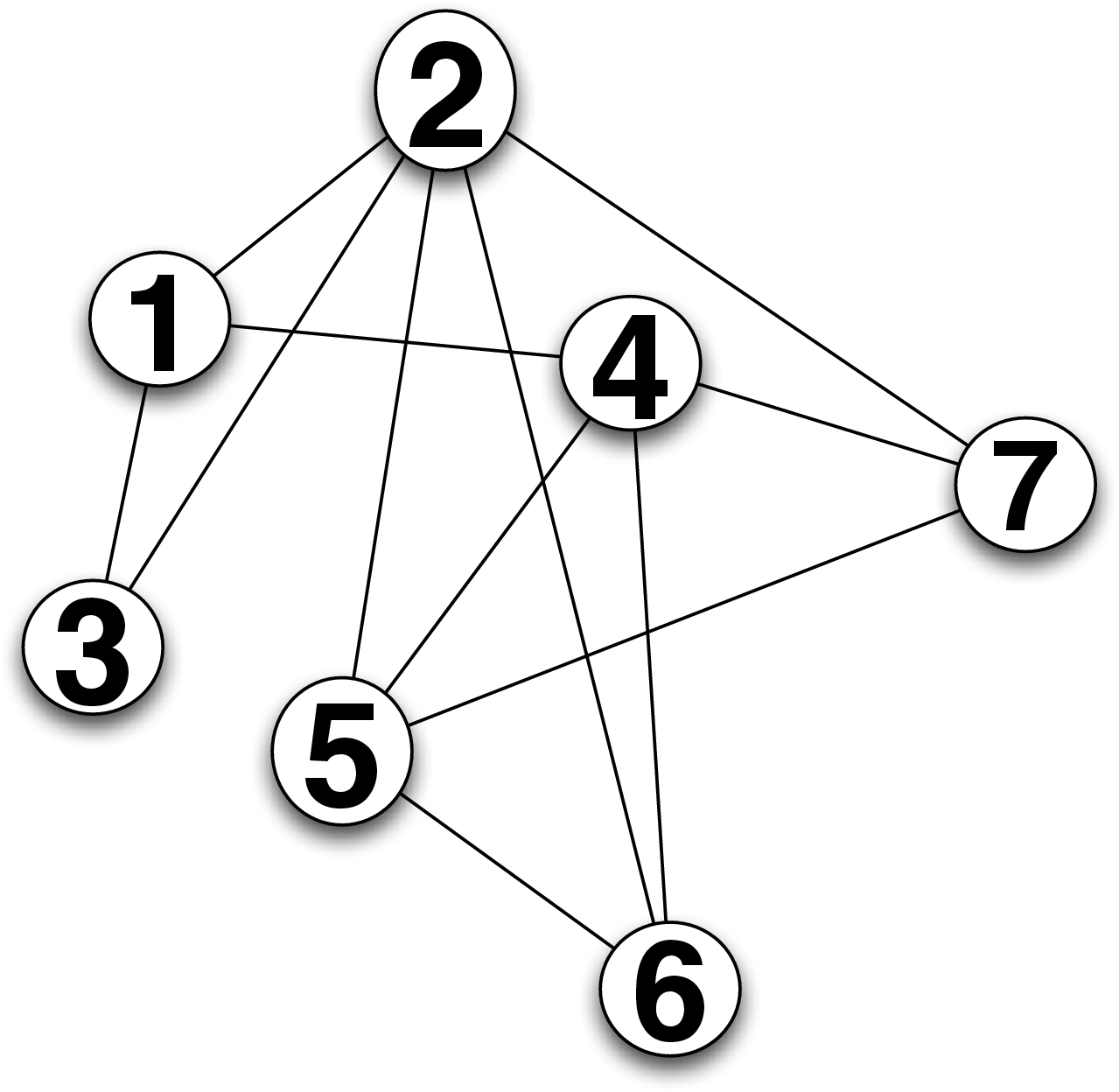}}
\subfigure[]{\includegraphics[width=1.62cm]{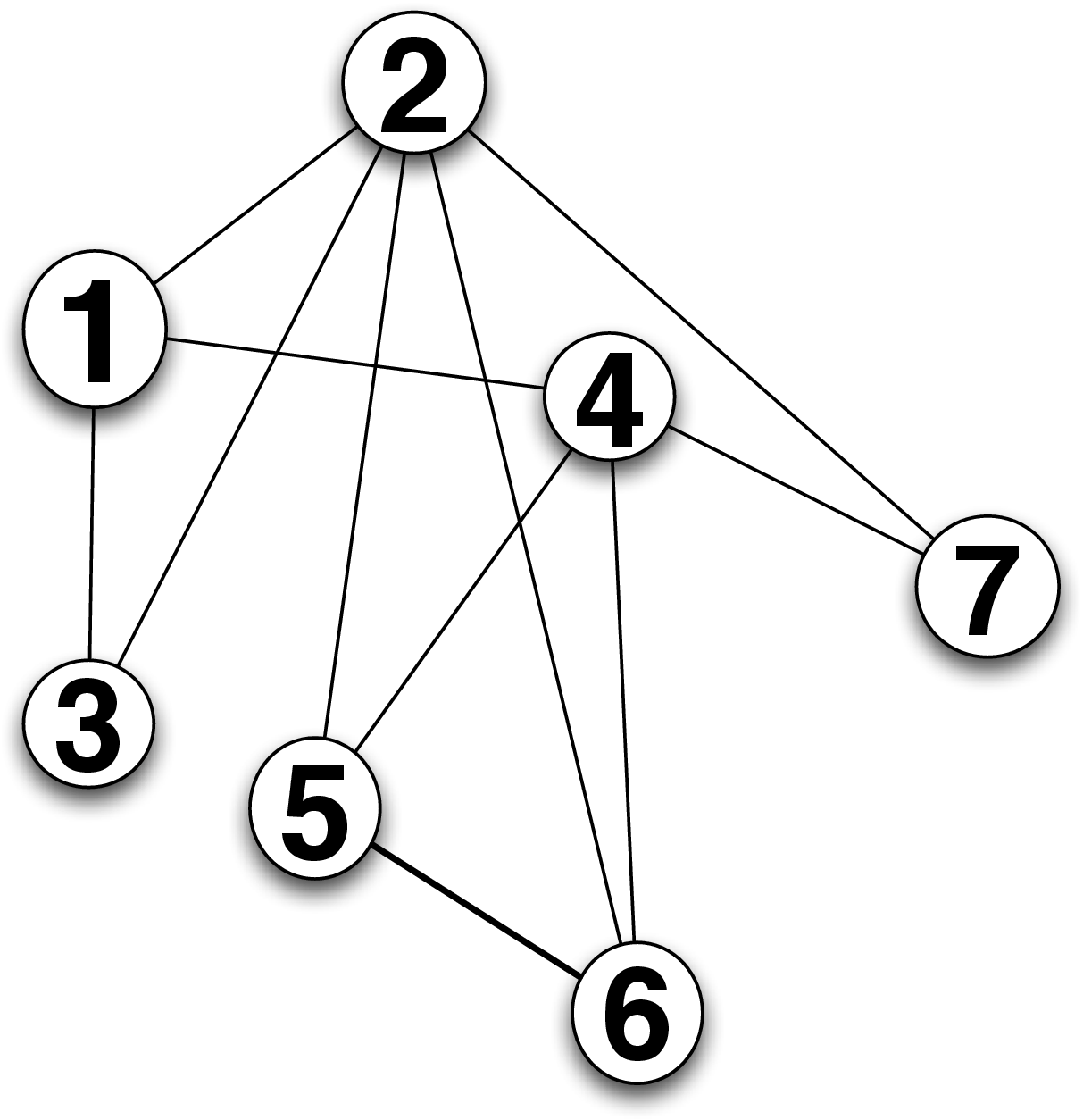}}
\subfigure[]{\includegraphics[width=1.62cm]{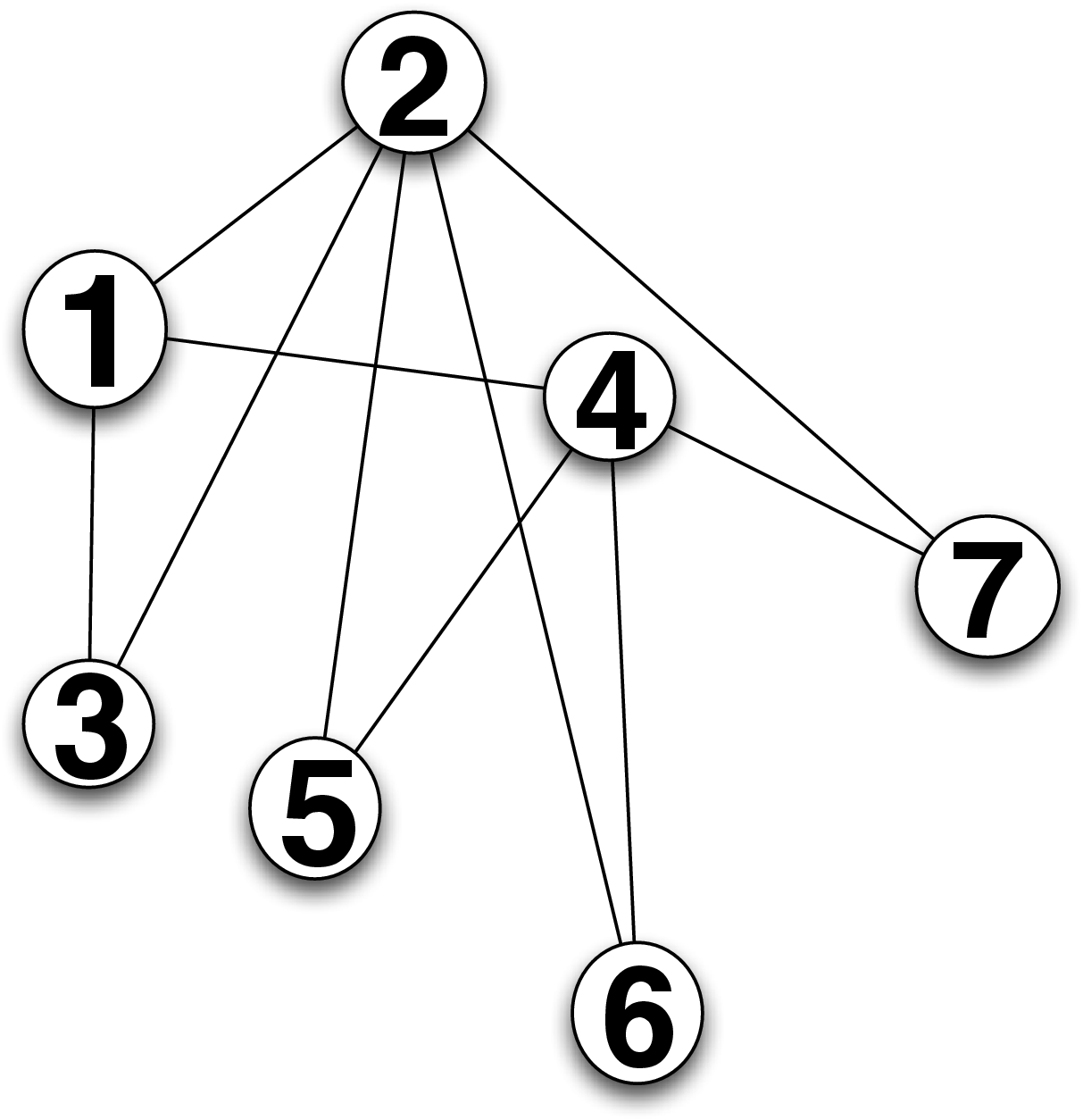}}
\caption{(a),(b),(c),(d) are respectively a topology I,II,III and IV with $v_5, v_6, v_7$ being the TCD nodes.}
\label{tridet1and2}
\end{center}
\end{figure}

\begin{lemma} \label{trilem}
Let $\mathcal{G}$ be a communication graph with leader nodes arbitrarily selected from $\mathcal{V}\setminus\{v_p,v_q,v_r\}.$ Then $\bar y = [0, \ldots ,{y_p},0, \ldots ,{y_q},0, $ $\ldots ,{y_r},$ $0, \ldots ,0]^T$ with $y_p, y_q, y_r\neq 0$ and all the other elements being zero is an eigenvector of $\mathcal{L}$ if and only if $v_p,v_q,v_r$ are TCD nodes. Moreover, $y_p+y_q+y_r=0, y_k\neq 0, k=p, q, r,$ and 
\begin{itemize}
\item for topology I, $d_p=d_q=d_r$ and the corresponding eigenvalue is $d_p+1;$

\item for topology II, $y_q=y_r,$ $d_p=d_q+1=d_r+1$ and the corresponding eigenvalue is $d_p+1;$

\item  for topology III, $y_p=y_q,$ $d_p=d_q=d_r+1$ and the corresponding eigenvalue is $d_r;$

\item  for topology IV, $d_p=d_q=d_r$ and the corresponding eigenvalue is $d_r.$
\end{itemize}
\end{lemma}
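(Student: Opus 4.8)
The plan is to analyze the componentwise eigen-equation $\mathcal{L}\bar y=\lambda\bar y$, i.e. (\ref{eigcond}), by splitting the index set into the three distinguished nodes $\{p,q,r\}$ and all remaining nodes. Exactly as in Lemma \ref{doulem}, the choice of leaders outside $\{p,q,r\}$ together with the special form of $\bar y$ gives $\sum_{i\in\mathcal{N}_{kl}}y_i=0$ for every $k$, so that in (\ref{eigcond}) only the three nonzero coordinates $y_p,y_q,y_r$ ever contribute. For every node $k\notin\{p,q,r\}$ (follower or leader) we have $y_k=0$, and (\ref{eigcond}) collapses to $\sum_{i\in\mathcal{N}_{kf}}y_i=0$; since the only nonzero entries sit at $p,q,r$, this sum equals $\sum_{j\in\{p,q,r\}\cap\mathcal{N}_k}y_j$. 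The whole lemma is driven by two observations about this sum.

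First I would establish the relation $y_p+y_q+y_r=0$ directly from the Laplacian structure, bypassing any case analysis. Because $\mathbf{1}^T\mathcal{L}=0$, left-multiplying $\mathcal{L}\bar y=\lambda\bar y$ by $\mathbf{1}^T$ gives $\lambda(y_p+y_q+y_r)=0$. For a connected $\mathcal{G}$ the eigenvalue $0$ is simple with eigenvector $\mathbf{1}$, and $\bar y$ is not a multiple of $\mathbf{1}$ (it vanishes at every node outside $\{p,q,r\}$, of which there is at least one, the leaders), so $\lambda\neq 0$ and hence $y_p+y_q+y_r=0$. With all three coordinates nonzero this at once forbids any partial adjacency pattern for an external node: a pattern meeting exactly one of $p,q,r$ would force that coordinate to vanish, while a pattern meeting exactly two would force a pair-sum, hence the third coordinate, to vanish. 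Thus every $k\notin\{p,q,r\}$ is adjacent to all of $p,q,r$ or to none, which is precisely the external requirement of Definition \ref{TCDdef}; equivalently $p,q,r$ share one common external neighbor set, say of size $s$, so each degree splits as $d_k=s+\delta_k$ with $\delta_k$ the number of neighbors of $v_k$ inside $\{p,q,r\}$.

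With the external structure fixed, the proof reduces to the three internal equations from (\ref{eigcond}) at $k=p,q,r$, namely $(d_k-\lambda)y_k=\sum_{j\in\{p,q,r\},\,j\sim k}y_j$. The undirected adjacency inside $\{p,q,r\}$ has exactly four isomorphism types by its number of internal edges, so the internal clause of Definition \ref{TCDdef} is automatically met and these types correspond bijectively to topologies I--IV. For each I would substitute $d_k=s+\delta_k$ and $y_p+y_q+y_r=0$: the triangle ($\delta_k\equiv2$) yields $(d_k-\lambda)y_k=-y_k$, so $\lambda=d_p+1$ and $d_p=d_q=d_r$; the star centred at $p$ ($\delta_p=2,\delta_q=\delta_r=1$) yields $\lambda=d_p+1$, $y_q=y_r$ and $d_p=d_q+1=d_r+1$; the single edge $p\sim q$ ($\delta_p=\delta_q=1,\delta_r=0$) forces $\lambda=d_r$ from the internally isolated $r$ and then $y_p=y_q$, $d_p=d_q=d_r+1$; the empty case ($\delta_k\equiv0$) forces $\lambda=d_r$ and $d_p=d_q=d_r$. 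This proves necessity, and read in reverse it proves sufficiency: given a TCD configuration of a prescribed type one sets $y_p+y_q+y_r=0$ in the proportions just derived and checks (\ref{eigcond}) at every node, trivially at the external and leader nodes since their pattern is all-or-none, and by the displayed systems at $p,q,r$.

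The part I expect to be delicate is the organization of necessity rather than the algebra: a priori the equation $\sum_{i\in\mathcal{N}_{kf}}y_i=0$ admits many partial adjacency patterns, and eliminating them by brute force (as in the double-node Lemma \ref{doulem}) would fan out into numerous subcases entangled with connectedness. The simplification I am relying on is the $\mathbf{1}$-orthogonality identity, which supplies $y_p+y_q+y_r=0$ once and for all and excludes every partial pattern in a single stroke. The remaining risk is purely bookkeeping, namely confirming that the four internal edge-types match the four verbal cases of Definition \ref{TCDdef} under the appropriate relabelling of $p,q,r$, and that the split $d_k=s+\delta_k$ removes the sign and integrality ambiguities that would otherwise appear when solving the small internal systems.
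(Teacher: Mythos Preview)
Your proposal is correct and takes a genuinely different route from the paper. The paper never invokes the identity $\mathbf{1}^T\mathcal{L}=0$; instead it derives $y_p+y_q+y_r=0$ only as a \emph{consequence} of some external node being adjacent to all three of $p,q,r$ (its case a)). Since that case need not occur a priori, the paper is forced to keep the partial-adjacency pattern (exactly two of $p,q,r$ in $\mathcal{N}_{kf}$) alive, which generates four additional candidate topologies V--VIII; it then spends the bulk of the necessity proof (its Fact 1) eliminating each of these by hand through degree-count contradictions. Your use of $\mathbf{1}$-orthogonality establishes $y_p+y_q+y_r=0$ unconditionally, which kills the exactly-one and exactly-two external patterns in one line and makes topologies V--VIII impossible before they ever appear. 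The payoff is substantial: you replace several pages of case analysis with a single global observation, and the remaining internal work is just the four small linear systems you describe, all of which check out. The paper's approach, by contrast, is self-contained at the level of individual eigen-equations and never appeals to the column-sum property of $\mathcal{L}$, which is why it has to work so much harder; its only compensating virtue is that the same machinery extends verbatim to the QCD analysis later, where your orthogonality trick gives only one relation among four unknowns and partial patterns can genuinely survive.
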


\begin{proof} 
As in Lemma \ref{doulem}, $\sum_{i \in {\mathcal{N}_{kl}}} {{y_i}}  = 0$ for any $k.$ 

(\emph{Necessity})
The eigen-condition in (\ref{eigcond}) is met for each $k.$ 
\noindent\textbf{Case I.} $k\neq p, q, r.$ In this case, $y_k=0.$ Then  
\begin{equation}
{d_k}{y_k} - \sum\limits_{i \in {\mathcal{N}_k}} {{y_i}} = -\sum\limits_{i \in {\mathcal{N}_{kf}}} {{y_i}} \label{triplesimcond}
\end{equation}
Combining (\ref{eigcond}) with (\ref{triplesimcond}) yields 
\begin{equation}
\sum\limits_{i \in {\mathcal{N}_{kf}}} {{y_i}}  = 0.\label{trireequaio}
\end{equation}
Each $\mathcal{N}_{kf} (k\neq p,q,r)$ falls into one of the four cases. 
\begin{itemize}
\item[a)] $p,q,r\in\mathcal{N}_{kf}.$ Since $\sum_{i \in {\mathcal{N}_{kf}}} {{y_i}}  = {y_p} + {y_q}+y_r,$ by (\ref{trireequaio}) 
\begin{equation}
{y_p} + {y_q}+y_r=0.\label{tripleequ}
\end{equation}

\item[b)] any two and only two of $p,q,r$ belong to $\mathcal{N}_{kf}.$ Suppose $p,q\in\mathcal{N}_{kf},$ then $\sum_{i \in {\mathcal{N}_{kf}}} {{y_i}}  = {y_p}+y_q.$ By (\ref{trireequaio})\begin{equation}
{y_p} + {y_q}=0.\label{doubelequ}
\end{equation}
(\ref{tripleequ}) and (\ref{doubelequ}) cannot be met simultaneously, or else, $y_r=0.$ This contradicts with $y_r\neq 0.$ 
If there is another $k\neq p,q,r$ with $\mathcal{N}_{kf}$ containing $p,r,$ by (\ref{trireequaio}) 
\begin{equation}
{y_p} + {y_r}=0.\label{doubelano}
\end{equation}
From (\ref{doubelequ}) (\ref{doubelano}), $y_p=-y_q=-y_r.$ If (\ref{doubelequ}) (\ref{doubelano}) are met simultaneously, there does not exist the third $k\neq p, q, r$ with $\mathcal{N}_{kf}$ containing $q,r.$ Otherwise, 
\begin{equation}
{y_q} + {y_r}=0.\label{doubthird}
\end{equation}
This however is impossible because ${y_q} + {y_r}=0$ and $y_p=-y_q=-y_r$ lead to $y_q=y_r=0,$ which is incompatible with $y_k\neq 0, k=p,q,r.$ Hence, at most two of (\ref{doubelequ}), (\ref{doubelano}) and (\ref{doubthird}) take place. 

\item[c)] any one and only one of $p,q,r$ belongs to $\mathcal{N}_{kf},$ say $p\in\mathcal{N}_{kf}.$ In this case, $\sum_{i \in {\mathcal{N}_{kf}}} {{y_i}}  = {y_p}.$ To satisfy (\ref{trireequaio}), it requires  
$
{y_p} =0.
$
This is impossible because $y_p\neq 0.$ 

\item[d)] none of $p,q,r$ belongs to $\mathcal{N}_{kf}.$ In this case, the special form of $\bar y$ implies $\sum_{i \in {\mathcal{N}_{kf}}} {{y_i}}  = 0,$ i.e., (\ref{trireequaio}) is met. 
 \end{itemize}
Since (\ref{tripleequ}) (\ref{doubelequ}) cannot be met simultaneously, a) and b) cannot occur at once. That is, there do not exist different $v_{k_1}, v_{k_2}$ in $\mathcal{G}$ with $v_{k_1}$ and $v_{k_2}$ consistent with cases a) and b), respectively. 
Thus, with given $k\neq p, q, r;$ $\mathcal{N}_{kf}$ conforms to one and only one of the following cases: {\bf{i)}} at least one of a), d) occurs; {\bf{ii)}} at least one of b), d) occurs.

%

\noindent\textbf{Case II.} $k=p, q, r.$ Since $\sum_{i \in {\mathcal{N}_{kl}}} {{y_i}}  = 0,$ by (\ref{eigcond})
\begin{equation}
({d_k} - \lambda ) {y_k} = \sum\limits_{i \in {\mathcal{N}_{kf}}} {{y_i}}. \label{trisencase}
\end{equation}
\begin{itemize}
\item[1)] There is at least one
$k\in\{p, q, r\}$ with $\mathcal{N}_{kf}$ containing the other two indices of $p, q, r.$    
{\bf{1a)}} only one $k\in\{p,q,r\}$ is of this kind. 
{\bf{1b)}} there are two $k's\in\{p,q,r\}$ of this kind. (a) (b) of Fig. \ref{pic1a1b} correspond to 1a) and 1b), respectively. {\bf{1c)}} each $k\in\{p, q, r\}$ is of this kind. Note that 1b) and 1c) are equivalent.

\item[2)] There is at least one
$k\in\{p, q, r\}$ with $\mathcal{N}_{kf}$ containing one and only one of the other two indices of $\{p, q, r\}.$ {\bf{2a)}} only one $k\in\{p,q,r\}$(say $k=p$) is of this kind and its single neighbor node in $\{p, q, r\},$ say $q,$ also has $k$ as its single neighbor node in $\{p, q, r\}.$ {\bf{2b)}} there are two $k's\in\{p,q,r\}$ of this kind. 1a) coincides with 2b). That each $k\in\{p,q,r\}$ is of this kind does not occur. 

\item[3)] For each $k=p,q,r;$ $\mathcal{N}_{kf}$ contains none of the other two indices of $p,q,r.$ {\bf{3a)}} only one $k\in\{p,q,r\}$ is of this kind, which coincides with 2a).
{\bf{3b)}} there are two $k's\in\{p,q,r\}$ of this kind (see (d) of Fig. \ref{pic1a1b}).  
\end{itemize}
\begin{figure}[H]
\begin{center}
\subfigure[]{\includegraphics[width=1.7cm]{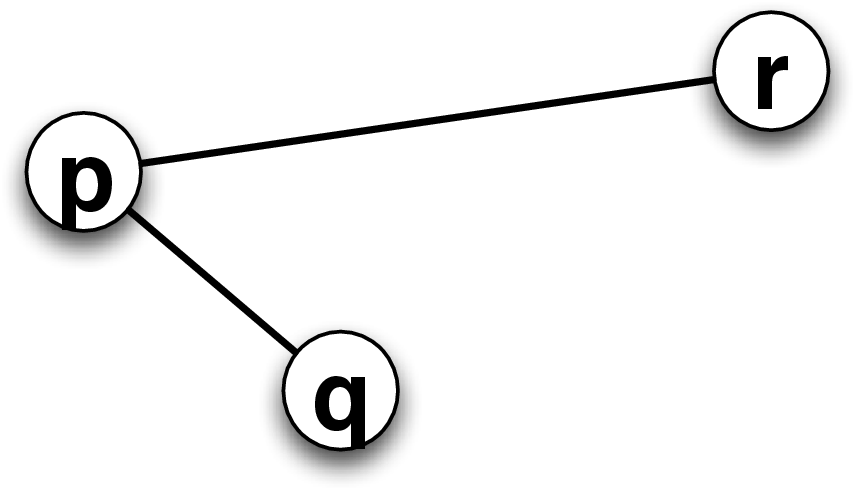}}
\subfigure[]{\includegraphics[width=1.7cm]{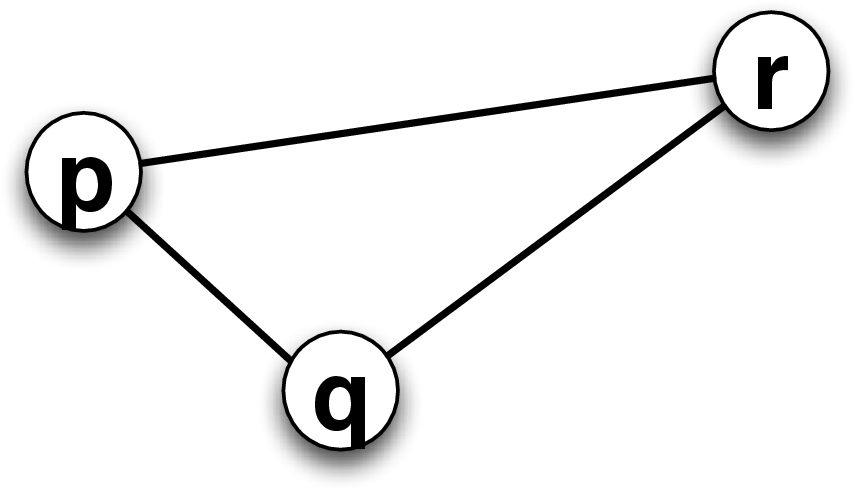}}
\subfigure[]{\includegraphics[width=1.7cm]{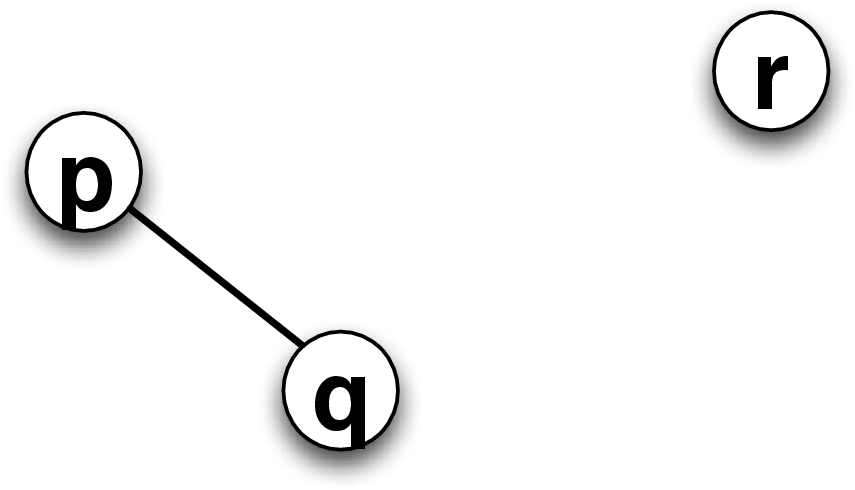}}
\subfigure[]{\includegraphics[width=1.7cm]{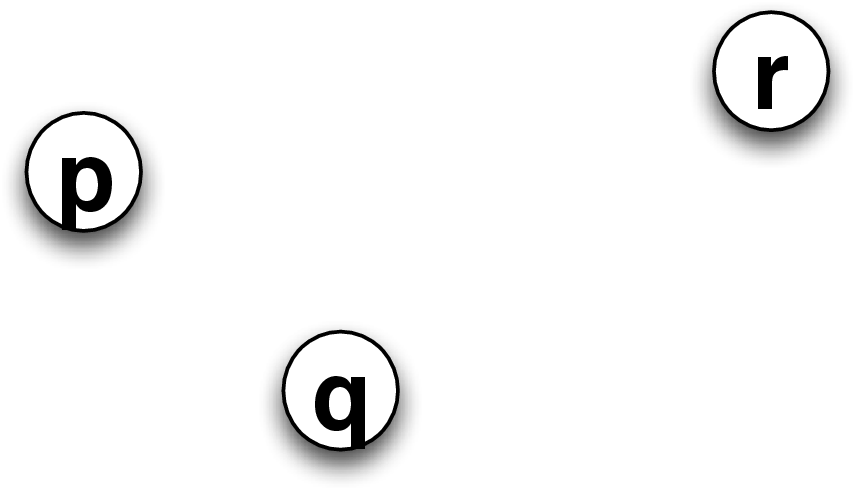}}
\subfigure[]{\includegraphics[width=1.9cm]{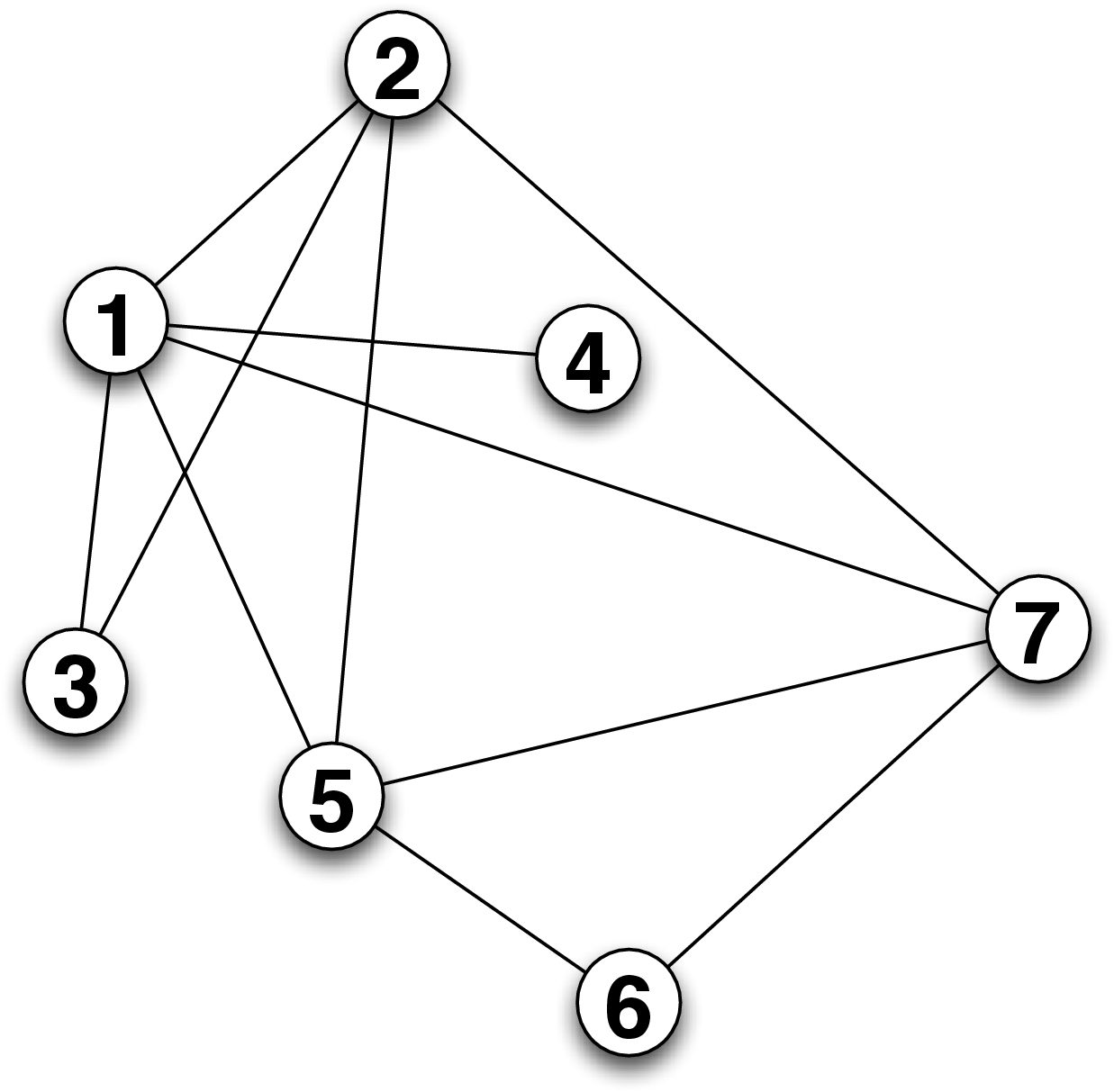}}
\subfigure[]{\includegraphics[width=1.9cm]{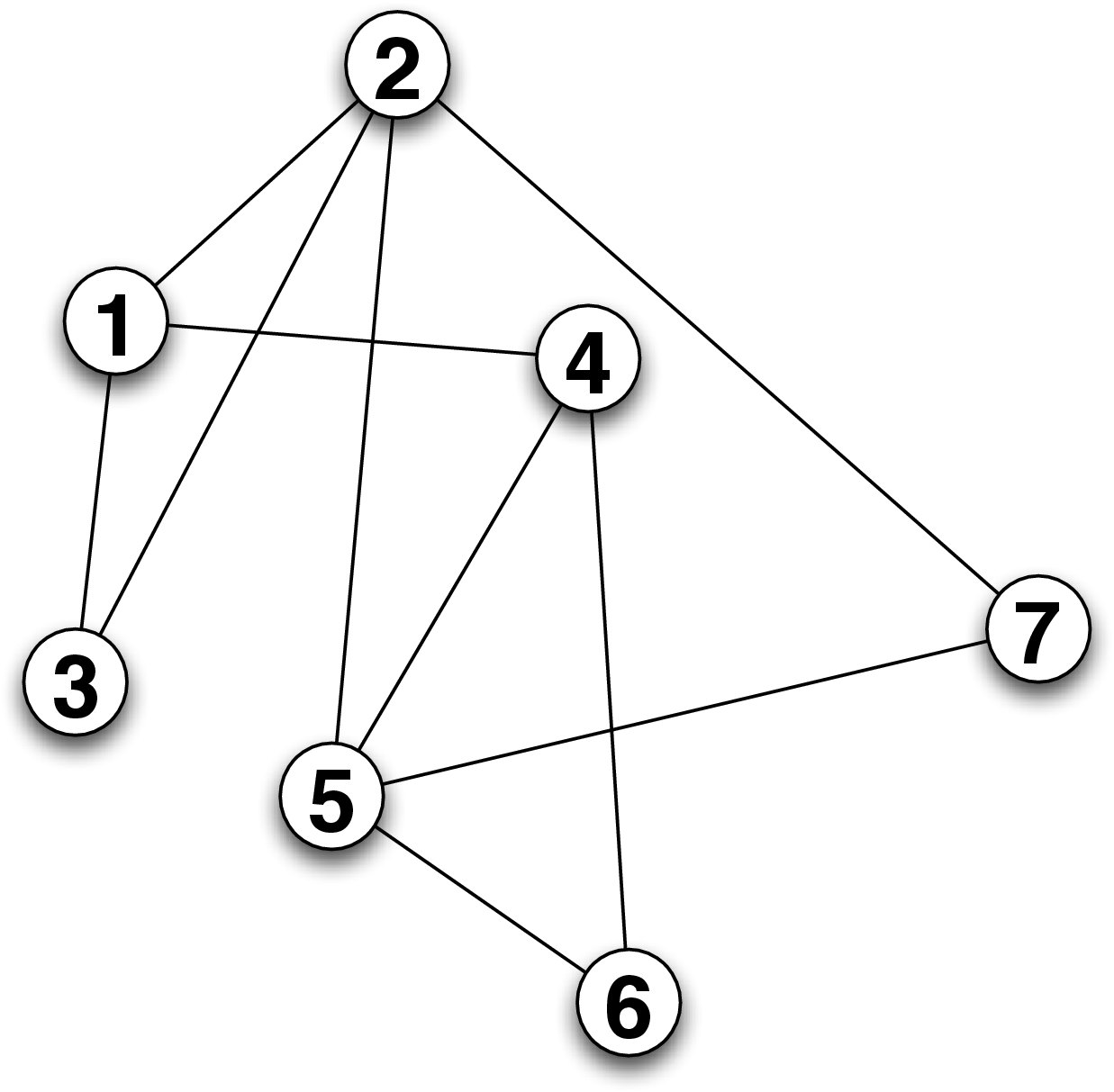}}
\subfigure[]{\includegraphics[width=1.9cm]{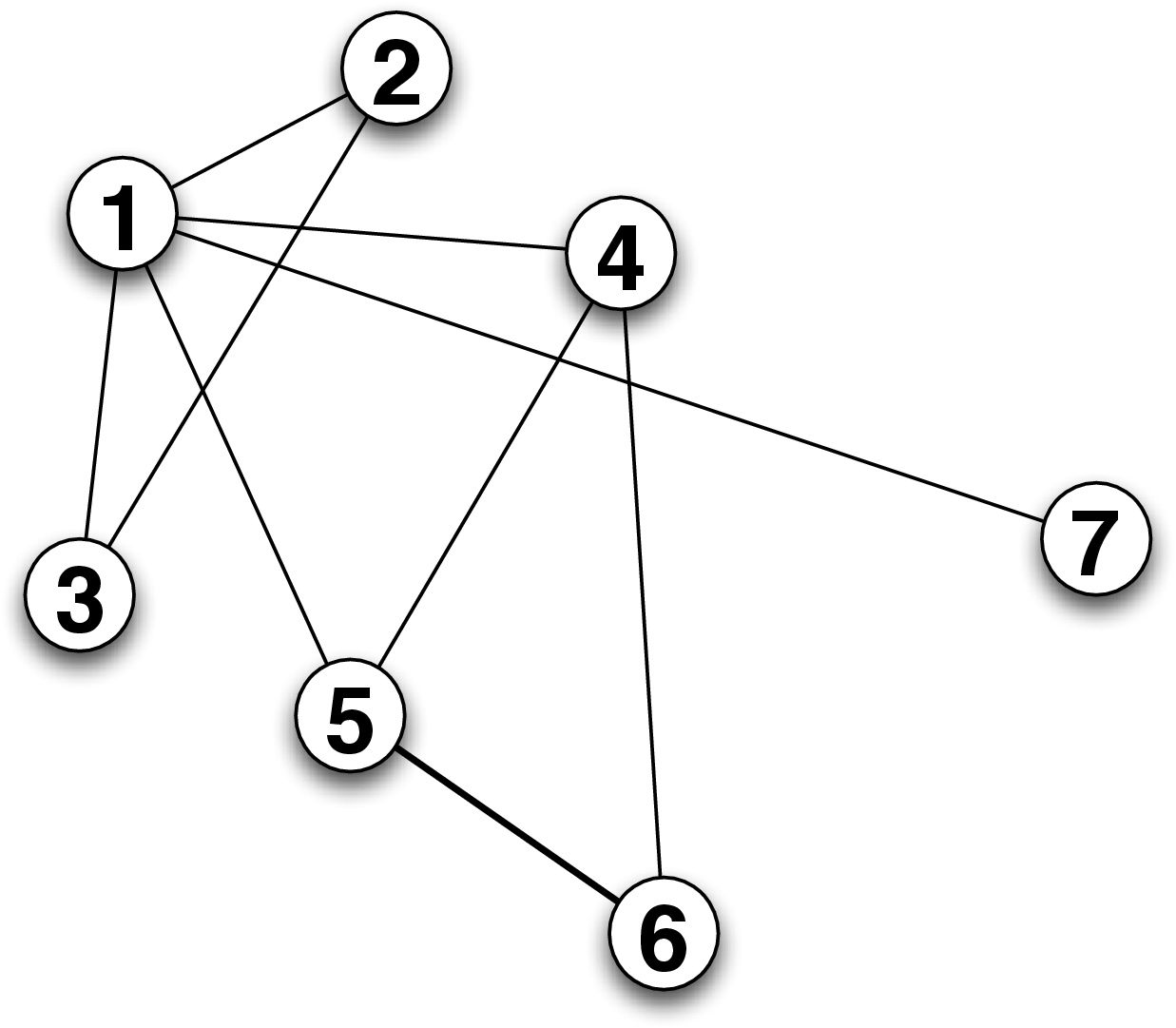}}
\subfigure[]{\includegraphics[width=1.9cm]{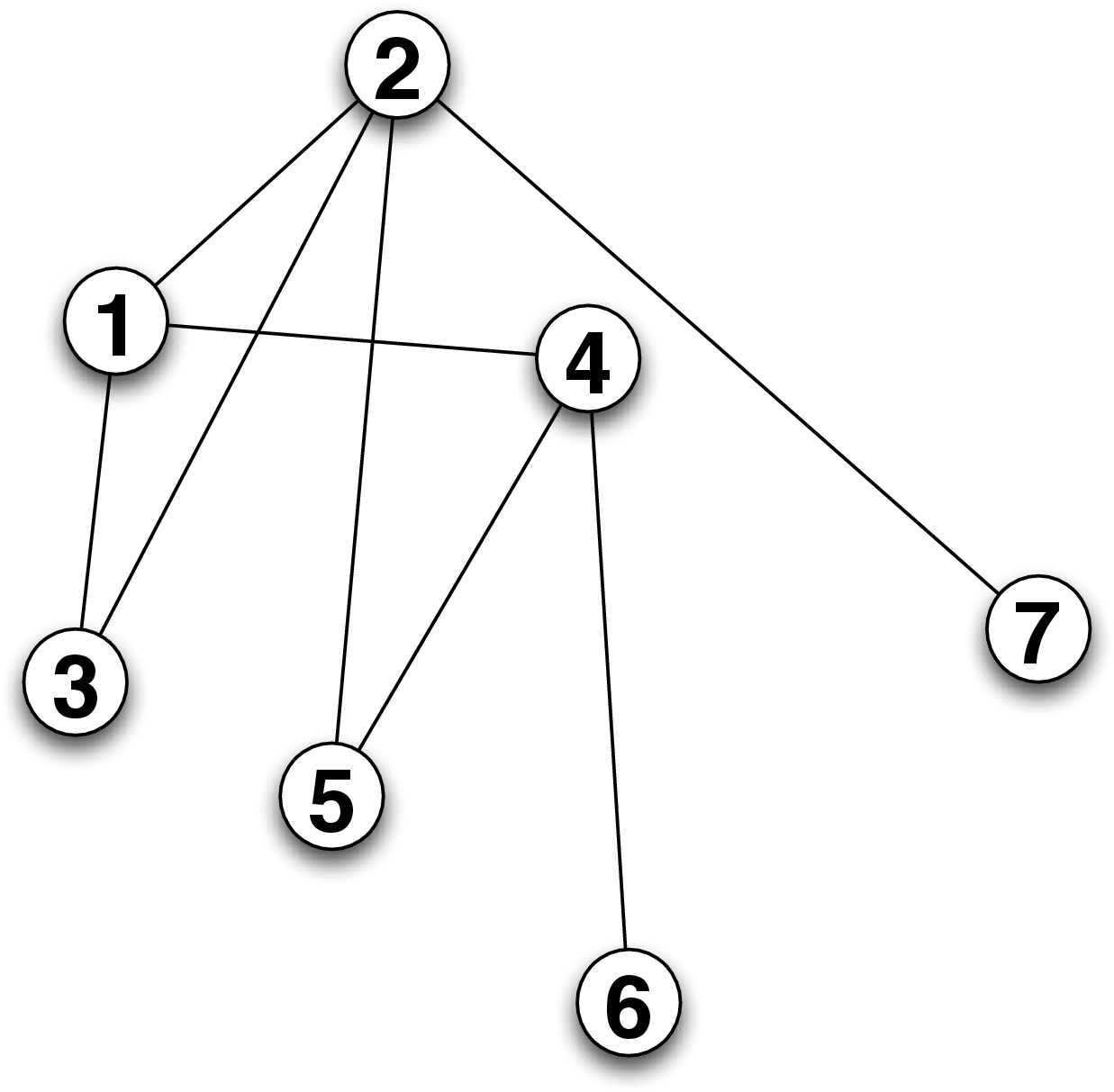}}
\caption{(a)(b)(c)(d), respectively, correspond to 1a) with $k=p;$ 1b)(or 1c)); 2a) with $k=p$(or $q$) and 3b). (e)(f)(g)(h) are respectively topologies V,VI,VII and VIII with $v_5, v_6, v_7$ the destructive nodes}
\label{pic1a1b}
\end{center}
\end{figure}
Item {\bf{i)}} of Case I,  together with 1b), 1a), 2a), 3b) of Case II, respectively, results in topologies I to IV (see Fig. \ref{tridet1and2}). 
If the `item {\bf{i)}} of Case I' is replaced by `item {\bf{ii)}} of Case I', then topologies V to VIII are generated(see (e) to (h) of Fig. \ref{pic1a1b}). So, if $\bar y$ is an eigenvector of $\mathcal{L},$ then $v_p,v_q,v_r$ have maximum of eight possible topologies. Moreover, it will be shown that topologies V to VIII are redundant. 
\begin{fact}\label{factvprf}
If $\bar y$ is an eigenvector of $\mathcal{L},$ then $v_p,v_q,v_r$ cannot have topology structures V, VI, VII and VIII. 
\end{fact}
\textbf{Case 1.}
$k\neq p, q, r.$ It is to be proved by contradiction first for V. In this case, (\ref{triplesimcond}) holds.
Since the graph is connected, one of $v_p,v_q,v_r,$ say $v_q$ in subsequent arguments, must have a neighbor in $\mathcal{V}\setminus\{v_p,v_q,v_r\}.$ By the topology structure of V, there is a node of $v_p, v_q, v_r,$ say $v_p$ with $v_p, v_q$ sharing at least one common neighbor node in  $\mathcal{V}\setminus\{v_p,v_q,v_r\}.$  Suppose this node is $v_k,$  then $p,q\in\mathcal{N}_{kf}.$ Since a) of Case I does not arise, $\sum_{i \in {\mathcal{N}_{kf}}} {{y_i}}  = {y_p} + {y_q}.$ Then by (\ref{eigcond}) and (\ref{triplesimcond}), (\ref{doubelequ}) holds. 
Now there are two situations for $v_p,v_r.$ One is that there is another $k\neq p,q,r$ with $v_k$ incident to 
both $v_p$ and $v_r;$ the other is that none of $v_k (k\neq p,q,r)$ is incident to both $v_p$ and $v_r$. For the first situation, similar arguments to (\ref{doubelequ}) yield that the eigen-condition requires (\ref{doubelano}) to be true. 
$\{v_p,v_q\}$ and $\{v_p,v_r\}$ cannot be incident to the same $v_k (k\neq p,q,r)$ because  a) of Case I does not arise in topology V. 
For $k\neq p,q,r,$ with $\mathcal{N}_{kf}$ containing none of $p,q,r;$ $\sum_{i \in {\mathcal{N}_{kf}}} {{y_i}}  = 0.$ It follows from $y_k=0 (k\neq p,q,r)$ and (\ref{triplesimcond}) that for these $k's$ the eigen-condition (\ref{eigcond}) is met.

\textbf{Case 2.} $k= p, q, r.$ Let us first consider the first situation of $v_p, v_r.$ Since $\sum_{i \in {\mathcal{N}_{kl}}} {{y_i}}  = 0,$ one has
\begin{equation}
{d_k}{y_k} - \sum\limits_{i \in {\mathcal{N}_k}} {{y_i}}  = {d_k}  y_k - \sum\limits_{i \in {\mathcal{N}_{kf}}} {{y_i}}.\label{caengentric}
\end{equation}
In topology V, each $\mathcal{N}_{kf} (k=p,q,r)$ contains two indices of $p, q, r,$ which are different from $k.$ Thus, for a $k\in\{p, q, r\},$ say $k=p,$   
$
\sum_{i \in {\mathcal{N}_{kf}}} {{y_i}}  = {y_q} + {y_r}. 
$
By (\ref{doubelequ}) and (\ref{doubelano}), $y_p=-y_q=-y_r.$ So 
$
\sum_{i \in {\mathcal{N}_{kf}}} {{y_i}}  = -2{y_p}.
$
By (\ref{caengentric})
\begin{equation}
{d_p}{y_p} - \sum\limits_{i \in {\mathcal{N}_p}} {{y_i}}  = (d_p+2) y_p\label{dpplus2}
\end{equation}
Thus, for $k=p,$ the eigen-condition is met for $\lambda=d_p+2.$
For $k=q,$  
$
\sum_{i \in {\mathcal{N}_{qf}}} {{y_i}}  = {y_p} + {y_r}=0.
$
From (\ref{caengentric}) 
\begin{equation}
{d_q}{y_q} - \sum\limits_{i \in {\mathcal{N}_q}} {{y_i}}  = {d_q} y_q.\label{dqeicon}
\end{equation}
Similarly, for $k=r,$
$
\sum_{i \in {\mathcal{N}_{rf}}} {{y_i}}  = {y_p} + {y_q}=0.
$
Thus
\begin{equation}
{d_r}{y_r} - \sum\limits_{i \in {\mathcal{N}_r}} {{y_i}}  = {d_r} y_r.\label{dreicon}
\end{equation}
To satisfy (\ref{dpplus2}), (\ref{dqeicon}) and (\ref{dreicon}) simultaneously, it requires $d_p+2=d_q=d_r.$ 
Below shows that this is not possible. 
If there is a node $v_{h}$ in $\mathcal{V}\setminus\{v_p,v_q,v_r\}$ which is incident to both $v_q$ and $v_r,$ 
then (\ref{doubthird}) should also be met. However the arguments of b) of Case I show that (\ref{doubelequ})(\ref{doubelano})(\ref{doubthird}) cannot be satisfied simultaneously. Hence this cannot be happening. In this situation, to satisfy $d_q=d_r,$ the number of $v_k$ in $\mathcal{V}\setminus\{v_p,v_q,v_r\}$ which is incident to both $v_p$ and $v_q$ is required to be equal to the number of $v_h$ in $\mathcal{V}\setminus\{v_p,v_q,v_r\}$ which is incident to both $v_p$ and $v_r,$ where $k\neq h.$ As a consequence, $d_p\geq d_q.$ Accordingly $d_p+2> d_q.$ Hence (\ref{dpplus2})(\ref{dqeicon})(\ref{dreicon}) cannot be met at the same time, and accordingly $\bar y$ is not an eigenvector of Laplacian. This contradicts with the assumption. 

Next, for the second situation of $\{v_p,v_r\},$ i.e., none of $v_k (k\neq p,q,r)$ is incident to both $v_p$ and $v_r,$ (\ref{doubelequ}) still holds. In this situation, we further distinguish between two circumstances: one is that there is a $v_k\in\mathcal{V}\setminus\{v_p,v_q, v_r\}$ which is incident to both $v_q$ and $v_r,$ the other is the reversal. For the first circumstance, relabelling $v_p$ as $v_q$ and vice-versa, the proof is the same as that of the aforementioned first situation of $\{v_p, v_r\}.$
For the second circumstance, it can be seen that $d_p=d_q.$ By (\ref{caengentric}) and (\ref{doubelequ}), 
$
{d_r}{y_r} - \sum_{i \in {\mathcal{N}_r}} {{y_i}}  = {d_r}{y_r} - ({y_p} + {y_q})
 = {d_r}{y_r}.
$
Hence, to satisfy the eigen-condition, it requires 
$
\lambda  = {d_r} .
$
Consider the eigen-condition of $v_p.$ By (\ref{caengentric}), 
$
{d_p}{y_p} - \sum_{i \in {\mathcal{N}_p}} {{y_i}}  = {d_p}{y_p} - ({y_q} + {y_r})
 = ({d_p} + 1){y_p} - {y_r}.
$
To satisfy the eigen-condition, it requires
\begin{equation}
({d_p} + 1){y_p} - {y_r} = \lambda {y_p}\label{vpeicon}
\end{equation}
With $\lambda=d_r,$ the above equation means ${y_r} = ({d_p}+1-d_r){y_p}.$ Thus, 
for node $v_q,$ 
$
\sum_{i \in {\mathcal{N}_{qf}}} {{y_i}}  ={y_p} + {y_r}
=({d_p}+2-d_r) y_p.
$
By (\ref{caengentric}) and (\ref{doubelequ}),  
$
{d_q}{y_q} - \sum_{i \in {\mathcal{N}_q}} {{y_i}}  = {d_q} y_q+({d_p}+2-d_r) y_q
=(2d_q+2-d_r)y_q.
$
Hence, to satisfy the eigen-condition, it requires $2d_q+2-d_r=\lambda=d_r,$ i.e., $d_q+1=d_r.$ 
However, it will be shown $d_q>d_r.$ Since none of $v_k (k\neq p,q,r)$ is incident to both $v_p$ and $v_r$ and a) b) of Case I cannot arise simultaneously, then a node $v_h$ in $\mathcal{V}\setminus\{v_p,v_q,v_r\}$ which is incident to $v_r$ is also incident to $v_q.$ In addition, there is already at least one $v_k$ in $\mathcal{V}\setminus\{v_p,v_q,v_r\}$ which is incident to $v_q$ and $v_p.$ Hence $d_q>d_r$ and 
accordingly $\bar y$ cannot be an eigenvector of $\mathcal{L}.$ This contradicts with the assumption. 

For topology VI, only the proof different from that of topology V is given. As topology V,  it can be assumed without loss of generality that $v_p,v_q$ share at least one common node in $\mathcal{V}\setminus\{v_p,v_q,v_r\}.$ Consider the 
first situation of $\{v_p,v_r\},$ i.e., there is a $v_k (k\neq p,q,r)$ incident to 
both $v_p$ and $v_r.$ In this situation, (\ref{doubelequ}) and (\ref{doubelano}) still hold for $k=p,q,r.$ Then $y_p=-y_q=-y_r.$ 
For $k=p,$ (\ref{dpplus2}) still holds. For $k=q,$ 
$
\sum_{i \in {\mathcal{N}_{qf}}} {{y_i}}  = {y_p} =-y_q.
$
Thus 
\begin{equation}
{d_q}{y_q} - \sum\limits_{i \in {\mathcal{N}_q}} {{y_i}}  = ({d_q}+1) y_q.\label{dqeiconvi}
\end{equation}
Similarly, for $k=r,$ 
\begin{equation}
{d_r}{y_r} - \sum\limits_{i \in {\mathcal{N}_r}} {{y_i}}  = ({d_r}+1) y_r.\label{dreiconvi}
\end{equation}
The remaining discussion is the same as topology V.  
Next consider the second situation of $\{v_p,v_r\}.$ In this case, (\ref{doubelequ}) still holds. It can be seen that for $v_r,$
$
{d_r}{y_r} - \sum_{i \in {\mathcal{N}_r}} {{y_i}}  = {d_r}{y_r} - y_p.
$
The eigen-condition requires ${d_r}{y_r} - y_p=\lambda y_r,$ i.e., $y_p=(d_r-\lambda)y_r.$ For $v_p,$ it still requires equation (\ref{vpeicon}). So 
$
{y_r} = ({d_p} + 1 - \lambda ){y_p}
 = ({d_p} + 1 - \lambda )(d_r - \lambda ){y_r}. 
$
Since $y_r\neq 0$ 
\begin{equation}
({d_p} + 1 - \lambda )(d_r - \lambda ) = 1.\label{vpeicondivi}
\end{equation}
For $v_q,$ since (\ref{doubelequ}) still holds, 
$
{d_q}{y_q} - \sum_{i \in {\mathcal{N}_q}} {{y_i}}  = {d_q} y_q-y_p
=({d_q}+1) y_q. 
$
Thus, to satisfy the eigen-condition, it requires $\lambda={d_q}+1.$ By (\ref{vpeicondivi}), 
$({d_p} - {d_q})({d_r} - {d_q} - 1) = 1,$ which cannot be satisfied because $d_q>d_r$(as topology V) and $d_p,d_q$ are all integers. Accordingly, $\bar y$ cannot be an eigenvector of $\mathcal{L}.$ This contradicts with the assumption.


For topology VII and the 
first situation of $\{v_p,v_r\},$  there does not exist node $v_{h}$ in $\mathcal{V}\setminus\{v_p,v_q,v_r\}$ which is incident to both $v_q$ and $v_r$ because (\ref{doubelequ})(\ref{doubelano})(\ref{doubthird}) cannot be satisfied simultaneously. Hence $d_p>d_r$ and  $d_p>d_q.$ Note that $\sum_{i \in {\mathcal{N}_{pf}}} {{y_i}}=y_q=-y_p.$ By (\ref{caengentric}), 
$
{d_p}{y_p} - \sum_{i \in {\mathcal{N}_p}} {{y_i}}  = (d_p+1) y_p.
$
Similarly, for $k=q,$ 
$
{d_q}{y_q} - \sum_{i \in {\mathcal{N}_q}} {{y_i}}  = ({d_q}+1) y_q.
$
Since $d_p+1>d_q+1,$ the eigen-condition of $v_p,v_q$ cannot be met for the same eigenvalue. For the 
second situation of $\{v_p,v_r\},$ $d_q>d_r.$ Since 
$
{d_q}{y_q} - \sum_{i \in {\mathcal{N}_q}} {{y_i}}  = ({d_q}+1) y_q;
$
$
{d_r}{y_r} - \sum_{i \in {\mathcal{N}_r}} {{y_i}}  =d_r y_r 
$
and $d_q+1>d_r,$  the eigen-condition of $v_q,v_r$ cannot be met for the same eigenvalue as well. This contradicts the assumption that $\bar y$ is an eigenvector.   

For topology VIII, $\sum_{i \in {\mathcal{N}_{kl}}} {{y_i}}=\sum_{i \in {\mathcal{N}_{kf}}} {{y_i}}=0 (k=p,q,r).$ By (\ref{caengentric})
\begin{equation}
{d_k}{y_k} - \sum\limits_{i \in {\mathcal{N}_k}} {{y_i}}  = {d_k}  y_k\label{eigenconviii} 
\end{equation}
Since each $v_k (k=p,q,r)$ has no neighbor nodes in $\{v_p,v_q,v_r\}$ and $\mathcal{G}$ is connected, it has at least one neighbor node in $\mathcal{V}\setminus\{v_p,v_q,v_r\};$ or else, $v_k$ will be an isolated node. With $v_p,v_q$ sharing a common neighbor node in $\mathcal{V}\setminus\{v_p,v_q,v_r\},$ the previous arguments show that $v_q,v_r$ do not share a common neighbor node in $\mathcal{V}\setminus\{v_p,v_q,v_r\}$ if the first situation of $v_p,v_r$ arises. In this circumstance, $d_p>d_q$ and $d_p>d_r.$
By (\ref{eigenconviii}), the eigen-condition requires $d_p=d_q=d_r,$ which cannot be met since $d_p>d_q.$ 
If the second situation of $v_p,v_r$ arises, the connectedness of $\mathcal{G}$ means there exist at least one $v_{k}$ in $\mathcal{V}\setminus\{v_p,v_q,v_r\}$ which is incident to both $v_q$ and $v_r.$ Since this $v_k$ cannot be incident to $v_p,v_q$ simultaneously, $d_q>d_p$ and $d_q>d_r.$ By (\ref{eigenconviii}), the eigen-condition cannot be met simultaneously for $v_p,v_q,v_r.$ This contradicts the assumption that $\bar y$ is an eigenvector. 
Above all, if $\bar y$ is an eigenvector of $\mathcal{L},$ then the topology of $v_p ,v_q, v_r$ accords with one of I to IV, i.e.,  they constitute a set of TCD nodes.

(\emph{Sufficiency of Lemma \ref{trilem}}) 
Firstly, suppose $v_p, v_q, v_r$ are TCD nodes with topology I. The corresponding topology structure means 
$d_p=d_q=d_r.$ For $k\neq p, q, r;$ the special form of $\bar y$ yields $\sum_{i \in {\mathcal{N}_{kl}}} {{y_i}}  = 0$ and $y_k=0.$ Then (\ref{triplesimcond}) holds. Since the topology structure of $v_p, v_q, v_r$ accords with type I, for any $k\neq p,q,r,$ either $p,q,r\in\mathcal{N}_{kf}$ or $p,q,r\notin\mathcal{N}_{kf}.$
If $p,q,r\in\mathcal{N}_{kf},$ then $\sum_{i \in {\mathcal{N}_{kf}}} {{y_i}}  = {y_p} + {y_q}+y_r.$ Since  $y_p+y_q+y_r=0,$ by (\ref{triplesimcond}) 
\begin{equation}
{d_k}{y_k} - \sum\limits_{i \in {\mathcal{N}_k}} {{y_i}}  =0. \label{spcase1}
\end{equation}
If $p,q,r\notin\mathcal{N}_{kf},$ (\ref{spcase1}) still holds. Since $y_k=0 (k\neq p, q, r),$ $\lambda y_k=0.$ Then, for any $k\neq p, q, r$ and any number $\lambda,$ the eigen-condition (\ref{eigcond}) holds. 
For $k= p, q, r,$ it follows from $\sum_{i \in {\mathcal{N}_{kl}}} {{y_i}}  = 0$ that
\begin{equation}
{d_k}{y_k} - \sum\limits_{i \in {\mathcal{N}_k}} {{y_i}}  = {d_k} y_k - \sum\limits_{i \in {\mathcal{N}_{kf}}} {{y_i}}.\label{ca2eicon}
\end{equation}
Since $\mathcal{N}_{kf}$ contains the other two indices of $p, q, r,$ for any given $k\in\{p, q, r\},$ say $k=p,$ it follows  
$
\sum_{i \in {\mathcal{N}_{kf}}} {{y_i}}  = {y_q} + {y_r}.
$
By $y_p+y_q+y_r=0$ and (\ref{ca2eicon}),
$
{d_k}{y_k} - \sum\limits_{i \in {\mathcal{N}_k}} {{y_i}}  = (d_{k}+1)y_k.
$
Thus, for any $k,$ the eigen-condition (\ref{eigcond}) is met for $\lambda=d_p+1.$ So the result holds for topology I. 

Secondly, if $v_p, v_q, v_r$ are TCD nodes with topology II, the associated topology structure implies 
$\mathcal{N}_{pl}=\mathcal{N}_{ql}=\mathcal{N}_{rl}$ and $\mathcal{N}_{pf}\setminus\{p,q,r\} =\mathcal{N}_{qf}\setminus\{p,q,r\}=\mathcal{N}_{rf}\setminus\{p,q,r\}.$ Moreover, since $q,r\in\mathcal{N}_{pf},$ $p\in\mathcal{N}_{qf},$ $p\in\mathcal{N}_{rf}$ and $\mathcal{N}_k=\mathcal{N}_{kl}+\mathcal{N}_{kf},$ it follows that 
$d_p=d_q+1=d_r+1.$ For $k\neq p, q, r,$ the same arguments as topology I yield that the eigen-condition is met for any number $\lambda.$ For $k=p,$ since $\sum_{i \in {\mathcal{N}_{kl}}} {{y_i}}  = 0,$ $q,r\in\mathcal{N}_{pf}$ and $y_p+y_q+y_r=0,$
$
\sum_{i \in {\mathcal{N}_{pf}}} {{y_i}}  = {y_q}+y_r
=-y_p.
$
By (\ref{ca2eicon}) 
$
{d_p}{y_p} - \sum_{i \in {\mathcal{N}_p}} {{y_i}}  
= (d_p+1) y_p. \label{dppluforth}
$
For $k=q,$ since $p\in\mathcal{N}_{qf},$
$
{d_q}{y_q} - \sum_{i \in {\mathcal{N}_q}} {{y_i}}  = {d_q} y_q-y_p.
$
From $y_p+y_q+y_r=0$ and $y_q=y_r,$ one has 
$
{d_q}{y_q} - \sum_{i \in {\mathcal{N}_q}} {{y_i}}  = ({d_q}+2) y_q. \label{qforeign}
$
For $k=r,$ the same arguments as $k=q$ gives  
$
{d_r}{y_r} - \sum_{i \in {\mathcal{N}_r}} {{y_i}}  = ({d_r}+2) y_r. \label{rfortheign}
$
The previous arguments show that $\bar y$ is an eigenvector of $\mathcal{L}$ with $d_p+1$ the corresponding eigenvalue.  

Thirdly, if $v_p, v_q, v_r$ are TCD nodes with topology III, $d_p=d_q=d_r+1,$ which can be verified in the same way as the beginning part of proof of topology II. For $k\neq p,q,r,$ the same proof as that of topology I yields that the eigen-condition holds for any number $\lambda$ if $y_p+y_q+y_r=0.$ For $k=p,q,r,$ (\ref{caengentric}) holds. For $k=p,$ $\sum_{i \in {\mathcal{N}_{pf}}} {{y_i}}  = {y_q}$ and for $k=q,$ $\sum_{i \in {\mathcal{N}_{qf}}} {{y_i}}  = {y_p}.$ By (\ref{caengentric}) and $y_p=y_q,$ it follows 
$
{d_p}{y_p} - \sum\limits_{i \in {\mathcal{N}_p}} {{y_i}} =(d_p-1)y_p. 
$
Similarly, for $k=q,$ 
$
{d_q}{y_q} - \sum_{i \in {\mathcal{N}_q}} {{y_i}}  =(d_q-1)y_q. 
$
For $k=r,$ since $\sum_{i \in {\mathcal{N}_{rl}}} {{y_i}}  = \sum_{i \in {\mathcal{N}_{rf}}} {{y_i}}  = 0,$ 
$\sum_{i \in {\mathcal{N}_r}} {{y_i}}  = 0,$ it can be seen that 
$
{d_r}{y_r} - \sum_{i \in {\mathcal{N}_r}} {{y_i}}  = {d_r}{y_r}.
$
Since $d_p=d_q=d_r+1,$ the above arguments show that the eigen-condition holds for each $k$ and the 
corresponding eigenvalue is $\lambda=d_r.$

Finally, if $v_p, v_q, v_r$ are with topology IV, $d_p=d_q=d_r.$ In addition, for $k\neq p,q,r,$ the eigen-condition still holds for any number $\lambda$ if $y_p+y_q+y_r=0;$ and for $k=p,q,r,$ $\sum_{i \in {\mathcal{N}_{kl}}} {{y_i}}  = \sum_{i \in {\mathcal{N}_{kf}}} {{y_i}}  = 0.$ Thus $\sum_{i \in {\mathcal{N}_{k}}} {{y_i}}  =0 (k=p,q,r),$ and accordingly   
$
{d_k}{y_k} - \sum_{i \in {\mathcal{N}_k}} {{y_i}}  = {d_k}{y_k}.
$
Thus the eigen-condition is met for each $k$ if the eigenvalue $\lambda=d_p.$
Therefore, $\bar y$ is an eigenvector of $\mathcal{L}$ if $v_p,v_q,v_r$ are TCD nodes with one of topologies I to IV.
\hfill
\end{proof}

\begin{theorem}\label{tripDcd}
There exist a group of leaders selected from $\Gamma _{p,q,r}$ such that the multi-agent system with single-integrator dynamics (\ref{singmul}) is controllable if and only if the following two conditions are met simultaneously:
\begin{itemize}
\item the follower node set does not contain TCD nodes $v_p,v_q,v_r,$ where $p,q,r\in\{1,\ldots,n+l\},$ ${\Gamma _{p,q,r}}\mathop  = \limits^\Delta  \{ 1, \ldots,$ $n + l\}\setminus \{ p,q,r\}.$

\item any two of $v_p, v_q, v_r$ are not DCD nodes. 
\end{itemize}
\end{theorem}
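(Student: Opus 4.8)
The plan is to mirror the proof of Theorem~\ref{doubDcd} and reduce everything to the eigenvector characterization of uncontrollability in Proposition~\ref{singPro} together with the two structural Lemmas~\ref{doulem} and~\ref{trilem}. The governing observation is that the Laplacian $\mathcal{L}$ does not depend on the leader choice, and that any eigenvector $\bar y$ of $\mathcal{L}$ whose support is contained in $\{p,q,r\}$ vanishes at every node of $\Gamma_{p,q,r}$, hence at every admissible leader. By statement~ii) of Proposition~\ref{singPro}, the mere existence of such a $\bar y$ therefore certifies uncontrollability simultaneously for \emph{every} leader set drawn from $\Gamma_{p,q,r}$. I would prove both implications in contrapositive form.

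For necessity I would assume that one of the two conditions fails and exhibit such an eigenvector. If $v_p,v_q,v_r$ are TCD nodes, the \emph{if} part of Lemma~\ref{trilem} supplies an eigenvector with $y_p,y_q,y_r\neq 0$ and all remaining entries zero; if instead some pair, say $v_p,v_q$, are DCD nodes, the \emph{if} part of Lemma~\ref{doulem} supplies one supported on $\{p,q\}$. In either case the support lies inside $\{p,q,r\}$, so the opening observation forces uncontrollability for all leaders in $\Gamma_{p,q,r}$, contradicting the hypothesis that some such leader set is controllable.

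For sufficiency I would use the standard reduction in which every node of $\Gamma_{p,q,r}$ is taken as a leader, so that $\{v_p,v_q,v_r\}$ is exactly the follower set. If no leader set in $\Gamma_{p,q,r}$ were controllable, this extreme choice is in particular uncontrollable, and Proposition~\ref{singPro}~ii) yields an eigenvector $\bar y$ of $\mathcal{L}$ vanishing off $\{p,q,r\}$. The crux is then a case split on how many of $y_p,y_q,y_r$ are nonzero. If all three are nonzero, the \emph{only if} part of Lemma~\ref{trilem} makes $v_p,v_q,v_r$ TCD nodes, contradicting the first condition; if exactly two are nonzero, say $y_r=0$, then $\bar y$ has the two-nonzero shape of Lemma~\ref{doulem}, whose \emph{only if} part makes that pair DCD nodes (the conclusion being leader-split independent since $v_p,v_q$ are followers), contradicting the second condition.

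The one remaining case, a support of size one or zero, I would dispose of by connectedness exactly as in the sufficiency half of Theorem~\ref{doubDcd}: if only $y_p\neq 0$, the eigen-equation at $k=p$ forces $\lambda=d_p$, while at any neighbor $v_k$ of $v_p$ (which exists since $\mathcal{G}$ is connected and nontrivial) it reads $-y_p=\lambda\cdot 0=0$, so $y_p=0$, a contradiction; the all-zero vector is excluded because $\bar y$ is an eigenvector. Every branch thus contradicts a hypothesis or is impossible, closing the contrapositive. I expect the only points needing care to be this size-one exclusion and the bookkeeping that each lemma's structural conclusion is independent of which nodes of $\Gamma_{p,q,r}$ actually serve as leaders; both are routine once the support of $\bar y$ is pinned to $\{p,q,r\}$.
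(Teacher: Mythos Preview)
Your proposal is correct and follows essentially the same route as the paper's own proof: both argue by contradiction, invoke Lemmas~\ref{doulem} and~\ref{trilem} for necessity, and for sufficiency take all of $\Gamma_{p,q,r}$ as leaders to obtain via Proposition~\ref{singPro} an eigenvector $\bar y$ supported on $\{p,q,r\}$, then perform the same case split on the size of that support (three nonzero $\Rightarrow$ TCD, two nonzero $\Rightarrow$ DCD, one nonzero ruled out by connectedness). Your treatment is slightly more explicit than the paper's, which largely defers to the proof of Theorem~\ref{doubDcd}, but the logical skeleton is identical.
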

\begin{proof}
(\emph{Necessity}) Suppose by contradiction that two of $v_p, v_q, v_r$ are DCD nodes, then necessity can be proved in the same vein as that of Theorem \ref{doubDcd}. In case $v_p, v_q, v_r$ are TCD nodes, the proof can be carried out in the same way by using Lemma \ref{trilem}. 

(\emph{Sufficiency}) Suppose by contradiction that the system is uncontrollable with any leaders selected from ${\Gamma _{p,q,r}}.$ Then the same arguments as the sufficiency proof of Theorem \ref{doubDcd} show that $\bar y = [0, \ldots ,{y_p},$ $0, \ldots ,{y_q},0,$ $ \ldots ,{y_r},$ $0, \ldots ,0]^T$ is an eigenvector of $\mathcal{L}.$ Next, it is to verify $y_p, y_q, y_r\neq 0.$
Firstly, we show that two of $y_p, y_q, y_r$ cannot be zero. 
Suppose by contradiction that two of $y_p, y_q, y_r$ take zero, say $y_p=y_q=0.$ Then $y_r\neq 0,$ or else $\bar y$ is a zero vector. Since $\mathcal{G}$ is connected, $\lambda=0$ is a simple eigenvalue associated with the all one eigenvector  $\textbf{1}.$ Thus the eigenvalue $\beta$ associated with $\bar y$ is not zero. Since $\mathcal{G}$ is connected, there is a $k\neq r$ with $k\in\mathcal{N}_r,$ i.e., the corresponding $v_k$ is incident to $v_r.$ Otherwise, $v_r$ turns to be an isolated node. The special form of $\bar y$ then leads to $\sum\nolimits_{i \in {\mathcal{N}_{kl}}} {{y_i}}  = 0,\sum\nolimits_{i \in {\mathcal{N}_{kf}}} {{y_i}}  = {y_r}.$ From $y_k=0,$ one has 
$
{d_k}{y_k} - \sum_{i \in {\mathcal{N}_k}} {{y_i}}  = -y_r.
$
Since $y_k=0, y_r\neq 0,$ this equation means that the eigen-condition (\ref{eigcond}) of $v_k$
is not met. This contradicts with the condition that $\bar y$ is an eigenvector. So any two of $y_p, y_q, y_r$ cannot take the value of zero. 
Secondly, suppose there is one and only one of $y_p, y_q, y_r$ taking zero, say $y_p=0$ and $y_q\neq 0, y_r\neq 0.$ By Lemma \ref{doulem}, the corresponding $v_q, v_r$ constitute a pair of DCD nodes. This contradicts with the condition that any two of $v_p, v_q, v_r$ are not DCD nodes. 
Since $y_p, y_q, y_r\neq 0,$ Lemma \ref{trilem} shows that $v_p, v_q, v_r$ constitute a triple of TCD nodes. This also contradicts with the condition. \hfill
\end{proof}

\subsection{quadruple destructive nodes}

\subsubsection{A design method for QCD nodes}
Below ${s_1},{s_2},{t_1},{t_2}$ are used to represent the indices of the desired quadruple controllability destructive (QCD) nodes. Let $\eta$ be a vector with entries $\eta_p=\eta_q=0$ and 
\begin{equation}
\eta_{s_1}=\eta_{s_2}=-\eta_{t_1}=-\eta_{t_2}\neq 0\label{eielem}
\end{equation}
where $p, q, s_1,s_2,t_1,t_2$ are distinct and  all the other entries of $\eta$ are zero. The node set of $\mathcal{G}$ can be broken down into four parts: $\{v_p, v_q\}, \{v_{s_1},v_{s_2}\},$ $ \{v_{t_1},v_{t_2}\}$ and the others. In subsequent topology design procedure, $v_p, v_q$ are fixed in advance to assist in designing neighbor relationship of $\{v_{s_1},v_{s_2}\}$ and $ \{v_{t_1},v_{t_2}\}.$ The neighbor topology structure of $\{v_{s_1},v_{s_2}\}$ to $\{v_p, v_q\}$ and $\{v_{t_1},v_{t_2}\}$ is constructed below, where $v_{s_2}$ follows the same rule as $v_{s_1}.$ So the rule  is stated only for $v_{s_1}.$ 
A topology design procedure for QCD nodes is as follows:

\textbf{Case I}. $v_{s_1}$ has no neighbor relationship with $v_{s_2},$ and so has $v_{t_1}$  with $v_{t_2}.$ The design is divided into four steps:

\emph{Step 1} The construction of neighbor nodes of $v_{s_1}$ conforms to one of the following cases: 
\begin{enumerate}
\item[{i)}] $v_{s_1}$ is a neighbor of both $v_p$ and $v_q.$ In this case, $v_{s_1}$ is required to have neighbor relationship with only one of $v_{t_1}$ and $v_{t_2}.$ 

\item[{ii)}] $v_{s_1}$ has neighbor relationship to neither $v_p$ nor $v_q.$ In this case, $v_{s_1}$ is required to have neighbor relationship with both $v_{t_1}$ and $v_{t_2}.$ 
\end{enumerate}

\emph{Step 2} The design of the neighbor topology structure of $\{v_{t_1},v_{t_2}\}$ to $\{v_p, v_q\}$ and $\{v_{s_1},v_{s_2}\}$ is in the same vein as that of $\{v_{s_1},v_{s_2}\}$ to $\{v_p, v_q\}$ and $\{v_{t_1},v_{t_2}\}.$

\emph{Step 3} For $k=p,q,$ $\mathcal{N}_{kf}$ contains exactly one of $s_1, s_2$ and one of $t_1, t_2.$ 

\emph{Step 4} For $k\in\Omega \mathop  = \limits^\Delta \{1,\ldots,n+l\}\setminus\{p,q,s_1,s_2,t_1,t_2\},$ the design of neighbors of $v_{k}$ conforms to the following cases:
\begin{enumerate}
\item[\emph{a)}] $v_k$ is a neighbor of both $v_p$ and $v_q;$ 

\item[\emph{b)}] $v_k$ is a neighbor of all of $v_{s_1},v_{s_2},v_{t_1},v_{t_2};$  

\item[\emph{c)}] $v_k$ does not have neighbor relationship to any of $v_p,v_q,v_{s_1},v_{s_2},v_{t_1},v_{t_2};$

\item[\emph{d)}] $v_k$ has arbitrary neighbor relationship with any other nodes except $v_p,v_q,v_{s_1},v_{s_2},v_{t_1},v_{t_2}.$
\end{enumerate}
Any of \emph{a)}, \emph{b)}, \emph{c)}, \emph{d)} can be satisfied simultaneously. 

\textbf{Case II}. at least one of the following two cases occur: $v_{s_1}$ is a neighbor of $v_{s_2};$ or $v_{t_1}$ is a neighbor of $v_{t_2}.$
The remaining construction is the same as Case I.

\begin{remark}
The neighbor topology structure of $\{v_{s_1},v_{s_2}\}$ to $\{v_p,v_q\}$ is designed to be the same as that of $\{v_{t_1},v_{t_2}\}$ to $\{v_p,v_q\}.$  This kind of equivalence of neighbor topology between $\{v_{s_1},v_{s_2}\}$ and $\{v_{t_1},v_{t_2}\}$ makes leaders incapable to torn open them and therefore destroys controllability.
\end{remark}


\begin{theorem}\label{anotherconthm}
If system  (\ref{singmul}) is controllable, then the follower node set does not contain $v_{s_1},v_{s_2},v_{t_1},v_{t_2}$ with the topology structure of $v_{s_1},v_{s_2},v_{t_1},v_{t_2}$ agreeing with any of those designed via Steps 1-4,  where $s_1,s_2,t_1,t_2\in\{1,\ldots,n+l\}$ are distinct indices. 
\end{theorem}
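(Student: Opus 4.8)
The plan is to show that the topology designed via Steps 1--4 forces $\eta$ (as specified in~(\ref{eielem})) to be an eigenvector of $\mathcal{L}$, and then invoke Proposition~\ref{singPro} to conclude uncontrollability, which contradicts controllability. Since the theorem is a one-directional implication (controllable $\Rightarrow$ no such four nodes), the contrapositive is cleanest: assume the follower set does contain $v_{s_1},v_{s_2},v_{t_1},v_{t_2}$ arranged as in the design, and prove the system cannot be controllable regardless of how leaders are chosen (subject to the leaders lying outside these four nodes, which is implicit since these are followers).

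The concrete verification proceeds by checking the eigen-condition~(\ref{eigcond}), ${d_k}{y_k} - \sum_{i \in {\mathcal{N}_k}} {y_i} = \lambda {y_k}$, for $\eta$ across the node classes induced by the design. First I would handle $k\in\Omega$ (Step~4): by the special form of $\eta$ we have $y_k=0$, so the condition reduces to $\sum_{i\in\mathcal{N}_{kf}}y_i=0$; cases~a), b), c) each make this sum vanish because the nonzero entries cancel in the pattern $\eta_{s_1}=\eta_{s_2}=-\eta_{t_1}=-\eta_{t_2}$ (in a) no $s,t$-index is adjacent, in b) all four are adjacent giving $\eta_{s_1}+\eta_{s_2}+\eta_{t_1}+\eta_{t_2}=0$, in c) none are adjacent), and d) only adds neighbors among the zero-valued nodes. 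Next, for $k=p,q$ (Step~3), the node picks up exactly one $s$-index and one $t$-index, so $\sum_{i\in\mathcal{N}_{kf}}y_i=\eta_{s}+\eta_{t}=0$ again, and with $y_p=y_q=0$ the condition holds for any $\lambda$. The substantive computation is at $k\in\{s_1,s_2,t_1,t_2\}$: here $y_k\neq 0$, and one must show $d_k y_k - \sum_{i\in\mathcal{N}_k}y_i=\lambda y_k$ with a single common $\lambda$. The neighbor contributions split into the fixed part to $\{v_p,v_q\}$ and $\Omega$ (which sums to zero, by the same cancellation and the symmetric design), plus the cross-terms between $\{v_{s_1},v_{s_2}\}$ and $\{v_{t_1},v_{t_2}\}$ mandated by Step~1 and its mirror in Step~2; the design is rigged precisely so these cross-terms produce matching multiples of $y_k$, yielding the same eigenvalue $\lambda$ for all four nodes.

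The main obstacle I anticipate is the case bookkeeping at the four destructive nodes, and in particular verifying that the two sub-cases of Step~1 (the ``$v_{s_1}$ adjacent to both $v_p,v_q$, one of $v_{t_1},v_{t_2}$'' branch versus the ``$v_{s_1}$ adjacent to neither $v_p,v_q$, both $v_{t_1},v_{t_2}$'' branch), combined with Case~I versus Case~II (whether $v_{s_1}\sim v_{s_2}$ or $v_{t_1}\sim v_{t_2}$), all lead to a \emph{consistent} eigenvalue across $s_1,s_2,t_1,t_2$. I would organize this as a degree-counting argument: establish that the design enforces $d_{s_1}=d_{s_2}$ and $d_{t_1}=d_{t_2}$, and more importantly that the net effect of the cross-adjacencies between the $s$-group and $t$-group, evaluated against the sign pattern of $\eta$, is identical at each of the four nodes. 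The equivalence stressed in the Remark --- that $\{v_{s_1},v_{s_2}\}$ and $\{v_{t_1},v_{t_2}\}$ have the same neighbor topology to $\{v_p,v_q\}$ and are interchangeable --- is the structural reason this closure happens, and I would lean on it to argue symmetry rather than recomputing each node from scratch.

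Once the eigen-condition is confirmed at every node for one and the same $\lambda$, I conclude $\mathcal{L}\eta=\lambda\eta$ with $\eta$ of the stated special form and $\eta_{s_1}\neq 0$. Because the leaders are drawn from $\mathcal{V}\setminus\{v_{s_1},v_{s_2},v_{t_1},v_{t_2}\}$, the support of $\eta$ meets no leader index, so statement~ii) of Proposition~\ref{singPro} is satisfied; hence the system is uncontrollable for every admissible leader selection. This contradicts the hypothesis that system~(\ref{singmul}) is controllable, completing the contrapositive and thus the proof.
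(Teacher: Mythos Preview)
Your proposal is correct and mirrors the paper's proof: show that $\eta$ is an eigenvector of $\mathcal{L}$ by checking the eigen-condition across the node classes $\Omega$, $\{p,q\}$, and $\{s_1,s_2,t_1,t_2\}$, then invoke Proposition~\ref{singPro}~ii). One minor caution: the design does \emph{not} in general enforce $d_{s_1}=d_{s_2}$ (sub-cases i) and ii) of Step~1 may be applied differently to $v_{s_1}$ and $v_{s_2}$, yielding degrees $\sigma+3$ and $\sigma+2$ respectively), but as you correctly stress in the next clause, what actually matters is that the eigen-condition at each of the four nodes evaluates to the same quantity---the paper computes it as $(\sigma+4)\eta_k$, where $\sigma$ is the common number of $\Omega$-neighbors---so your plan goes through once that unnecessary degree claim is dropped.
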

\begin{proof}
The $\eta$ in (\ref{eielem}) is shown to be an eigenvector of $\mathcal{L}.$ The result will then follows from Proposition \ref{singPro}. 

For $k=s_1,s_2,$ if the neighbor nodes of $v_{k}$ to $\{v_p,v_q\}$ and $\{v_{t_1},v_{t_2}\}$ are  designed according to i) of Step 1, there are three neighbors of $v_{k}$ in $\{v_p,v_q,v_{t_1},v_{t_2}\}.$ In addition, denote by $\sigma$ the number of neighbor nodes of $v_{k}$ in $\mathcal{V}\setminus\{v_p,v_q,{v_{s_1}},{v_{s_2}},{v_{t_1}},{v_{t_2}}\}.$ Then the node degree of $v_{k}$ is $d_k=\sigma+3.$ Note that b) of Step 4 means that the value of $\sigma$ remains unchanged for each $v_{k}, k=s_1,s_2.$  
Since all the elements of $\eta$ are zero except $\eta_{s_1},\eta_{s_2},\eta_{t_1},\eta_{t_2};$ 
$\sum_{i \in {\mathcal{N}_{k}}} {{\eta _i}}=\eta_t,$ where $t=t_1~\mbox{or}~t_2$ depending on the specific situation of item i). Then $\eta_k=-\eta_t$ yields that 
\begin{align}
{d_k}{\eta _k} - \sum\limits_{i \in {\mathcal{N}_{k}}} {{\eta _i}}  
=&(d_k+1) {\eta_k}\nonumber\\
=& (\sigma+4) \eta_k, \quad k=s_1,s_2. \label{case2equ}
\end{align}
If the neighbors of $v_{s_k}$ are designed via ii) of Step 1, $d_k=\sigma+2.$ In this case, 
$ \sum_{i \in {\mathcal{N}_{k}}} {{\eta _i}}=\eta_{t_1}+\eta_{t_2}.$ By (\ref{eielem}),
$
{d_k}{\eta _k}-\sum_{i \in {\mathcal{N}_{k}}} {{\eta _i}}  = {d_k}{\eta _k} +2 \eta _k 
=(\sigma+4) {\eta _k}, \,k=s_1,s_2.
$
For $k=t_1,t_2,$ the neighbor nodes of $\{v_{t_1},v_{t_2}\}$ to $\{v_p, v_q\}$ and $\{v_{s_1},v_{s_2}\}$ is designed in the same way as that of $\{v_{s_1},v_{s_2}\}$ to $\{v_p, v_q\}$ and $\{v_{t_1},v_{t_2}\}.$ In addition, Step 4 means that the aforementioned $\sigma$ is also the number of neighbors of $v_{k}$ in $\mathcal{V}\setminus\{v_p,v_q,{v_{s_1}},{v_{s_2}},{v_{t_1}},{v_{t_2}}\}.$ Then the proof can be carried out in the same manner as the case of $k=s_1,s_2.$ Accordingly
\begin{equation}
{d_k}{\eta _k}-\sum\limits_{i \in {\mathcal{N}_{k}}} {{\eta _i}} =(\sigma+4) {\eta _k},\quad k=t_1,t_2.\label{commoneigen}
\end{equation}
For $k=p,q,$ it follows from Step 3 that 
\begin{equation}
\sum\limits_{i \in {\mathcal{N}_k}} {{\eta _i}}  = \sum\limits_{i \in {\mathcal{N}_{kf}}} {{\eta _i}}=\eta_s+\eta_t,\quad k=p,q,\label{kweip}
\end{equation}
where $s=s_1~\mbox{or}~s_2;$ $t=t_1~\mbox{or}~t_2$ depending on the specific situation of Step 3. By (\ref{eielem}), $\eta_s=-\eta_t.$ Then (\ref{kweip}) yields $\sum_{i \in {\mathcal{N}_k}} {{\eta _i}}  =0.$ By $\eta_k=0,$ (\ref{commoneigen}) also holds for $k=p,q.$  

For $k\in\Omega,$ Step 4 means 
$\sum_{i \in {\mathcal{N}_k}} {{\eta _i}}  = {\eta _{{s_1}}} + {\eta _{{s_2}}} + {\eta _{{t_1}}} + {\eta _{{t_2}}} = 0$ if b) is involved; 
and
$\sum_{i \in {\mathcal{N}_k}} {{\eta _i}}  =0$ if b) is not involved. 
This together with $\eta_k=0$ also leads to (\ref{commoneigen}) for $k\in\Omega.$
The above arguments show that $\eta$ is an eigenvector of $\mathcal{L}.$ 

For Case II, the above proof for Case I needs a bit of alteration. Below the discussion focuses on the situation that $v_{s_1}$ is a neighbor of $v_{s_2}.$ The result can be shown in the same way when $v_{t_1}$ is a neighbor of $v_{t_2}.$     
For $k=s_1,s_2,$ the node degree of $v_k$ is changed to be $\sigma+4$ and $ \sum_{i \in {\mathcal{N}_{k}}} {{\eta _i}}=0$ since there is an additional edge between $v_{s_1}$ and $v_{s_2}.$ Thus
(\ref{commoneigen}) holds for $k=s_1,s_2.$ 
If the neighbors of $v_{s_k}$ are designed according to ii) of Step 1, $d_k=\sigma+3.$ In this case, 
$ \sum_{i \in {\mathcal{N}_{k}}} {{\eta _i}}=\eta_t,$ where $t=t_1~\mbox{or}~t_2$ depending on the specific construction.   By (\ref{eielem}), (\ref{commoneigen}) still holds.
For $k=t_1,t_2,$ the proof is in the same manner as $k=s_1,s_2.$
The remaining proof is the same as Case I. This completes the proof. 
\end{proof}

\begin{example}
The example is to illustrate Theorem \ref{anotherconthm}. 
\begin{figure}[H]
\begin{center}
\subfigure[]{\includegraphics[width=2.19cm]{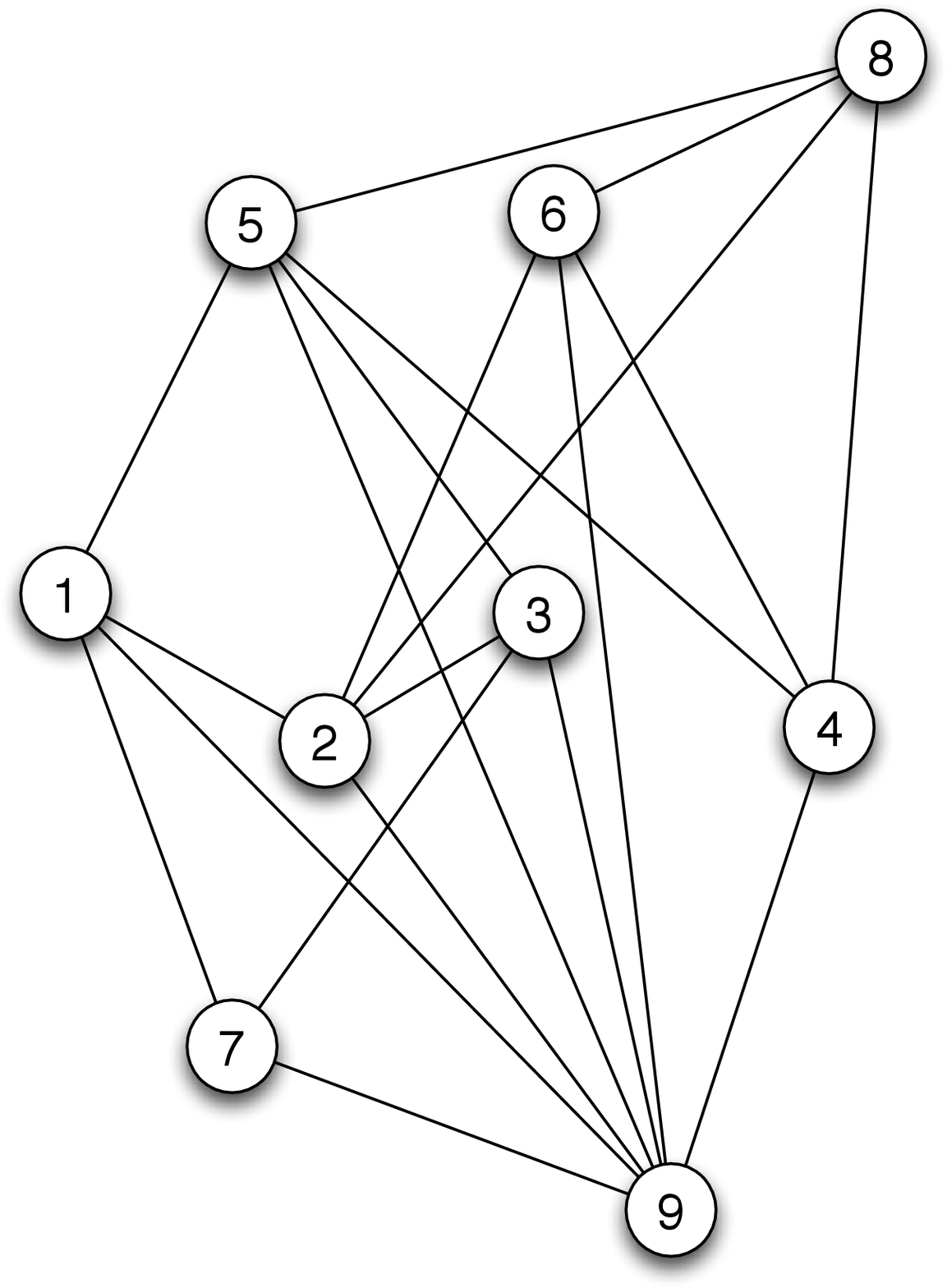}}
\subfigure[]{\includegraphics[width=2.19cm]{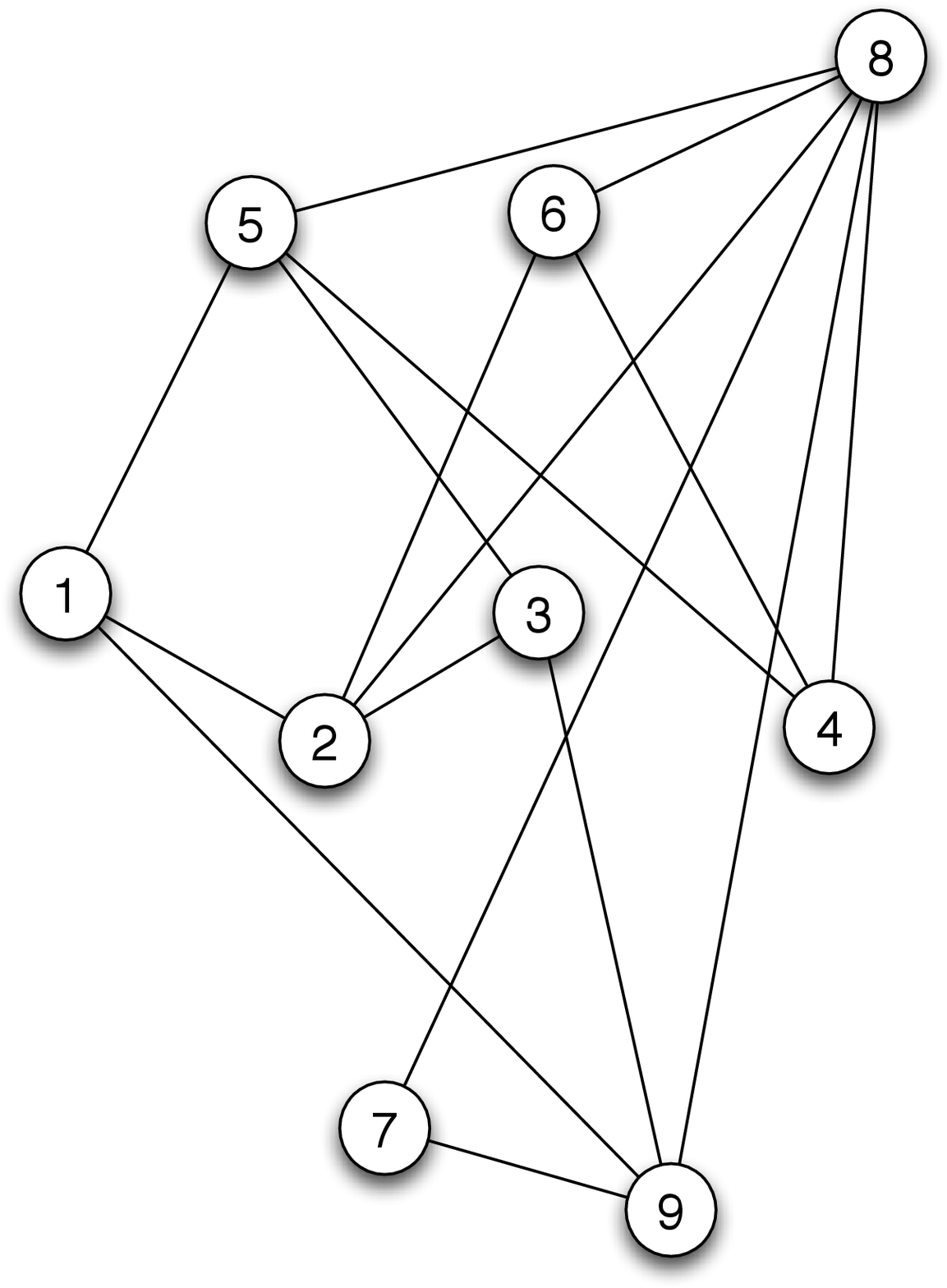}}
\subfigure[]{\includegraphics[width=2.19cm]{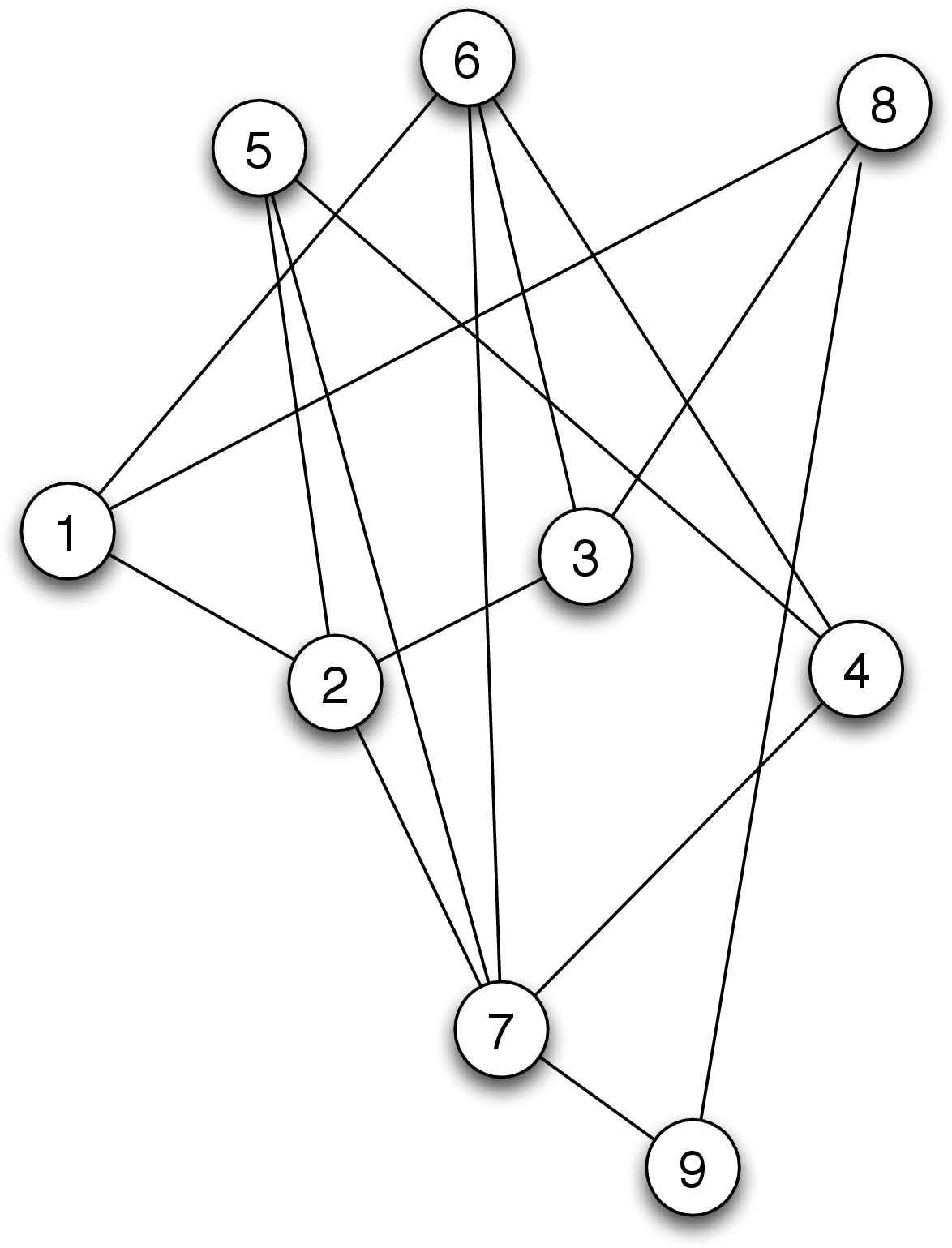}}
\subfigure[]{\includegraphics[width=2.19cm]{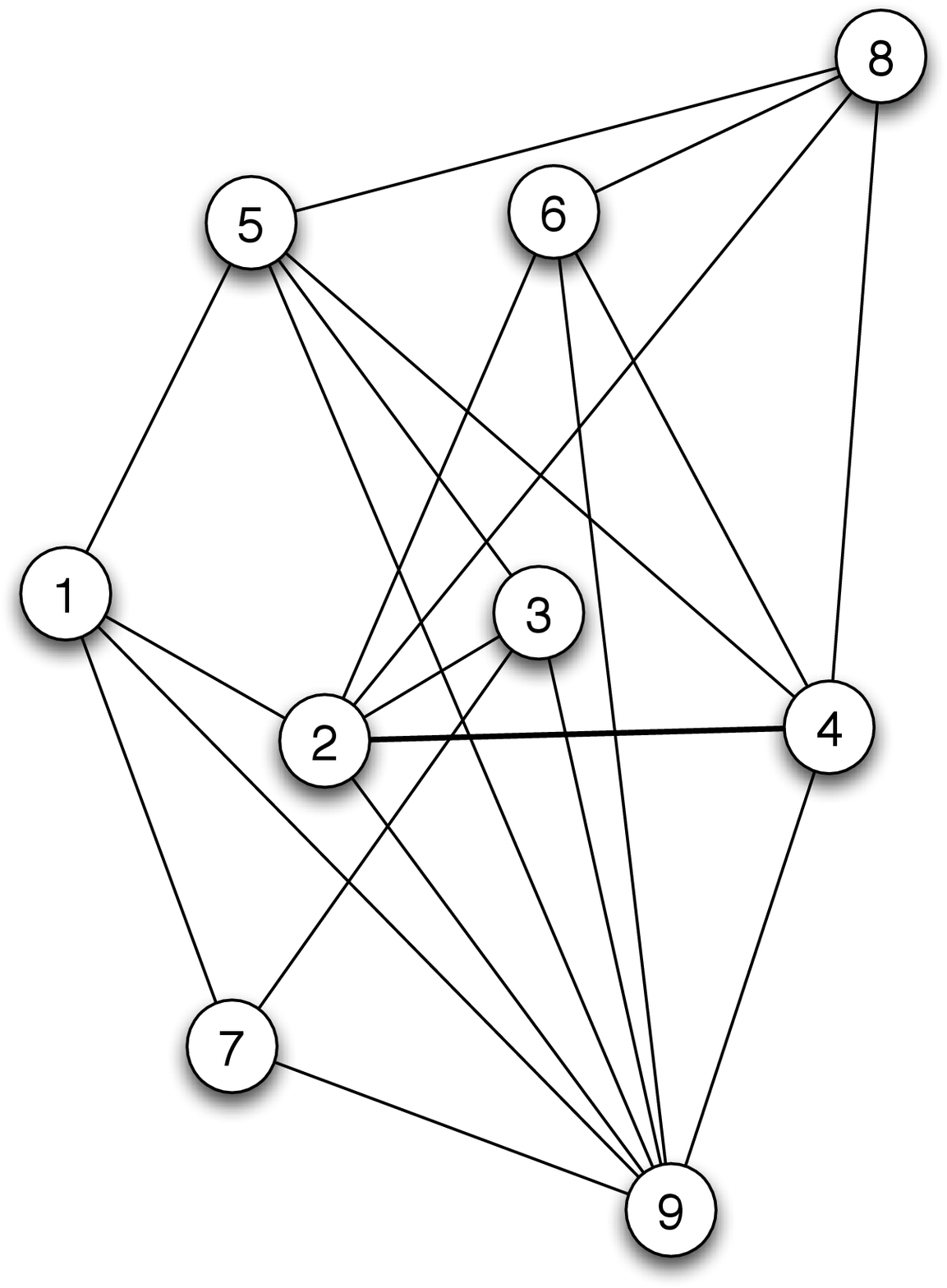}}
\subfigure[]{\includegraphics[width=2.19cm]{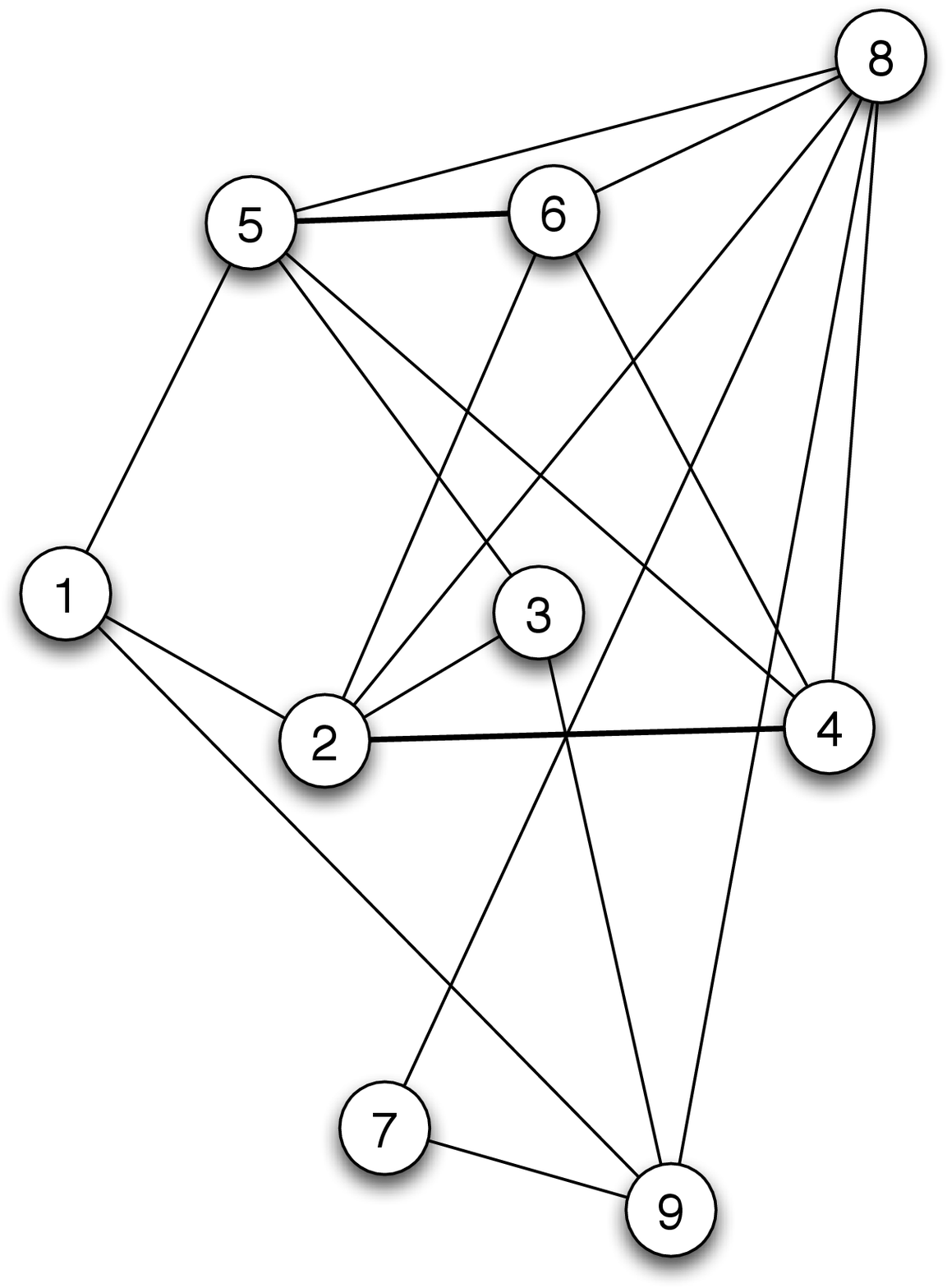}}
\subfigure[]{\includegraphics[width=2.19cm]{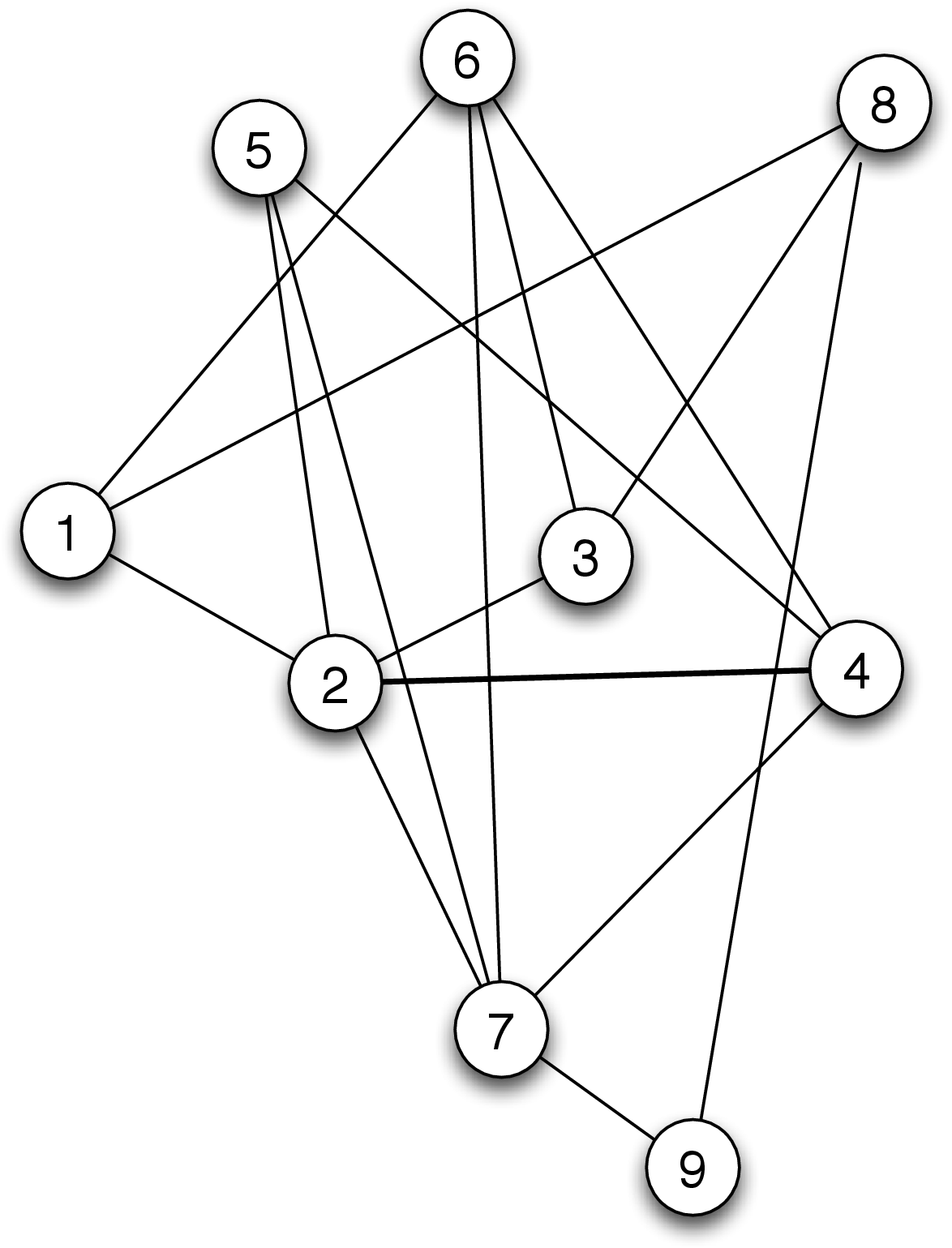}}
\caption{(a)(b)(c) and (d)(e)(f) are designed according to Case I and II, respectively, with QCD nodes $v_2,v_4,v_5,v_6.$}
\label{learepla2}
\end{center}
\end{figure}
In these graphs, $p=1,q=3;$ $s_1=2,s_2=4,t_1=5,t_2=6.$ In (a), $v_{s_1}=v_2$ is a neighbor of both $v_p=v_1$ and $v_q=v_3;$ and it is incident to $v_6,$ i.e., only one of $v_{t_1}=v_5$ and $v_{t_2}=v_6.$  This corresponds to case i) of Step 1. Similarly, $v_{s_2}$ corresponds to ii) of Step 1. These arguments exhibit the neighbor topology structure of $\{v_{s_1}, v_{s_2}\}$ to $\{v_p,v_q\}$  and $\{v_{t_1}, v_{t_2}\}.$ That of $\{v_{t_1}, v_{t_2}\}$ to $\{v_p,v_q\}$  and $\{v_{s_1}, v_{s_2}\}$ can be illustrated in the same manner. 
For graph (a), $\sigma=2$ since the number of neighbors of each $v_{s_k} (k=1,2)$ in $\mathcal{V}\setminus\{v_p,v_q,{v_{s_1}},{v_{s_2}},{v_{t_1}},{v_{t_2}}\}$ is 2.  The neighbor topology structures of $v_7,v_8,v_9$ are designed in accordance with Step 4. For $k=p,q,$ exactly one of $v_{s_1}=v_2,v_{s_2}=v_4$ ($v_2$ here) and one of $v_{t_1}=v_5,v_{t_2}=v_6$ ($v_5$ here) are included in the neighbor set of $v_k.$ This is consistent with Step 3. It can be verified that $\eta=[0,-0.5,0,-0.5,0.5,0.5,0,0,0]^T$ is an eigenvector of $\mathcal{L}$ of graph (a) associated with  eigenvalue $\sigma+4=6.$ For graph (b),  $\sigma=1$ and $\eta=[0,0.5,0,0.5,-0.5,-0.5,0,0,0]^T$ is an eigenvector of $\mathcal{L}$ of graph (b) associated with eigenvalue $\sigma+4=5.$ For graph (c), $\sigma=1$ as well, and $\eta=[0,-0.5,0,-0.5,0.5,0.5,0,0,0]^T$ is an eigenvector of $\mathcal{L}$ associated with eigenvalue 5.
Hence for graphs (a)(b)(c), the system is not controllable whenever leaders are selected from $\mathcal{V}\setminus \{{v_{s_1}},{v_{s_2}},{v_{t_1}},{v_{t_2}}\}.$ For graphs (d)(e)(f), there is a similar explanation. 
\end{example}

\subsubsection{QCD nodes of graphs of five vertices}

Consider an eigenvector $\bar y$ of $\mathcal{L}$ with $\bar y = [0, \ldots ,{y_{s_1}},\ldots,$ ${y_{s_2}}, \ldots ,{y_{t_1}},$ $\ldots,y_{t_2}, \ldots ,0]^T,$ $y_{s_1}, y_{s_2}, y_{t_1}, y_{t_2}\neq 0$ and all the other elements being zero. $\bar y$ does not necessarily meet  (\ref{eielem}) and each entry of it ought to satisfy the eigen-condition. For each $k\neq s_1, s_2, t_1, t_2;$ $\mathcal{N}_{kf}$ has five cases:  \\
a) $s_1, s_2, t_1, t_2\in\mathcal{N}_{kf};$\\
b) any three and only three of $s_1,s_2,t_1,t_2$ belong to $\mathcal{N}_{kf};$ \\
c) any two and only two of $s_1,s_2,t_1,t_2$ belong to $\mathcal{N}_{kf};$ \\
d) any one and only one of $s_1,s_2,t_1,t_2$ belongs to $\mathcal{N}_{kf};$ \\
e) none of $s_1,s_2,t_1,t_2$ belongs to $\mathcal{N}_{kf}.$ 

\begin{figure}[H]
\begin{center}
\subfigure[]{\includegraphics[width=1.84cm]{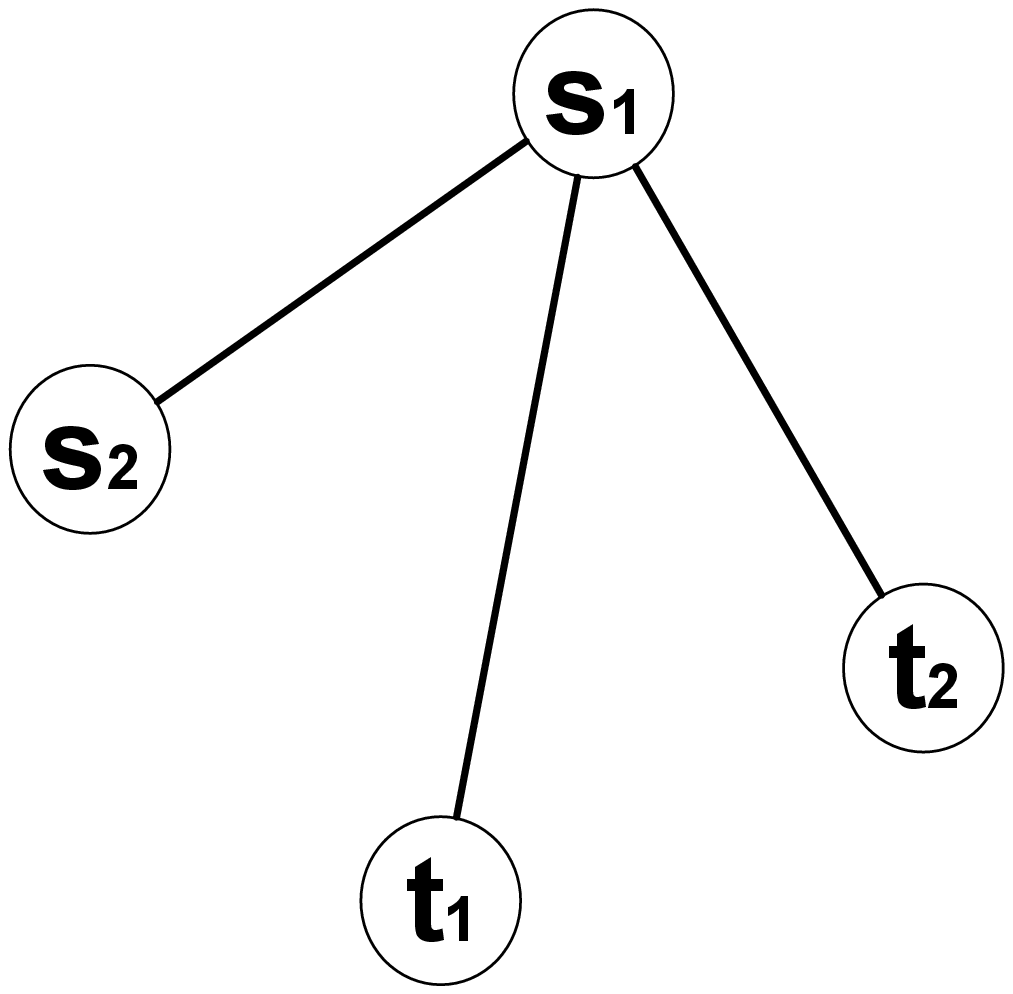}}
\subfigure[]{\includegraphics[width=1.84cm]{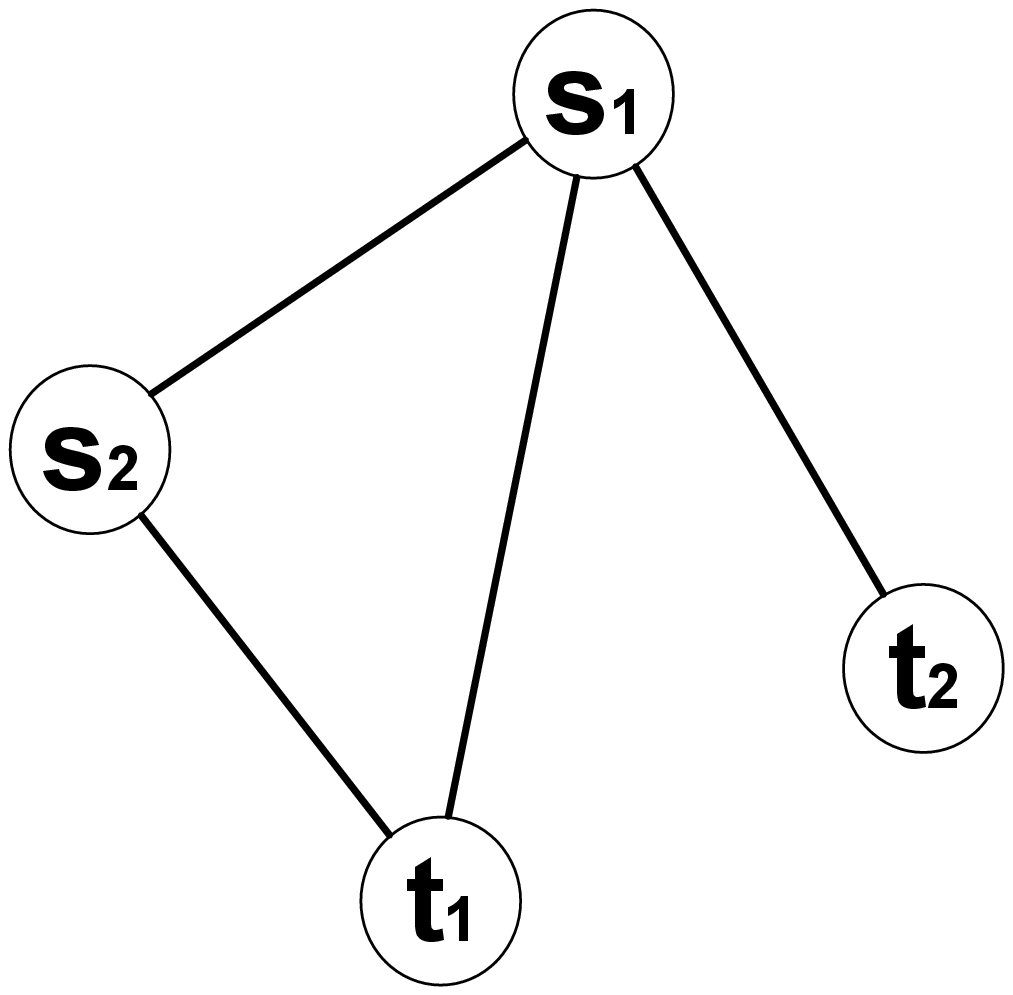}}
\subfigure[]{\includegraphics[width=1.84cm]{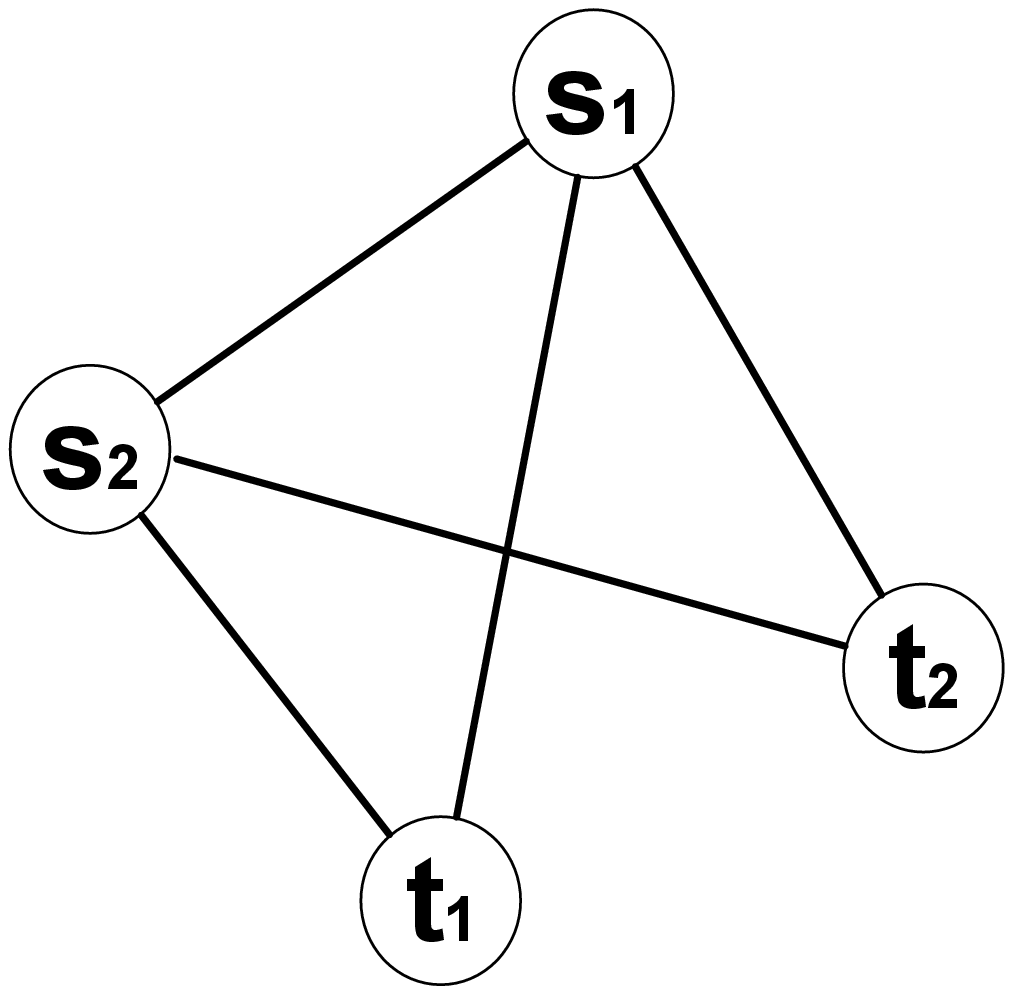}}
\subfigure[]{\includegraphics[width=1.84cm]{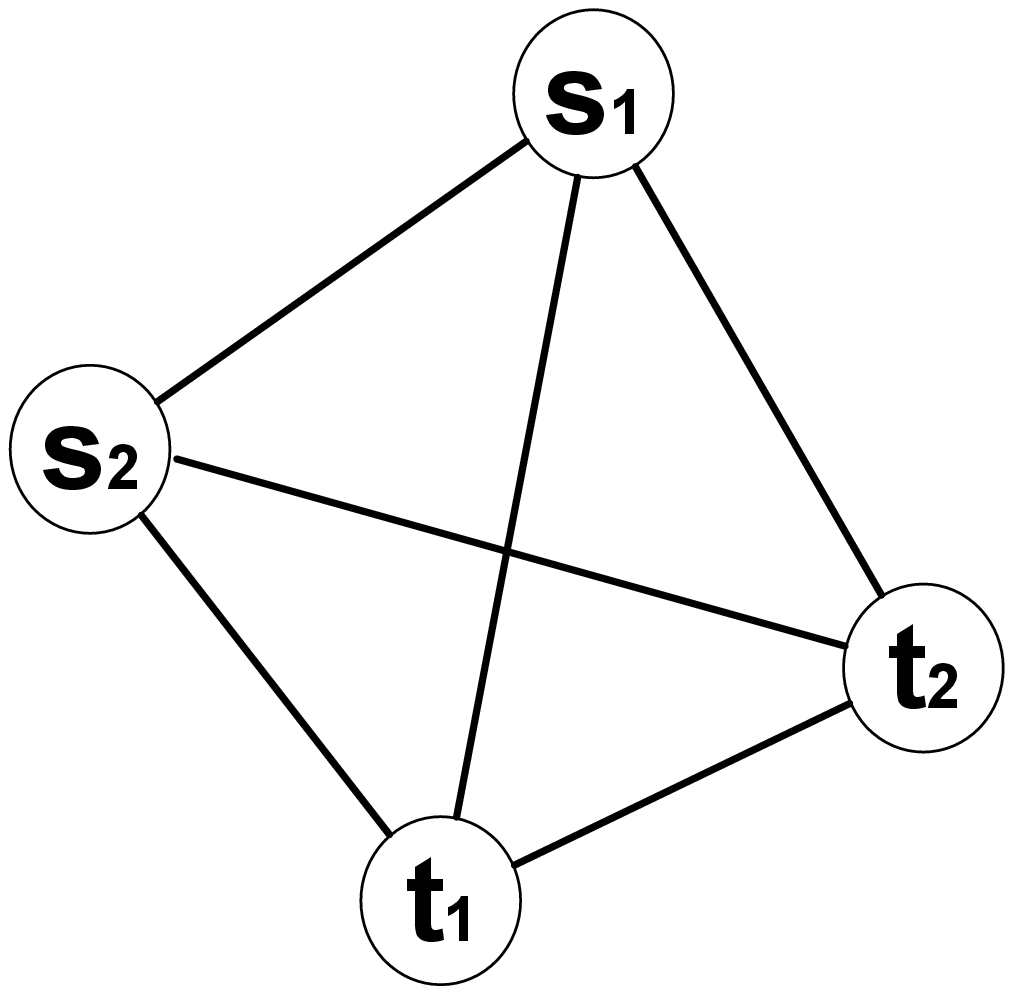}}
\subfigure[]{\includegraphics[width=1.84cm]{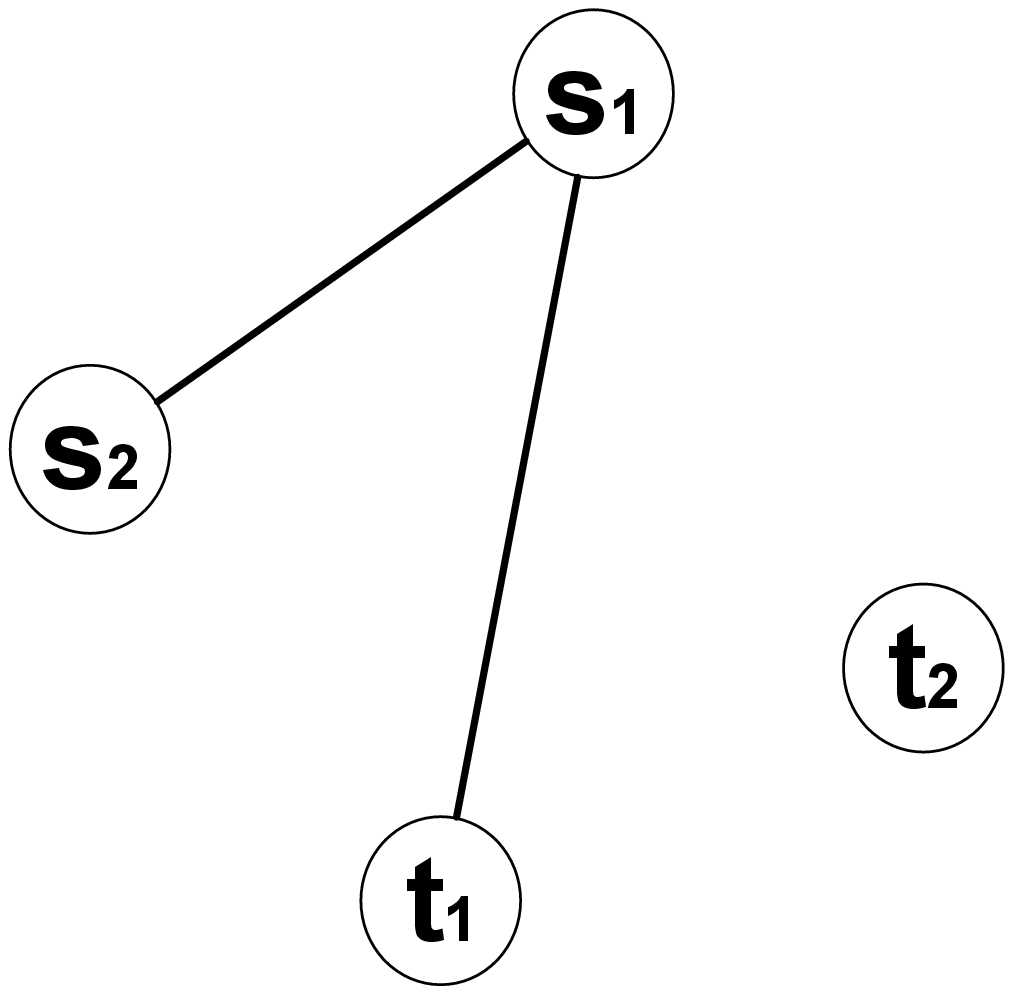}}\\
\subfigure[]{\includegraphics[width=1.832cm]{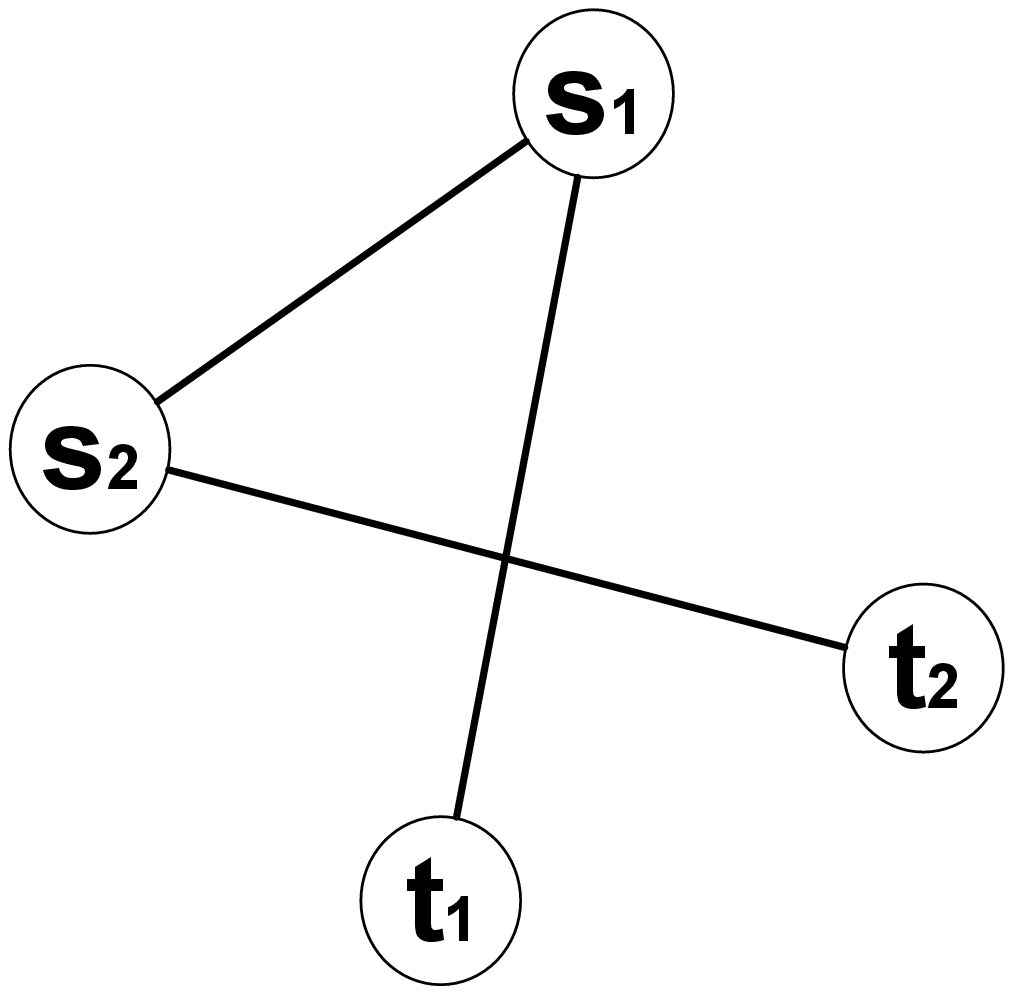}}
\subfigure[]{\includegraphics[width=1.832cm]{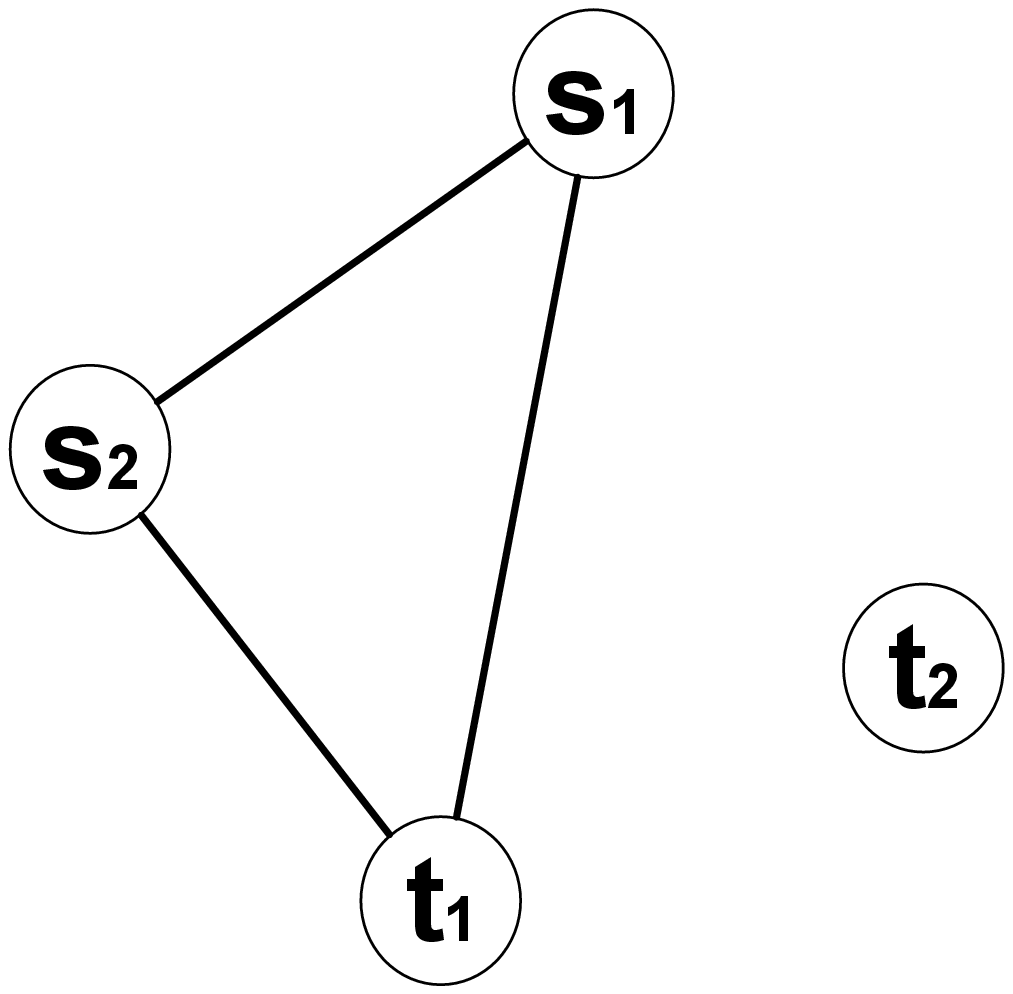}}
\subfigure[]{\includegraphics[width=1.832cm]{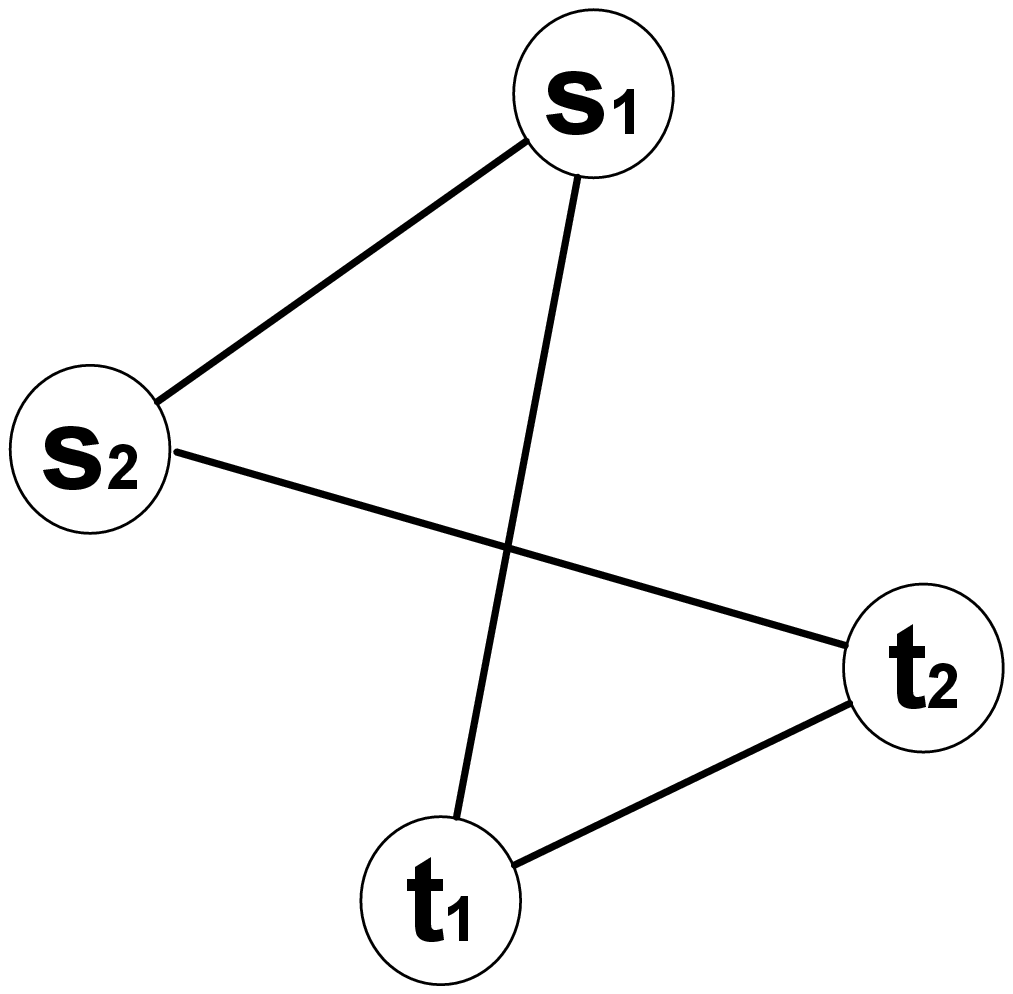}}
\subfigure[]{\includegraphics[width=1.832cm]{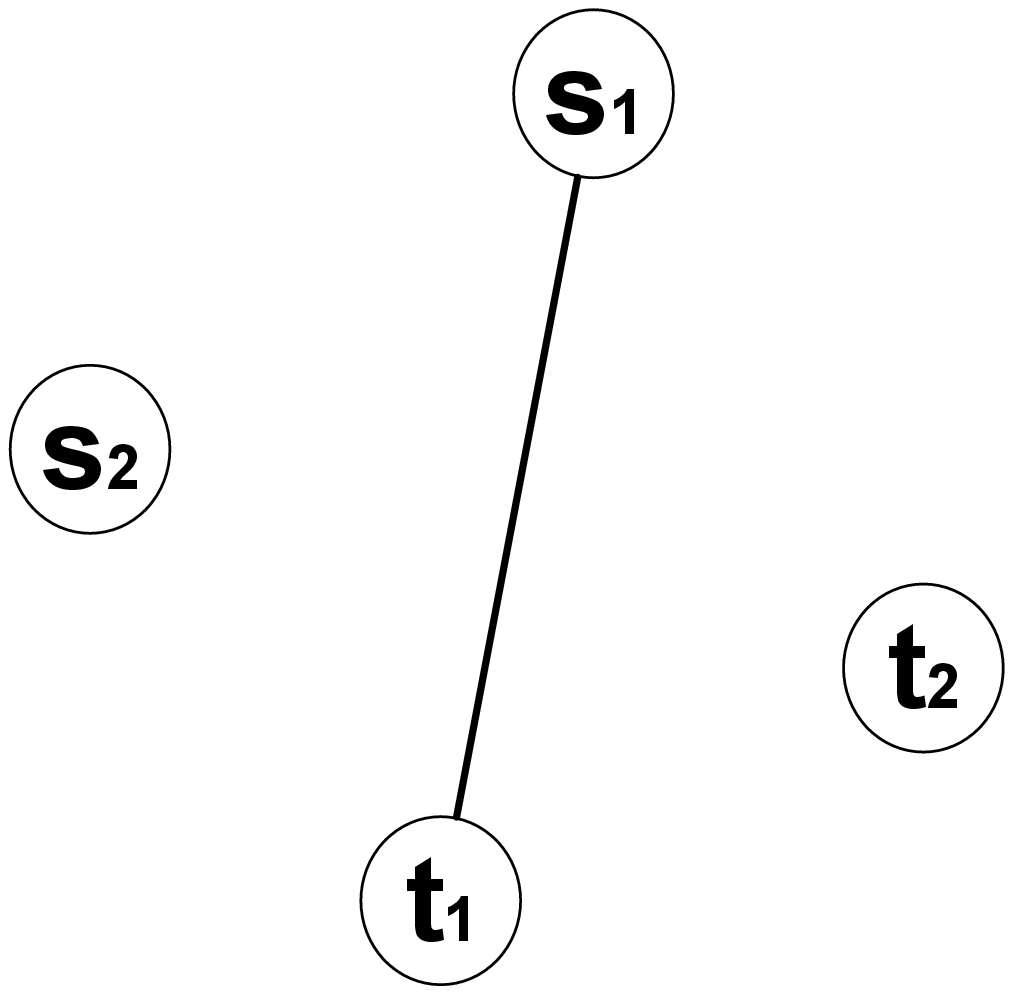}}
\subfigure[]{\includegraphics[width=1.832cm]{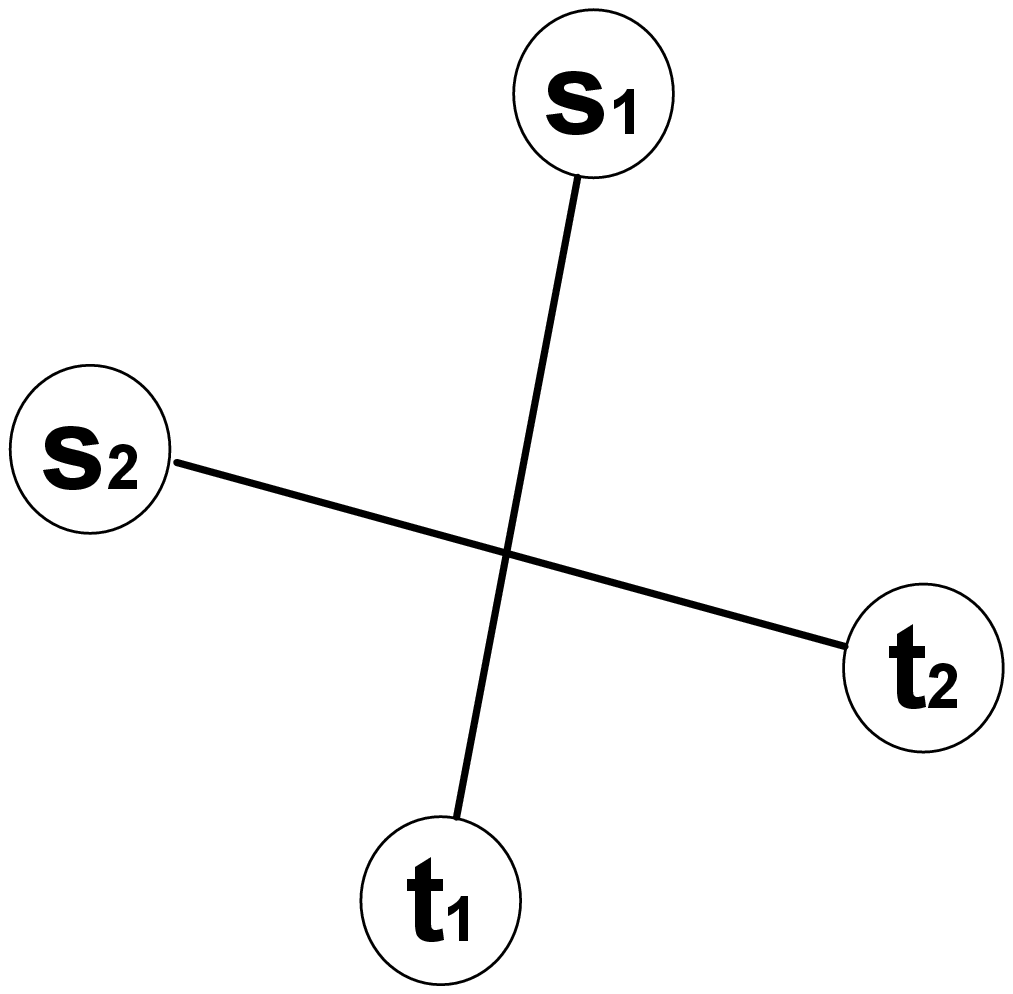}}
\subfigure[]{\includegraphics[width=1.832cm]{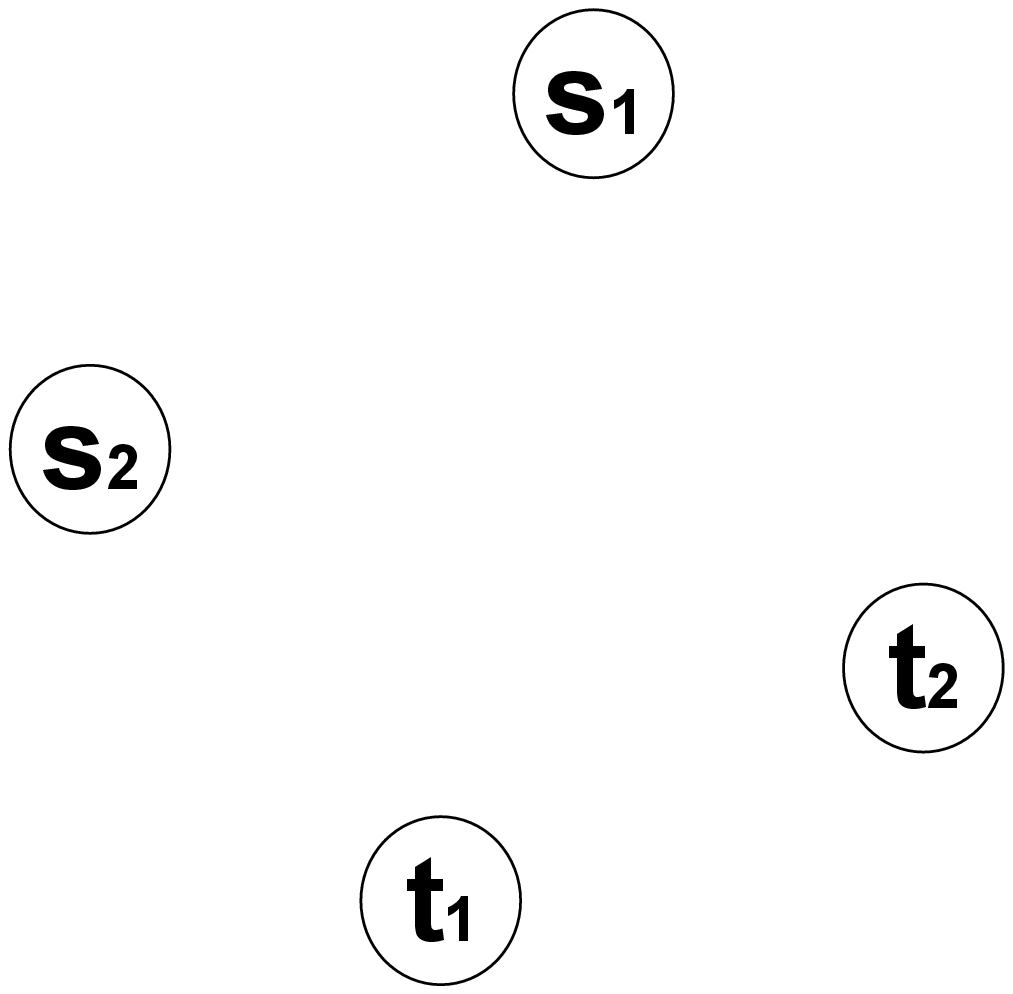}}
\caption{All topology structures consisting of $s_1, s_2, t_1, t_2.$}
\label{Fcdtopo4}
\end{center}
\end{figure}

\begin{proposition}\label{progeneve}
Suppose leaders are selected from $\mathcal{V}\setminus\{v_{s_1},v_{s_2},v_{t_1},v_{t_2}\}$ and $\bar y$ is an eigenvector of $\mathcal{L},$ then 
\begin{itemize}
\item for any given $k\neq s_1,s_2,t_1,t_2;$ $\mathcal{N}_{kf}$ conforms to one and only one of the following two situations:\\
 {\emph{i)}} at least one of cases a)  c)  e) occurs;\\
{\emph{ii)}} at least one of cases b)  c)  e) occurs.

Moreover, if b) arises, there are at most three different $k\neq s_1,s_2,t_1,t_2$ with each $\mathcal{N}_{{k}f}$ containing a different set of three indices of $\{s_1,s_2,t_1,t_2\};$ and so is to c) with each set containing two indices of $\{s_1,s_2,t_1,t_2\}.$
\item for $k= s_1, s_2, t_1, t_2;$ all possible topologies consisting of $v_{s_1}, v_{s_2}, v_{t_1}, v_{t_2}$ are depicted in Fig. \ref{Fcdtopo4}.
\end{itemize}
\end{proposition}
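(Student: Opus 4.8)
The plan is to follow the template of Lemmas~\ref{doulem} and~\ref{trilem}, exploiting that $\bar y$ vanishes off $\{s_1,s_2,t_1,t_2\}$ and that the leaders lie outside this set, so that $\sum_{i\in\mathcal{N}_{kl}} y_i=0$ for every $k$. First I would dispose of the nodes $k\neq s_1,s_2,t_1,t_2$. For such $k$ one has $y_k=0$, hence the eigen-condition $d_k y_k-\sum_{i\in\mathcal{N}_k} y_i=\lambda y_k$ reduces, for every $\lambda$, to $\sum_{i\in\mathcal{N}_{kf}} y_i=0$; since only $y_{s_1},y_{s_2},y_{t_1},y_{t_2}$ are nonzero, this is the sum of those $y$'s whose index lies in $\mathcal{N}_{kf}$. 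Sorting by how many of $s_1,s_2,t_1,t_2$ belong to $\mathcal{N}_{kf}$ gives cases a)--e): case d) is immediately impossible, as it would set a single nonzero coordinate to $0$; case e) is automatic; and a), b), c) force, respectively, the full sum, a triple sum, and a pair sum to vanish.

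Next I would establish the dichotomy i)/ii) from the incompatibility of a) and b). An instance of a) gives $y_{s_1}+y_{s_2}+y_{t_1}+y_{t_2}=0$, while any instance of b) gives a three-term sub-sum equal to $0$; subtracting forces the omitted coordinate to vanish, a contradiction, so a) and b) cannot both occur and the realized cases lie entirely among a), c), e) or entirely among b), c), e). The two counting bounds I would prove under the hypothesis that b) occurs. If all four triples were realized, summing their four equations yields $3(y_{s_1}+y_{s_2}+y_{t_1}+y_{t_2})=0$, whereupon each single equation forces its omitted coordinate to equal the (now vanishing) total, contradicting nonvanishing; hence at most three triples occur. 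For the pairs, fixing a realized triple, say $y_{s_1}+y_{s_2}+y_{t_1}=0$, any pair internal to $\{s_1,s_2,t_1\}$ would combine with it to annihilate the third of those indices, so only the three pairs containing the complementary index can be realized; this caps the distinct realized pairs at three as well. The recurring mechanism is that each realized set is a linear relation on $(y_{s_1},y_{s_2},y_{t_1},y_{t_2})$ and that an excess of such relations is incompatible with all four coordinates being nonzero.

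For the second bullet I would analyze $k\in\{s_1,s_2,t_1,t_2\}$. Again $\sum_{i\in\mathcal{N}_{kl}} y_i=0$ removes the leader terms, so each of the four eigen-equations reads $d_k y_k-\sum_{i\in\mathcal{N}_{kf}} y_i=\lambda y_k$, where the neighbor sum now collects the contributions of whichever of the other three special indices are adjacent to $k$ (the external neighbors forced by cases a)/e) contributing $0$). The internal adjacency among the four nodes is an arbitrary graph on four vertices, and up to isomorphism there are exactly eleven of these, matching the eleven panels of Fig.~\ref{Fcdtopo4}; completeness of the list is therefore the enumeration of graphs on four vertices. For each panel I would substitute the value relations obtained above and record the degree equalities and the common eigenvalue required for the four equations to hold simultaneously, exactly as the per-topology data were tabulated in Lemma~\ref{trilem}.

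The hard part will be this final step. Unlike the triple case, the four-node problem couples the internal edges, the external cases a)/e), and the single shared eigenvalue, so the bookkeeping is markedly heavier, and care is needed to ensure that each retained configuration actually admits one common $\lambda$ across all four nodes rather than merely satisfying the equations node-by-node. The two levers that keep this manageable are the relabeling symmetry of the four indices, which collapses the $2^{\binom{4}{2}}$ labeled patterns to the eleven isomorphism types, and the value relations from the first two steps, which pin $(y_{s_1},y_{s_2},y_{t_1},y_{t_2})$ up to scale and discard most candidates before any eigenvalue is computed.
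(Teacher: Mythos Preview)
Your approach matches the paper's almost exactly: reduce to $\sum_{i\in\mathcal N_{kf}}y_i=0$ for $k\notin\{s_1,s_2,t_1,t_2\}$, split into cases a)--e), rule out d), show a) and b) are incompatible, and bound the number of distinct triples (resp.\ pairs) by three via the linear relations they impose on $(y_{s_1},y_{s_2},y_{t_1},y_{t_2})$. Your summation argument for ``at most three triples'' is a bit cleaner than the paper's pairwise chaining, and your reading of the pair bound as conditional on b) having occurred is the right one (a pair internal to a realized triple would kill a coordinate), whereas the paper just says ``the same discussion as circumstance b)''.

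The one place you overshoot is the second bullet. The proposition is only asserting that the induced subgraph on $\{v_{s_1},v_{s_2},v_{t_1},v_{t_2}\}$ is, up to relabeling, one of the eleven graphs on four vertices shown in Fig.~\ref{Fcdtopo4}; the paper's proof is literally one sentence pointing to the enumeration (``generated by following the same discussion as Case~II in the proof of Lemma~\ref{trilem}''). The per-topology bookkeeping you describe---checking degree equalities and the existence of a common $\lambda$ across all four nodes---is \emph{not} part of this proposition; it is precisely the content of Lemma~\ref{fivenolem} and the case analysis that follows it. So your ``hard part'' is real, but it belongs to the next result, not this one; for Proposition~\ref{progeneve} itself you can stop after noting that there are eleven isomorphism types of graphs on four vertices.
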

\begin{proof}
Consider $k\neq s_1, s_2, t_1, t_2$ and $k=s_1, s_2, t_1, t_2.$ 
In case $k\neq s_1, s_2, t_1, t_2,$ 
$
\sum_{i \in {\mathcal{N}_{kf}}} {{y_i}}  = 0
$ 
which can be shown in the same way as (\ref{trireequaio}). 
If circumstance a) arises, the same arguments as (\ref{tripleequ}) yield 
\begin{equation}
y_{s_1}+y_{s_2}+y_{t_1}+y_{t_2}=0.\label{Fcdeqn1}
\end{equation}
If circumstance b) arises and $s_1,s_2,t_1\in\mathcal{N}_{kf},$ it follows from 
$
\sum_{i \in {\mathcal{N}_{kf}}} {{y_i}}  = 0
$
that 
\begin{equation}
y_{s_1}+y_{s_2}+y_{t_1}=0. \label{Fcdeqn2}
\end{equation}
Situations (\ref{Fcdeqn1}), (\ref{Fcdeqn2}) cannot occur simultaneously, or else, $y_{t_2}=0.$ Similarly, if another  $\mathcal{N}_{kf} (k\neq s_1, s_2, t_1, t_2)$ contains, say  
$s_2,t_1,t_2,$ one has
\begin{equation}
y_{s_2}+y_{t_1}+y_{t_2}=0. \label{Fcdeqn3}
\end{equation}
(\ref{Fcdeqn2}) and (\ref{Fcdeqn3}) lead to $y_{s_2}+y_{t_1}=-y_{s_1}=-y_{t_2}.$ If there is the third $k\neq s_1, s_2, t_1, t_2$ with its $\mathcal{N}_{kf}$ containing, say $s_1,s_2,t_2,$ one has  
$
y_{s_1}+y_{s_2}+y_{t_2}=0. 
$
Combining this equation with (\ref{Fcdeqn2}) yields $y_{s_1}+y_{s_2}=-y_{t_1}=-y_{t_2}.$
If there is the fourth $k\neq s_1, s_2, t_1, t_2$ with $s_1,t_1,t_2\in\mathcal{N}_{kf},$ then
$
y_{s_1}+y_{t_1}+y_{t_2}=0. 
$
This together with (\ref{Fcdeqn3}) yields $y_{s_1}=y_{s_2}.$ Thus, if the above four situations arise at the same time, then $y_{s_1}=y_{s_2}=y_{t_1}=y_{t_2}=0,$ which contradicts to the assumption. Therefore, at most three of the above four situations occur. 

If circumstance c) arises, there are totally $C_4^2=6$ situations, i.e., $s_1,s_2\in\mathcal{N}_{kf}; s_1,t_1\in\mathcal{N}_{kf}; s_1,t_2\in\mathcal{N}_{kf};$ $s_2,t_1\in\mathcal{N}_{kf};  s_2,t_2\in\mathcal{N}_{kf}; t_1, t_2\in\mathcal{N}_{kf}.$ 
The same discussion as circumstance b) shows that the eigen-condition allows at most three of the above  situations occur. The circumstance d) cannot occur. This follows from the same discussion as c) of the Case I of TCD nodes. For circumstance e), the special form of $\bar y$ means that the condition $\sum_{i \in {\mathcal{N}_{kf}}} {{y_i}}  = 0$ is always satisfied. Thus for any given $k\neq s_1,s_2,t_1,t_2,$ $\mathcal{N}_{kf}$ conforms to one and only one of the above two cases i) and ii). 

In case 
$k= s_1, s_2, t_1, t_2,$ all possible topologies consisting of $s_1, s_2, t_1, t_2$ are generated by following the same discussion as Case II in the proof of Lemma \ref{trilem} , which are depicted in Fig.\ref{Fcdtopo4}. \hfill
\end{proof}


\begin{remark}
Proposition \ref{progeneve} greatly reduces the number of graphs required in the identification of QCD nodes. In particular, it contributes to a complete characterization of QCD nodes for graphs consisting of five nodes. To this end, the following definition and lemma are also needed. 
\end{remark}

\begin{definition}
A graph is said to be designed from (a) of Fig. \ref{Fcdtopo4} if the topology structure of $v_{s_1}, v_{s_2}, v_{t_1}, v_{t_2}$ accords with (a) and the graph is obtained by adding edges between $\{v_{s_1}, v_{s_2}, v_{t_1}, v_{t_2}\}$ and $\mathcal{V}\setminus\{v_{s_1},v_{s_2},v_{t_1},v_{t_2}\}.$ The definition applies to other topologies of Fig. \ref{Fcdtopo4}.  \end{definition}

\begin{lemma}\label{fivenolem}
Suppose $\bar y$ is an eigenvector of a graph designed from (a) of Fig. \ref{Fcdtopo4}. The following assertions hold:
\begin{itemize}
\item if the situation a) of Proposition \ref{progeneve} arises, then 
\begin{equation}
\frac{1}{{{d_{{t_2}}} - {d_{{s_1}}} - 1}} + \frac{1}{{{d_{{t_1}}} - {d_{{s_1}}} - 1}} + \frac{1}{{{d_{{s_2}}} - {d_{{s_1}}} - 1}} = -1. \label{situaeqn}
\end{equation}

\item if situation b) arises with a $v_k\in\mathcal{V}\setminus\{v_{s_1}, v_{s_2}, v_{t_1}, v_{t_2}\} $ incident to only three of $v_{s_1}, v_{s_2}, v_{t_1},v_{t_2},$ say $v_{s_1}, v_{s_2}, v_{t_1},$ then one of the following four equations must occur:
\begin{equation}
\lambda _1 = \tilde \lambda _1, \lambda _1 = \tilde \lambda _2, \lambda _2 = \tilde \lambda _1, \lambda _2 = \tilde \lambda _2, \label{foucases}
\end{equation}
where 
\begin{align}
{\lambda _{1,2}} =& \frac{{{d_{{t_1}}} + {d_{{s_2}}} + 2 \pm \sqrt {{{({d_{{s_2}}} - {d_{{t_1}}})}^2} + 4} }}{2} \label{root1}\\
{\tilde\lambda _{1,2}} =& \frac{{{d_{{s_1}}} + {d_{{t_2}}} + 1 \pm \sqrt {{{[({d_{{s_1}}} - {d_{{t_2}}}) + 1]}^2} + 4} }}{2}\label{root2}
\end{align}

\item if c) arises with a $v_k\in\mathcal{V}\setminus\{v_{s_1}, v_{s_2}, v_{t_1}, v_{t_2}\} $ incident to only two of $v_{s_1}, v_{s_2}, v_{t_1},$ $v_{t_2},$ say, $v_{s_1}, v_{s_2},$ then 
\begin{equation}
{d_{{s_1}}} - {d_{{s_2}}} = \frac{1}{{{d_{{t_1}}} - {d_{{s_2}}} - 1}} + \frac{1}{{{d_{{t_2}}} - {d_{{s_2}}} - 1}}.\label{sitaceqn}
\end{equation}
\end{itemize}
\end{lemma}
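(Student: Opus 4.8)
The plan is to exploit that, with the leaders chosen outside $\{v_{s_1},v_{s_2},v_{t_1},v_{t_2}\}$ and every remaining follower coordinate of $\bar y$ equal to zero, the identity $\mathcal{L}\bar y=\lambda\bar y$ collapses at the four special nodes to a reduced scalar system whose only off-diagonal coupling is the internal adjacency of topology (a). For (a) that internal structure attaches each of $v_{s_2},v_{t_1},v_{t_2}$ to $v_{s_1}$ and to nothing else among the four (a star centred at $v_{s_1}$), so the eigen-rows at the four nodes become $(d_{s_1}-\lambda)y_{s_1}=y_{s_2}+y_{t_1}+y_{t_2}$, $(d_{s_2}-\lambda)y_{s_2}=y_{s_1}$, $(d_{t_1}-\lambda)y_{t_1}=y_{s_1}$ and $(d_{t_2}-\lambda)y_{t_2}=y_{s_1}$. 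First I would record these four relations and note that, because $y_{s_1},y_{s_2},y_{t_1},y_{t_2}\neq0$, no factor $d_{\bullet}-\lambda$ occurring as a denominator below can vanish (e.g.\ $(d_{s_2}-\lambda)y_{s_2}=y_{s_1}\neq0$ forces $d_{s_2}\neq\lambda$). The remaining ingredient is the extra scalar constraint supplied by the external node realising each situation: writing the eigen-row of that node, whose own coordinate is zero, gives $\sum_{i\in\mathcal{N}_{kf}}y_i=0$, which reads $y_{s_1}+y_{s_2}+y_{t_1}+y_{t_2}=0$ in case a) (this is (\ref{Fcdeqn1})), $y_{s_1}+y_{s_2}+y_{t_1}=0$ in case b) (the analogue of (\ref{Fcdeqn2})), and $y_{s_1}+y_{s_2}=0$ in case c).

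For situation a) I would substitute $y_{s_2}+y_{t_1}+y_{t_2}=-y_{s_1}$ into the $v_{s_1}$ row to obtain $(d_{s_1}-\lambda)y_{s_1}=-y_{s_1}$, hence $\lambda=d_{s_1}+1$; solving the three leaf rows for $y_{s_2},y_{t_1},y_{t_2}$ in terms of $y_{s_1}$ and reinserting them in the constraint $\frac{1}{d_{s_2}-\lambda}+\frac{1}{d_{t_1}-\lambda}+\frac{1}{d_{t_2}-\lambda}=-1$ and replacing $\lambda$ by $d_{s_1}+1$ produces exactly the reciprocal identity (\ref{situaeqn}). Situation c) follows the same template one order lower: $y_{s_1}+y_{s_2}=0$ together with the $v_{s_2}$ row forces $\lambda=d_{s_2}+1$, and substituting the leaf expressions for $y_{t_1},y_{t_2}$ into the $v_{s_1}$ row yields (\ref{sitaceqn}).

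I expect situation b) to be the main obstacle, since here the constraint $y_{s_1}+y_{s_2}+y_{t_1}=0$ does not pin $\lambda$ to a single value but instead decouples the data into two independent scalar conditions. Eliminating $y_{s_1}$ between the $v_{s_2}$ and $v_{t_1}$ rows under the constraint (equivalently, requiring the associated $3\times3$ homogeneous determinant to vanish) gives $\lambda^2-(d_{s_2}+d_{t_1}+2)\lambda+(d_{s_2}d_{t_1}+d_{s_2}+d_{t_1})=0$, i.e.\ $(\lambda-d_{s_2}-1)(\lambda-d_{t_1}-1)=1$, whose roots are the $\lambda_{1,2}$ of (\ref{root1}); separately, using the constraint in the $v_{s_1}$ row to write $y_{t_2}=(d_{s_1}-\lambda+1)y_{s_1}$ and inserting this into the $v_{t_2}$ row gives $(\lambda-d_{t_2})(\lambda-d_{s_1}-1)=1$, whose roots are the $\tilde\lambda_{1,2}$ of (\ref{root2}). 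A single scalar $\lambda$ must satisfy both quadratics simultaneously, so it is a common root, which is precisely the disjunction (\ref{foucases}).

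The delicate points I would check carefully are, first, that the two quadratics factor as claimed — a short completion-of-the-square matching against (\ref{root1})--(\ref{root2}), using $(d_{s_2}-d_{t_1})^2+4$ and $[(d_{s_1}-d_{t_2})+1]^2+4$ for the discriminants — and second, that each elimination is legitimate because the relevant factors $d_{\bullet}-\lambda$ are nonzero, which is guaranteed by $y_{s_1},y_{s_2},y_{t_1},y_{t_2}\neq0$. I would also remark that the same manipulations, read backwards, show any common root indeed delivers a nonzero $\bar y$, so the stated conditions are tight; but only the forward direction is needed for the lemma as phrased.
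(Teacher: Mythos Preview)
Your proposal is correct and follows essentially the same route as the paper's proof: both derive the four eigen-rows $(d_{s_1}-\lambda)y_{s_1}=y_{s_2}+y_{t_1}+y_{t_2}$ and $(d_{\bullet}-\lambda)y_{\bullet}=y_{s_1}$ for $\bullet\in\{s_2,t_1,t_2\}$ from the star structure of topology (a), then combine them with the external constraint $y_{s_1}+y_{s_2}+y_{t_1}+y_{t_2}=0$, $y_{s_1}+y_{s_2}+y_{t_1}=0$, or $y_{s_1}+y_{s_2}=0$ to pin down $\lambda$ (cases a and c) or to obtain the two quadratics $(\lambda-d_{s_2}-1)(\lambda-d_{t_1}-1)=1$ and $(\lambda-d_{t_2})(\lambda-d_{s_1}-1)=1$ whose common root forces (\ref{foucases}) (case b). The only cosmetic difference is that the paper normalises $y_{s_1}=1$ at the outset, whereas you carry $y_{s_1}$ and divide through; your explicit remark that $d_{\bullet}-\lambda\neq0$ because the leaf coordinates are nonzero is a worthwhile addition the paper leaves implicit.
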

\begin{proof}
Suppose any of situations a) b) c) of Proposition \ref{progeneve} arises and the graph is designed from topology (a) of Fig. \ref{Fcdtopo4}. The eigen-condition is to be computed for $v_{s_1}, v_{s_2}, v_{t_1}, v_{t_2},$ respectively. First, for node $v_{t_2},$ since $y_k=0$ for any $k\neq s_1,s_2,t_1,t_2,$ it follows that
$
\sum_{i \in {\mathcal{N}_{{t_2}l}}} {{y_i}}  = 0,\sum_{i \in {\mathcal{N}_{{t_2}f}}} {{y_i}}  = {y_{{s_1}}}.
$
Accordingly
$
{d_{{t_2}}}{y_{{t_2}}} - \sum_{i \in {\mathcal{N}_{{t_2}}}} {{y_i}}  = {d_{{t_2}}}{y_{{t_2}}} - {y_{{s_1}}}.
$
So the eigen-condition 
requires 
\begin{equation}
({d_{{t_2}}} - \lambda){y_{{t_2}}} = {y_{{s_1}}}.\label{fifcda}
\end{equation}
Similarly, the eigen-conditions of $v_{t_1}$ and $v_{s_2}$ require that 
\begin{equation}
({d_{{t_1}}} - \lambda ){y_{{t_1}}} = {y_{{s_1}}} ~ ~
\mbox{and} ~~
({d_{{s_2}}} - \lambda ){y_{{s_2}}} = {y_{{s_1}}}.\label{fcda3eqn}
\end{equation}
For $v_{s_1},$ since 
$
\sum_{i \in {\mathcal{N}_{{s_1}l}}} {{y_i}}  = 0,\sum_{i \in {\mathcal{N}_{{s_1}f}}} {{y_i}}  = {y_{{s_2}}} + {y_{{t_1}}} + {y_{{t_2}}},
$
one has 
$
{d_{{s_1}}}{y_{{s_1}}} - \sum_{i \in {\mathcal{N}_{{s_1}}}} {{y_i}}  = {d_{{s_1}}}{y_{{s_1}}} - ({y_{{s_2}}} + {y_{{t_1}}} + {y_{{t_2}}}).
$
Then the eigen-condition associated with $v_{s_1}$ requires
\begin{equation}
({d_{{s_1}}} - \lambda ){y_{{s_1}}} = {y_{{s_2}}} + {y_{{t_1}}} + {y_{{t_2}}}.\label{s1eigcon}
\end{equation}
Since $y_{s_1}\neq 0$ and $\bar y$ is an eigenvector, it can be assumed that $y_{s_1}=1.$ 
Consider the following circumstances.
\begin{itemize}
\item Situation a) of Proposition \ref{progeneve} arises with a $v_k\in\mathcal{V}\setminus\{v_{s_1}, v_{s_2}, v_{t_1}, v_{t_2}\} $ incident to all $v_{s_1}, v_{s_2}, v_{t_1}, v_{t_2}.$ In this situation, (\ref{Fcdeqn1}) holds. By (\ref{s1eigcon}), 
$
({d_{{s_1}}} - \lambda +1){y_{{s_1}}} =0.
$
Since $y_{s_1}\neq 0,$ 
$
\lambda=d_{s_1}+1.
$
Substituting $\lambda$, (\ref{fifcda}) and (\ref{fcda3eqn}) into (\ref{Fcdeqn1}) yields (\ref{situaeqn}).
Thus, if $\bar y$ is an eigenvector, condition (\ref{situaeqn}) ought to be satisfied.

\item Situation b) arises with a $v_k\in\mathcal{V}\setminus\{v_{s_1}, v_{s_2}, v_{t_1}, v_{t_2}\} $ incident to only three of $v_{s_1}, v_{s_2}, v_{t_1},$ $v_{t_2},$ say $v_{s_1}, v_{s_2}, v_{t_1}.$
In this situation, (\ref{Fcdeqn2}) holds. Substituting (\ref{Fcdeqn2}) into (\ref{fcda3eqn}) yields  
$
({d_{{t_1}}} - \lambda  + 1){y_{{t_1}}} =  - {y_{{s_2}}}
$
and 
$
({d_{{s_2}}} - \lambda  + 1){y_{{s_2}}} =  - {y_{{t_1}}}.
$
Thus
$
({d_{{s_2}}} - \lambda  + 1)({d_{{t_1}}} - \lambda  + 1){y_{{t_1}}} = {y_{{t_1}}}.
$
Since $y_{{t_1}}\neq 0,$ 
$
({d_{{s_2}}} - \lambda  + 1)({d_{{t_1}}} - \lambda  + 1) = 1
$
whose roots are (\ref{root1}). On the other hand, combining (\ref{s1eigcon}) with (\ref{Fcdeqn2}) yields
$
y_{{t_2}}={d_{{s_1}}} - \lambda  + 1.
$
By (\ref{fifcda}), 
$
{y_{{t_2}}} = \frac{1}{{{d_{{t_2}}} - \lambda }}.
$
Thus
$
{d_{{s_1}}} - \lambda  + 1 = \frac{1}{{{d_{{t_2}}} - \lambda }}, 
$
i.e., 
\begin{equation}
{\lambda ^2} - ({d_{{s_1}}} + {d_{{t_2}}} + 1)\lambda  + {d_{{t_2}}}{d_{{s_1}}} + {d_{{t_2}}} - 1 = 0.\label{eigenvalue2}
\end{equation}
The two roots of (\ref{eigenvalue2}) are (\ref{root2}). Because the eigen-condition of each node holds for the same eigenvalue $\lambda,$ it follows from (\ref{root1}) and (\ref{root2}) that one of the four cases of (\ref{foucases}) must occur. 

\item Situation c) arises with a $v_k\in\mathcal{V}\setminus\{v_{s_1}, v_{s_2}, v_{t_1}, v_{t_2}\} $ incident to only two of $v_{s_1}, v_{s_2}, v_{t_1},$ $v_{t_2},$ say $v_{s_1}, v_{s_2}.$ 
Similar arguments as (\ref{Fcdeqn2}) yields
$
y_{s_1}+y_{s_2}=0. 
$
Substituting this with $y_{s_1}=1$ into (\ref{fifcda}) (\ref{fcda3eqn}) and (\ref{s1eigcon}) results in $\lambda  = {d_{{s_2}}} + 1$ and accordingly (\ref{sitaceqn}) should be met.
\end{itemize}
\end{proof}

 \begin{remark}
 Lemma \ref{fivenolem} serves to check whether $\bar y$ is an eigenvector of a graph designed from (a) of Fig. \ref{Fcdtopo4} and accordingly contributes to the discrimination of topologies of QCD nodes. Graphs designed from other topologies of Fig. \ref{Fcdtopo4} can be analyzed in the same manner. This provides a method of identifying topologies of QCD nodes by which all topology structures of QCD nodes are to be revealed for graphs composed of five vertices. 
 \end{remark}
By Proposition \ref{progeneve}, the following candidate graphs consisting of five vertices are designed to discriminate topologies of QCD nodes. 
\begin{figure}[H]
\begin{center}
\subfigure[]{\includegraphics[width=1.4cm]{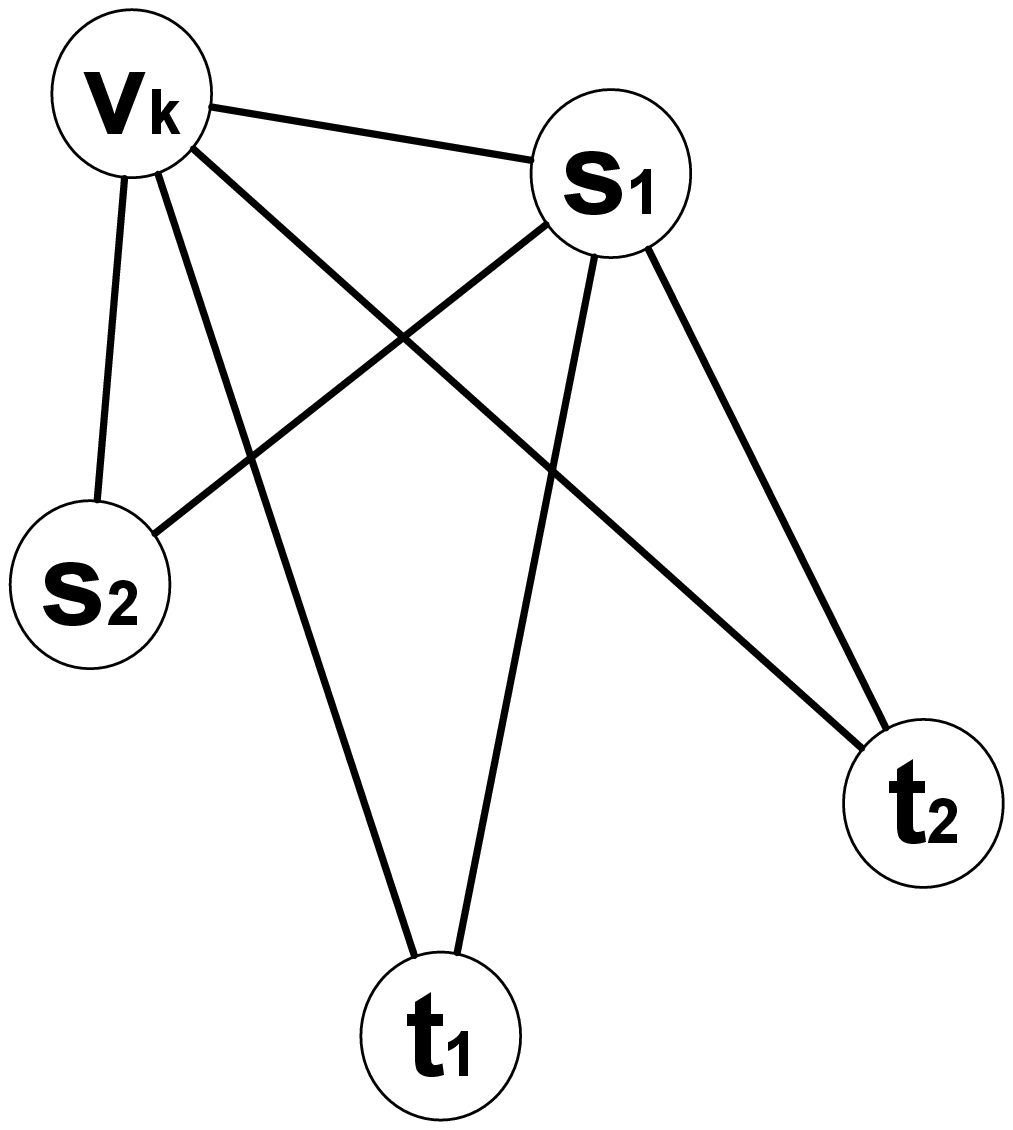}}
\subfigure[]{\includegraphics[width=1.4cm]{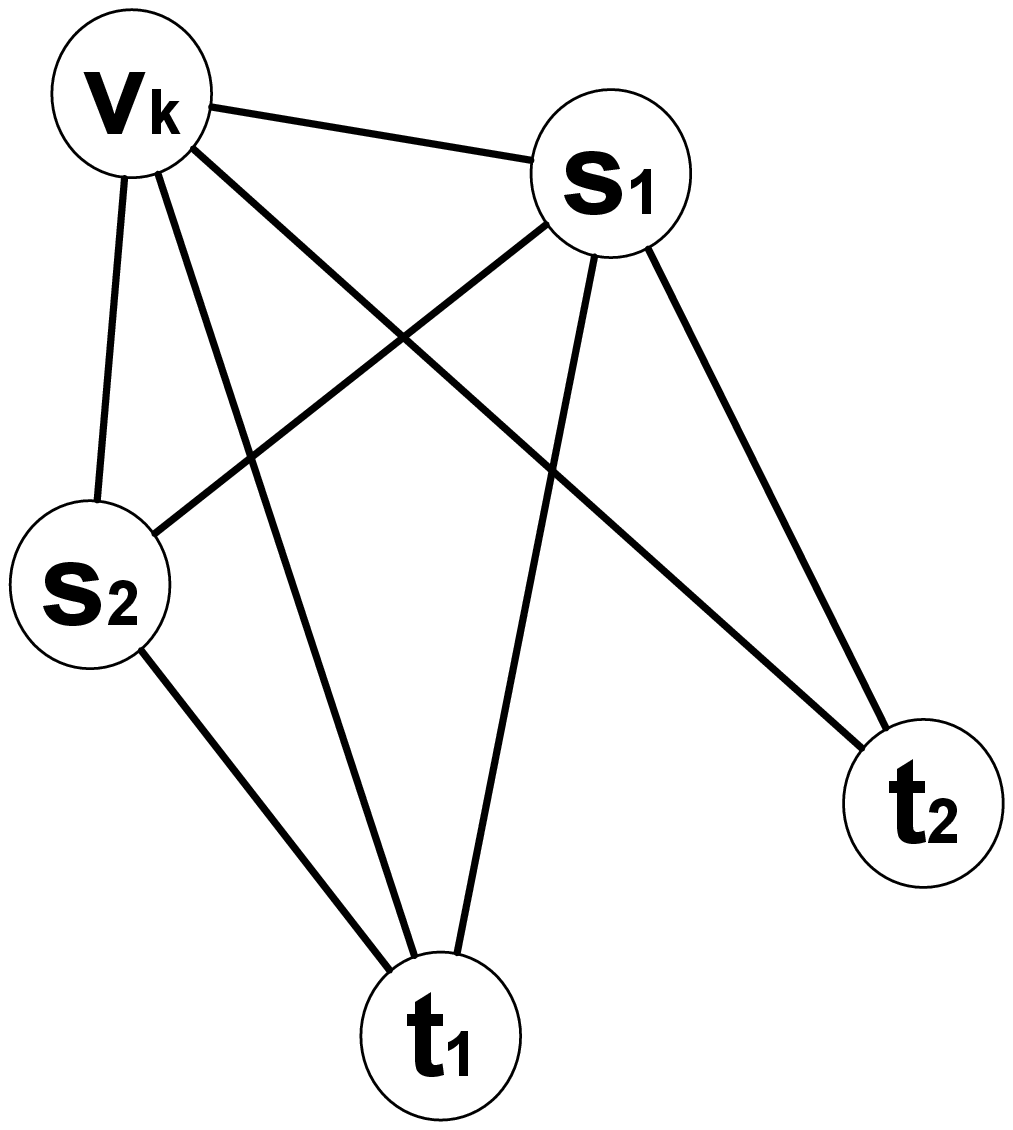}}
\subfigure[]{\includegraphics[width=1.4cm]{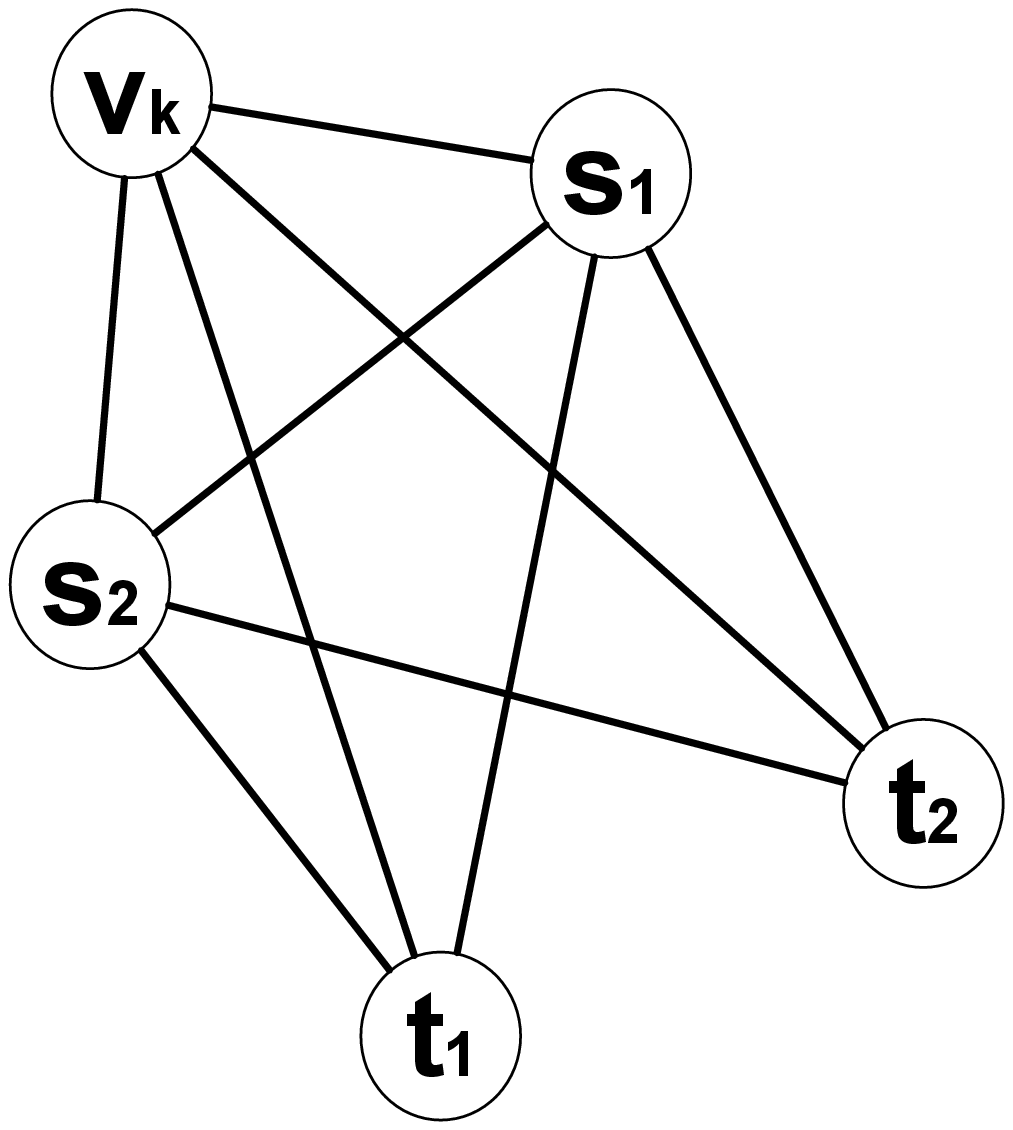}}
\subfigure[]{\includegraphics[width=1.4cm]{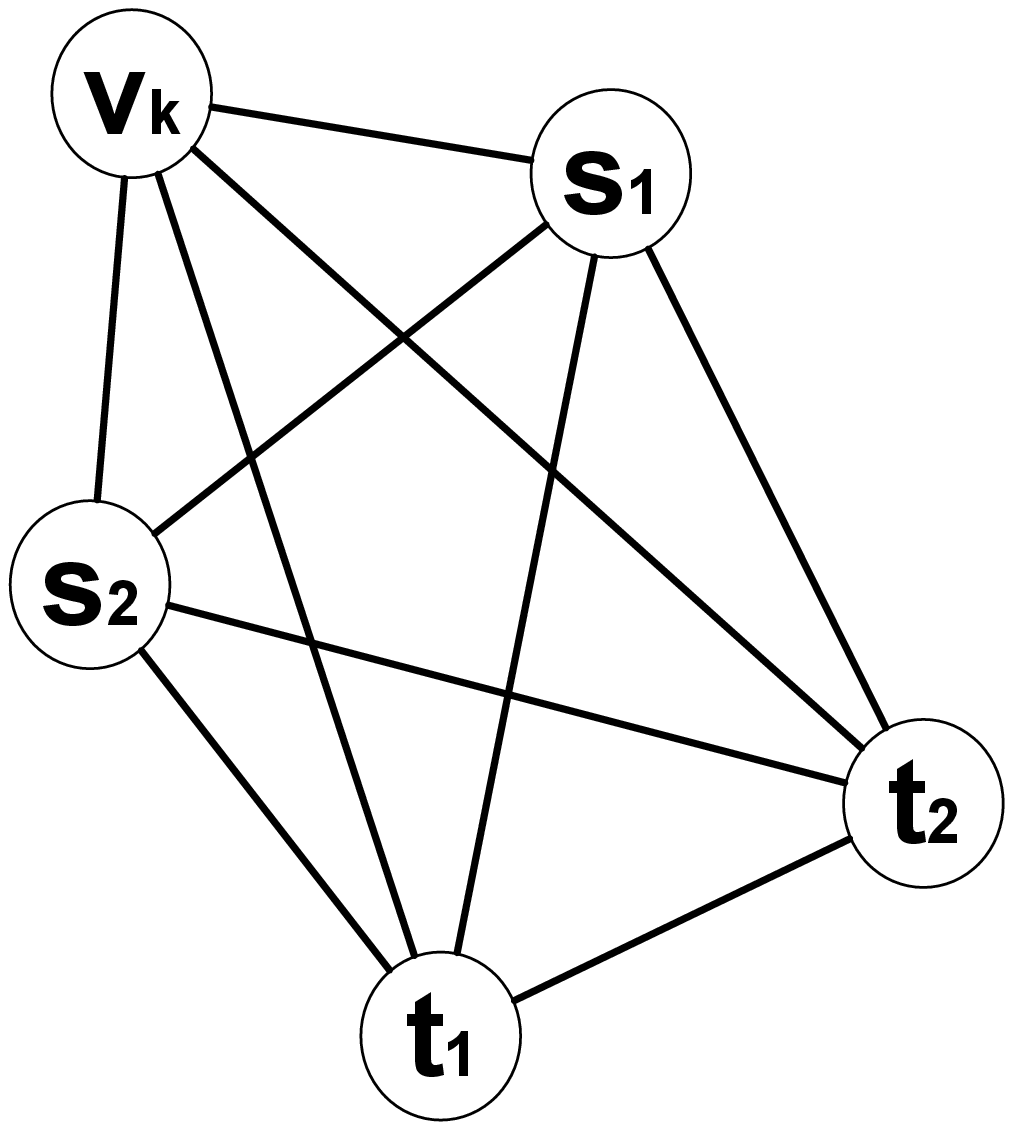}}
\subfigure[]{\includegraphics[width=1.4cm]{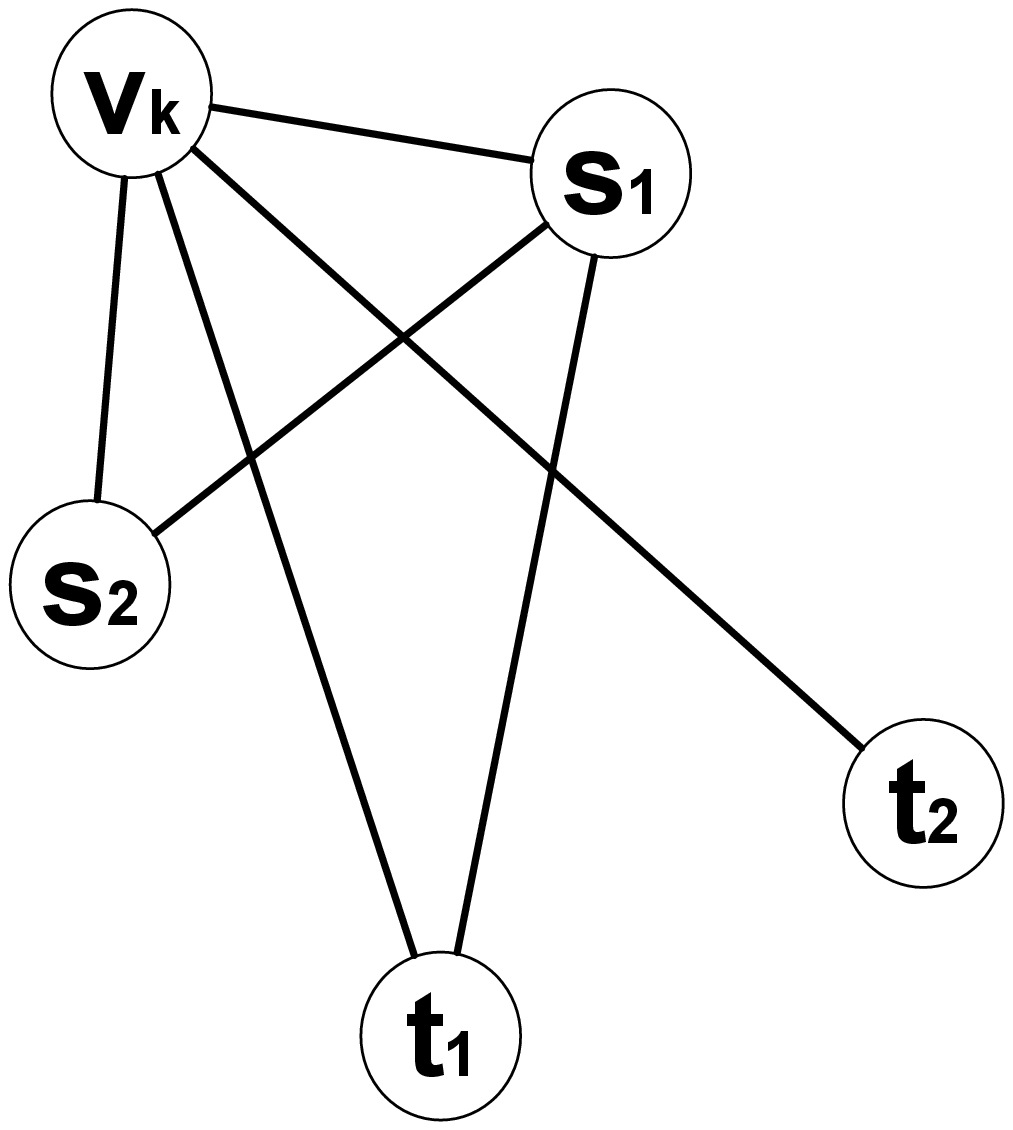}}
\subfigure[]{\includegraphics[width=1.43cm]{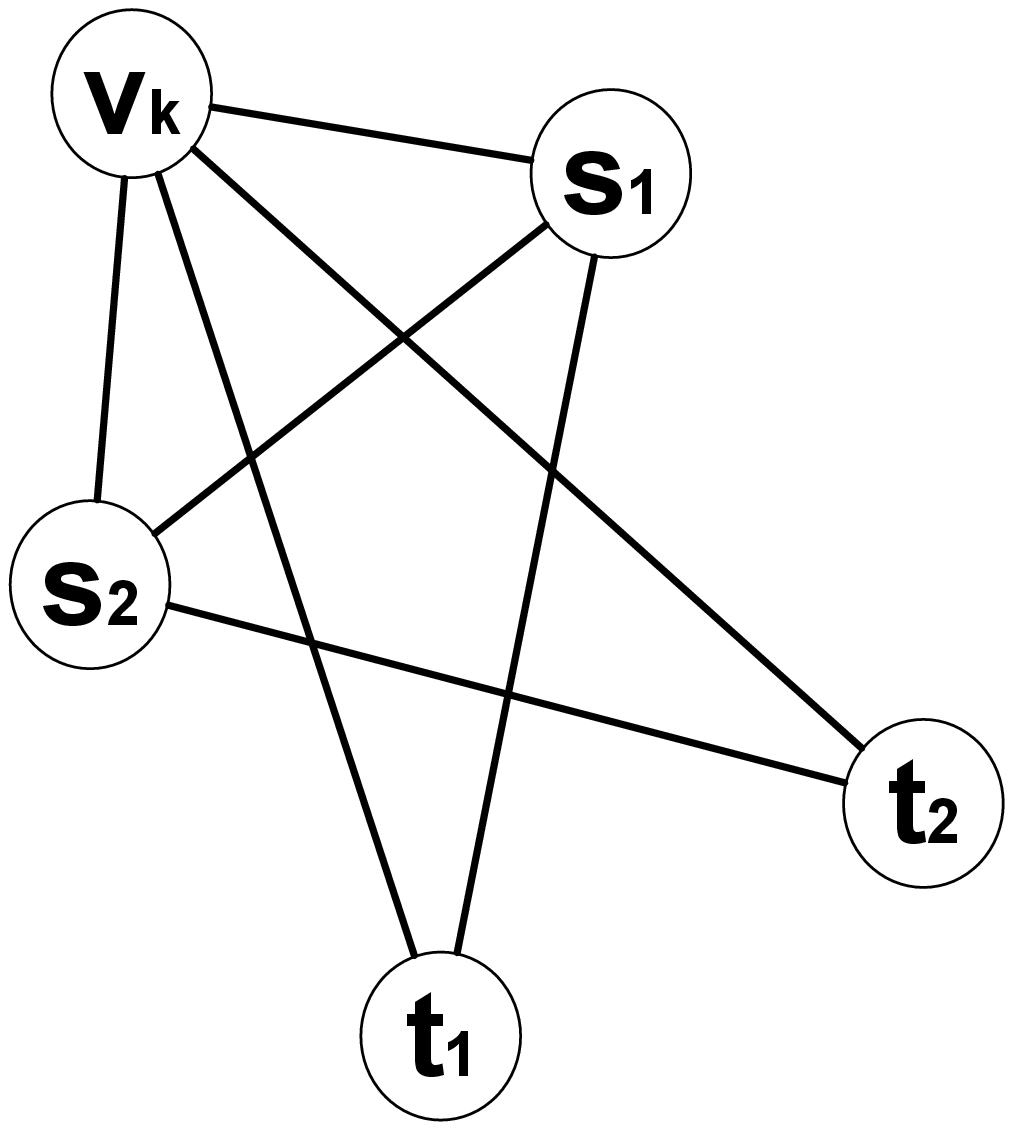}}
\subfigure[]{\includegraphics[width=1.43cm]{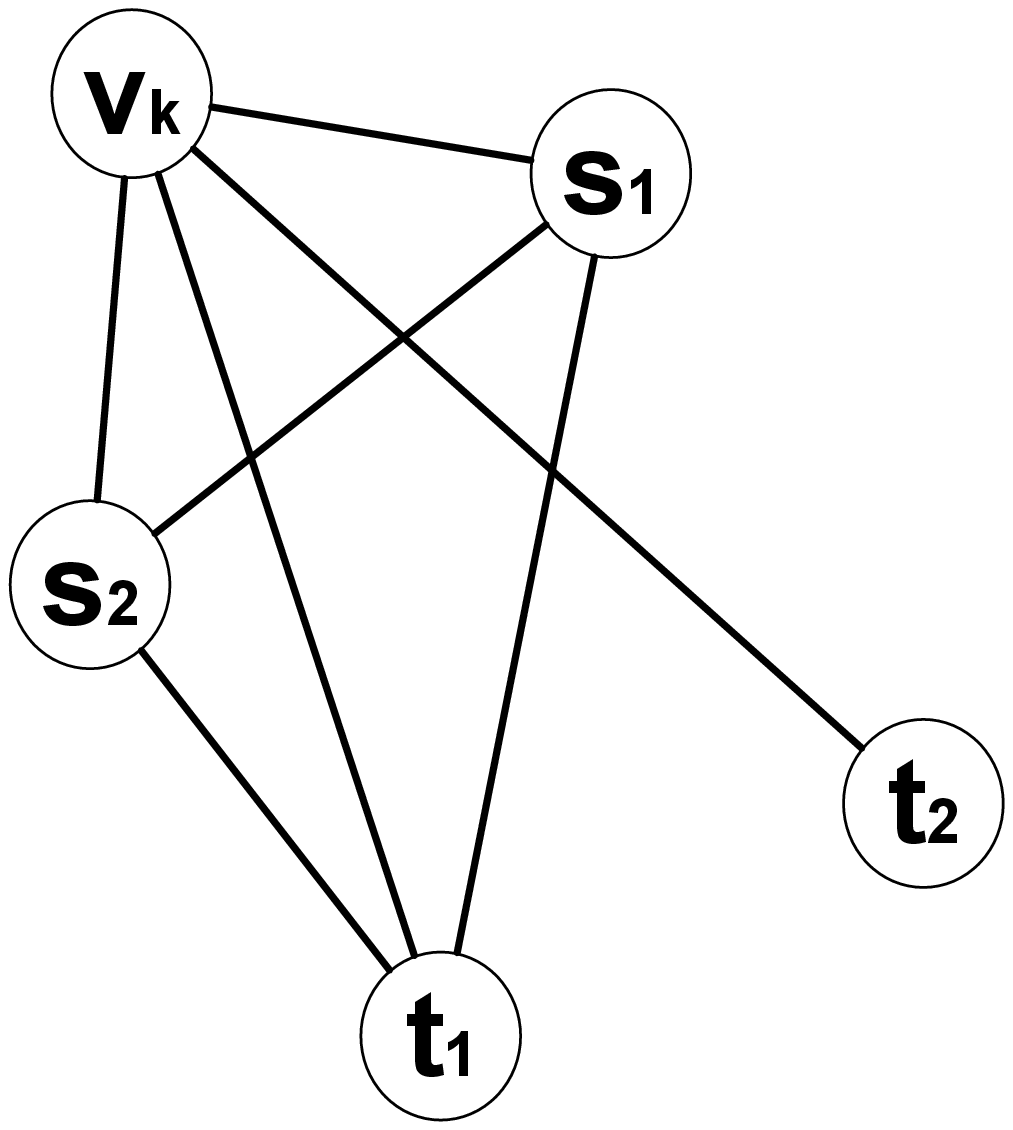}}
\subfigure[]{\includegraphics[width=1.43cm]{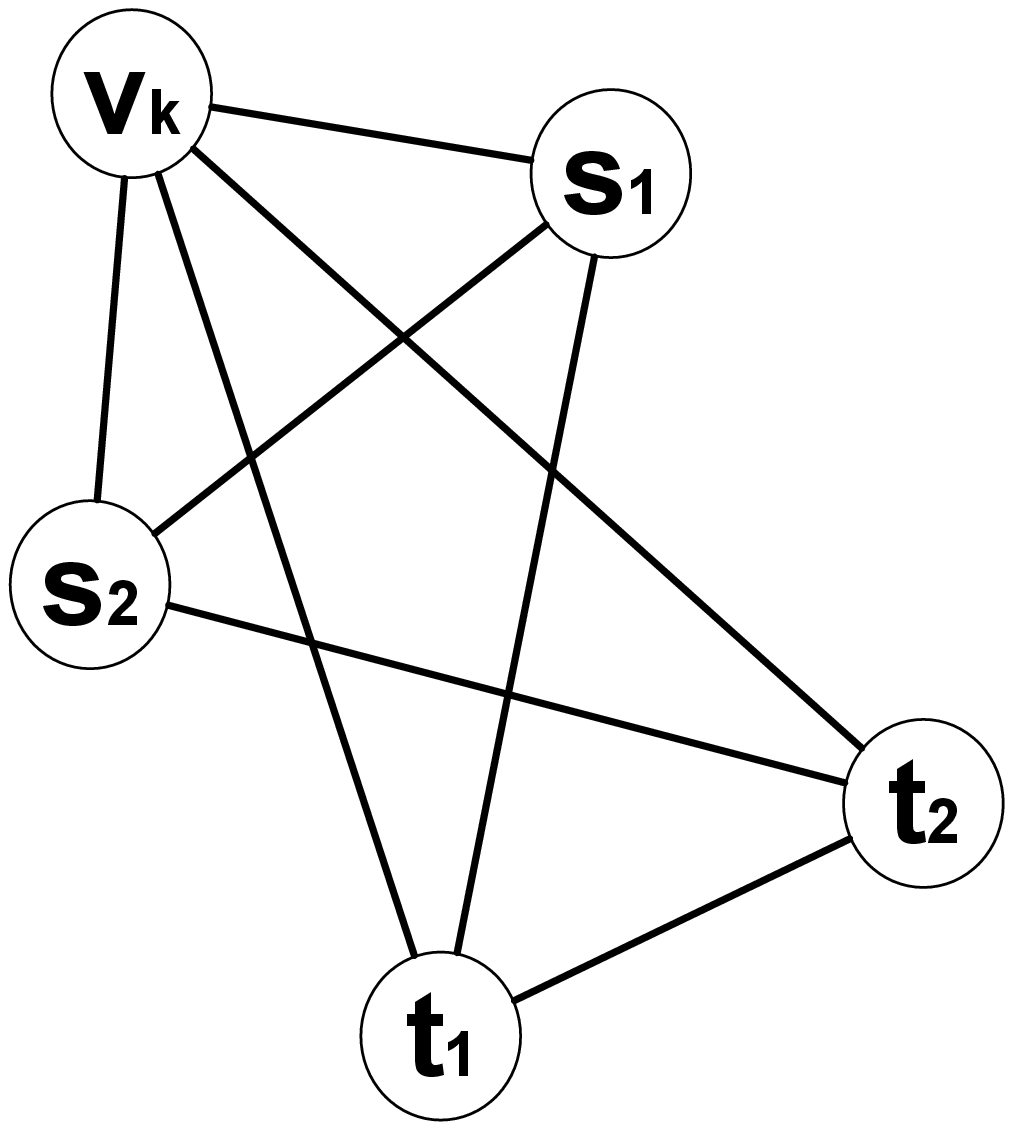}}
\subfigure[]{\includegraphics[width=1.43cm]{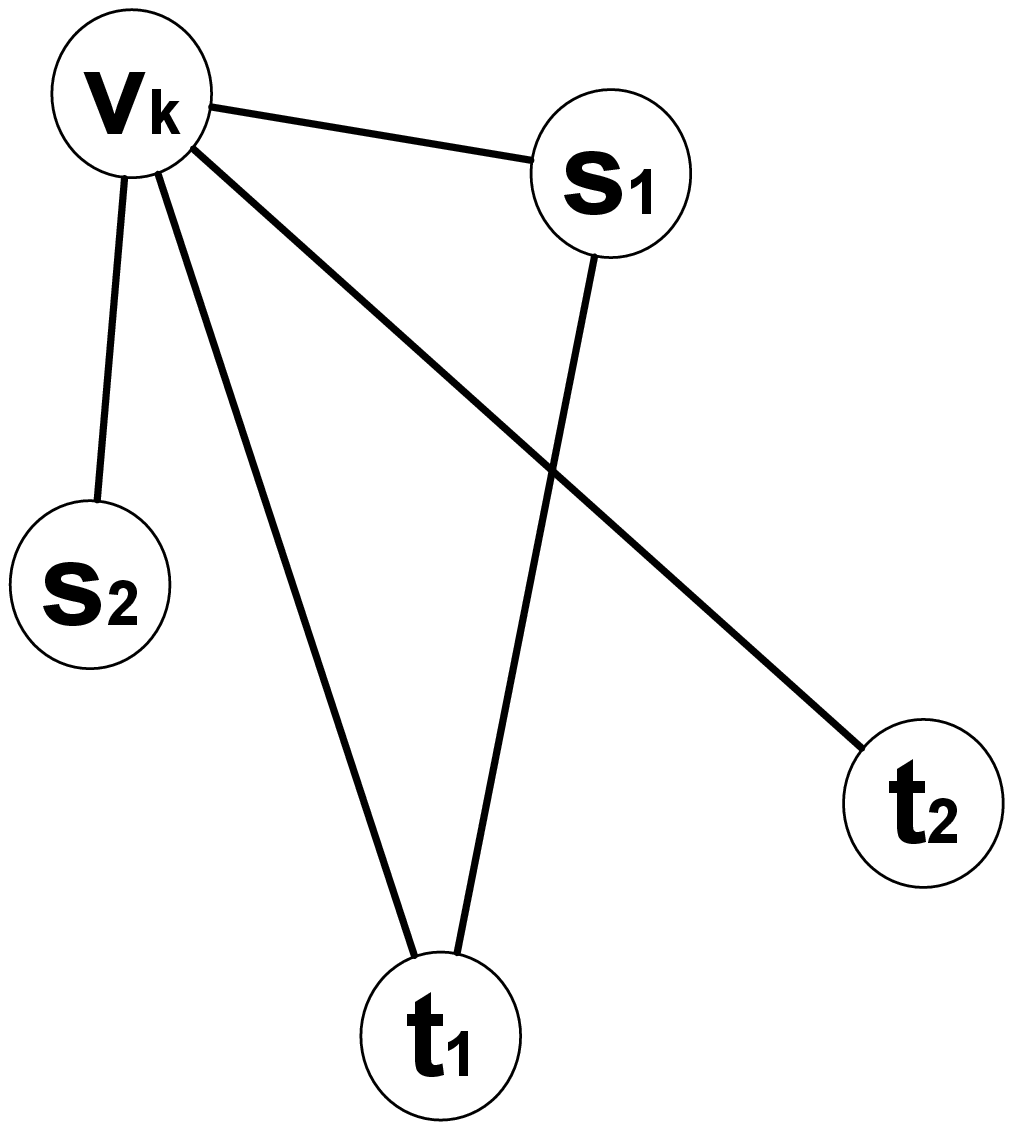}}
\subfigure[]{\includegraphics[width=1.43cm]{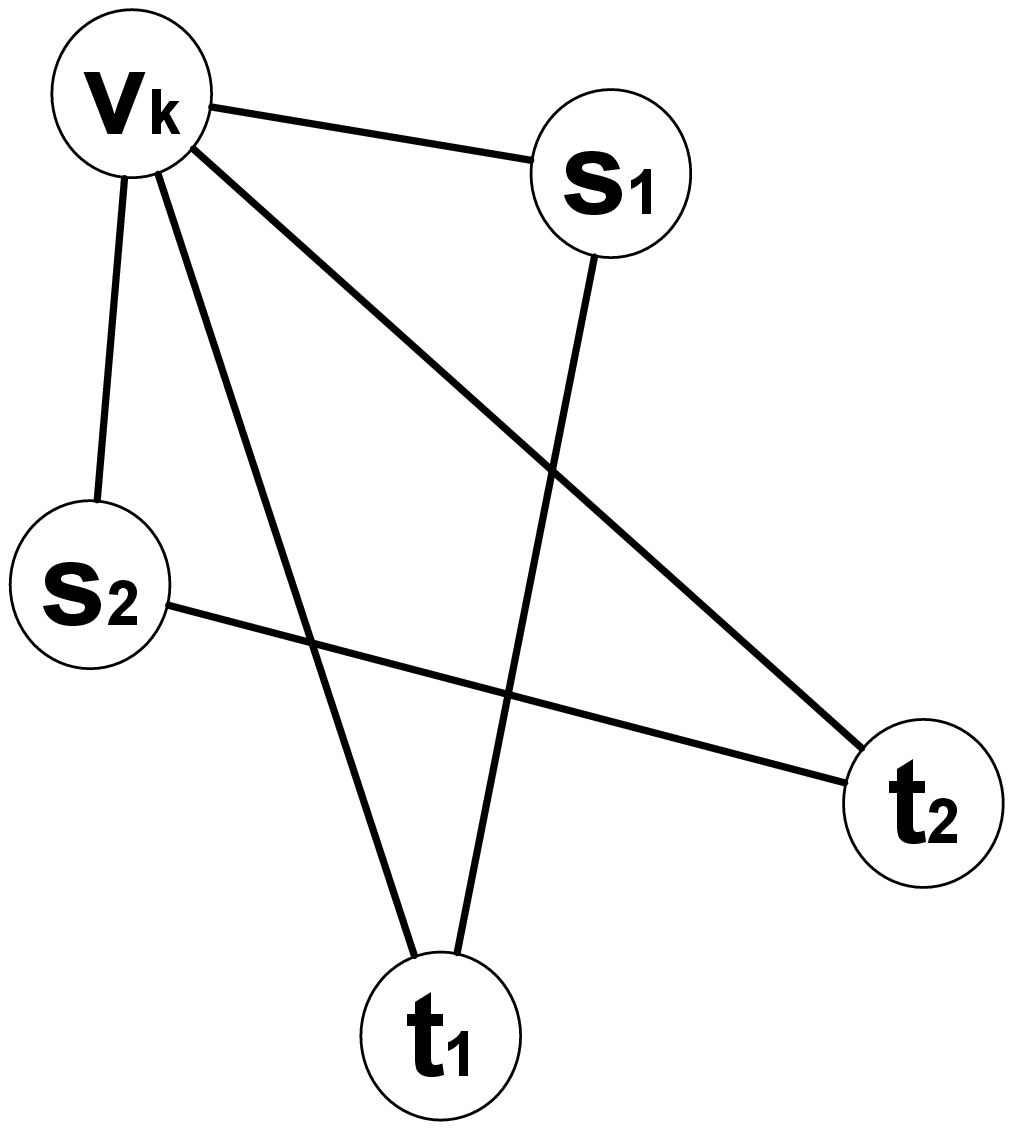}}
\subfigure[]{\includegraphics[width=1.43cm]{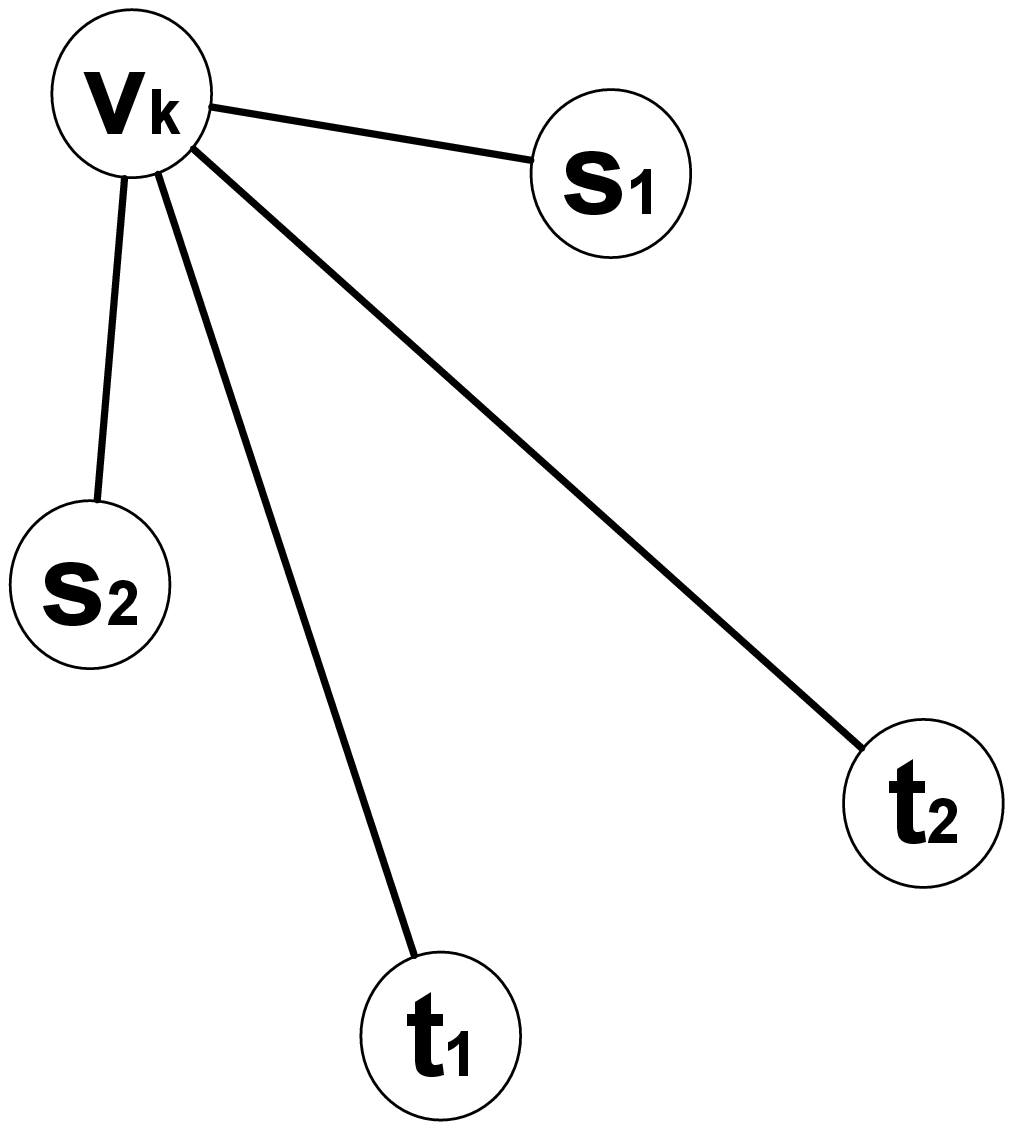}}
\caption{Graphs abiding by situation a) of Proposition \ref{progeneve}, where (a)-(k) are designed, respectively, from the topology structures (a)-(k) of Fig. \ref{Fcdtopo4}.}
\label{Fivnodeelv}
\end{center}
\end{figure}
\begin{figure}[H]
\begin{center}
\subfigure[]{\includegraphics[width=1.42cm]{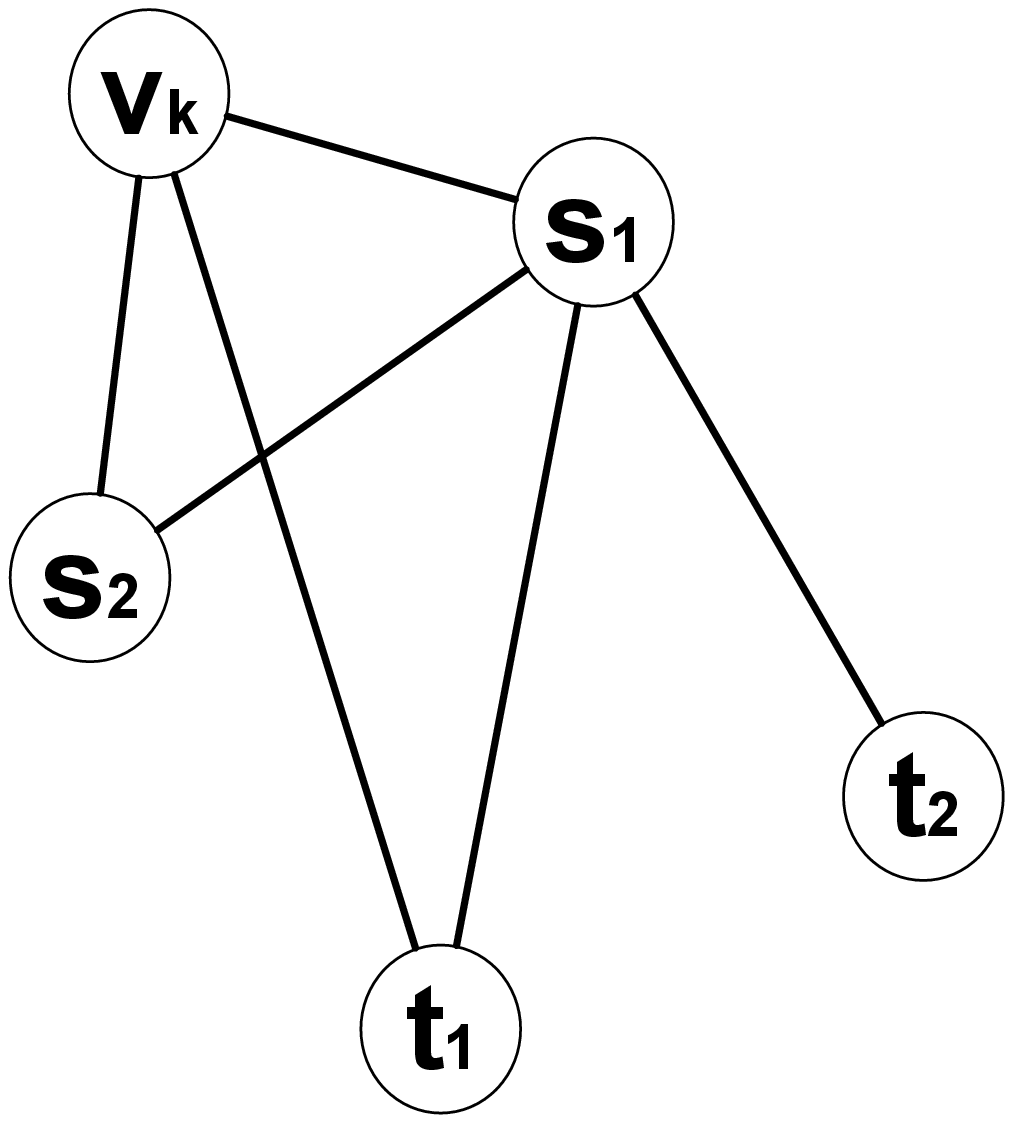}}
\subfigure[]{\includegraphics[width=1.42cm]{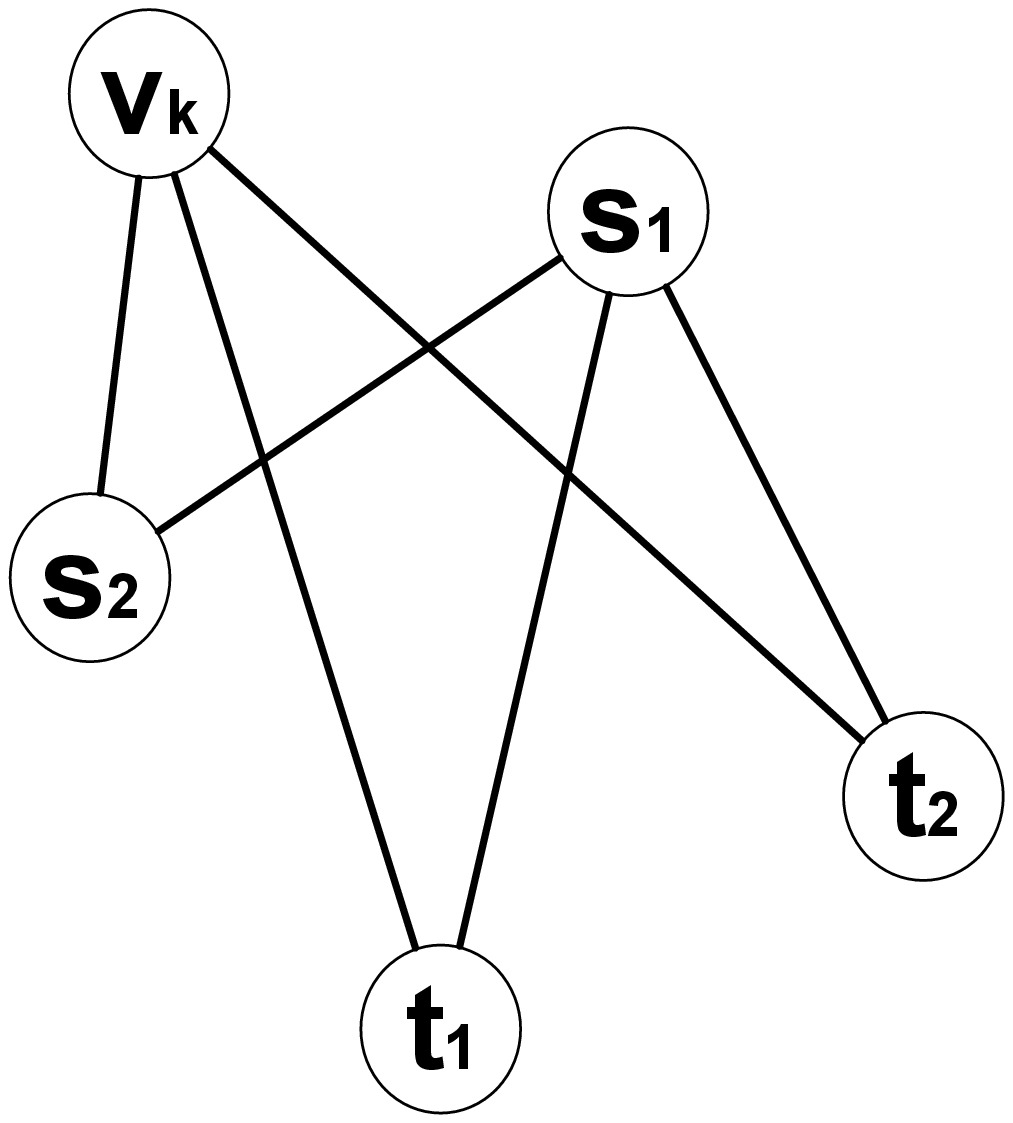}}
\subfigure[]{\includegraphics[width=1.42cm]{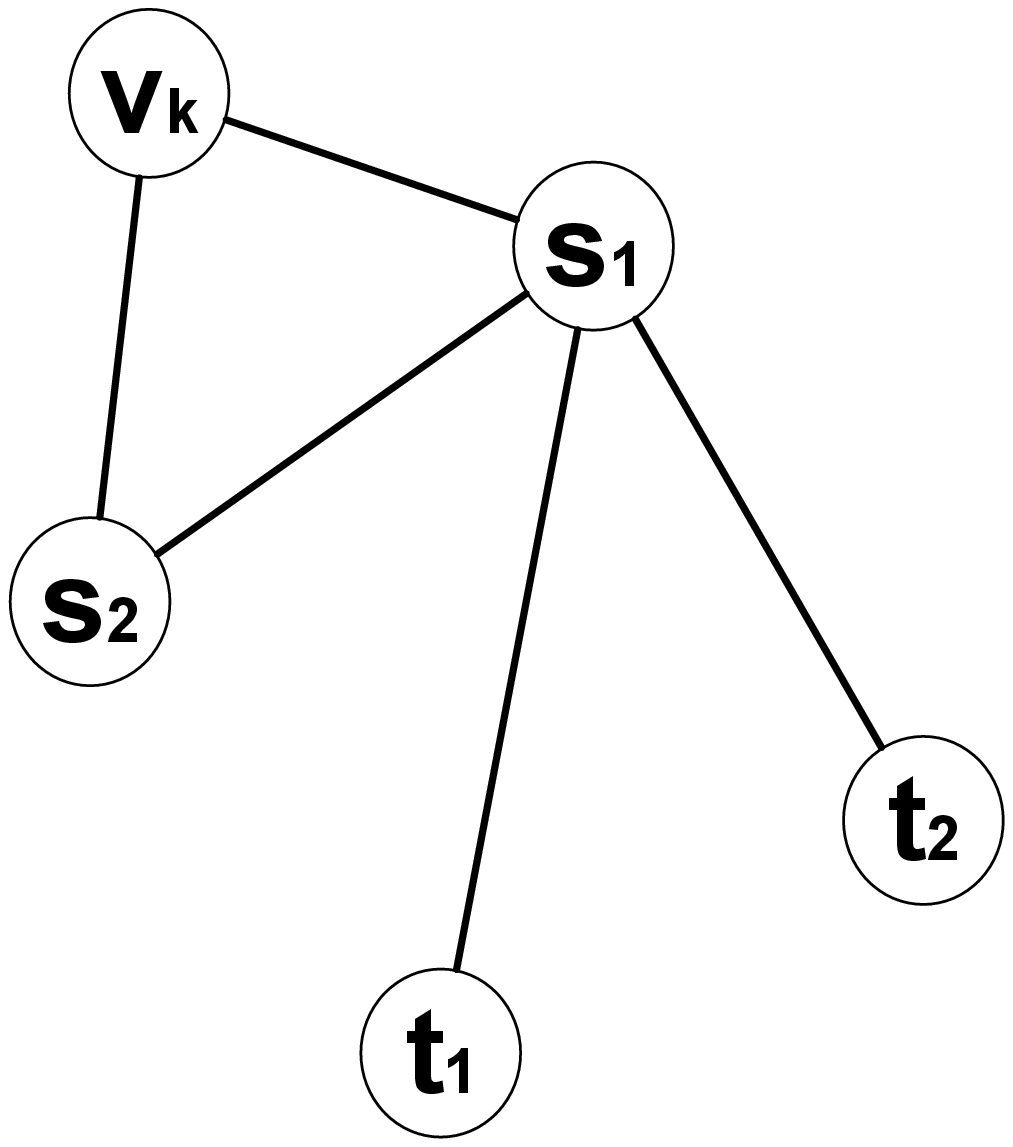}}
\subfigure[]{\includegraphics[width=1.42cm]{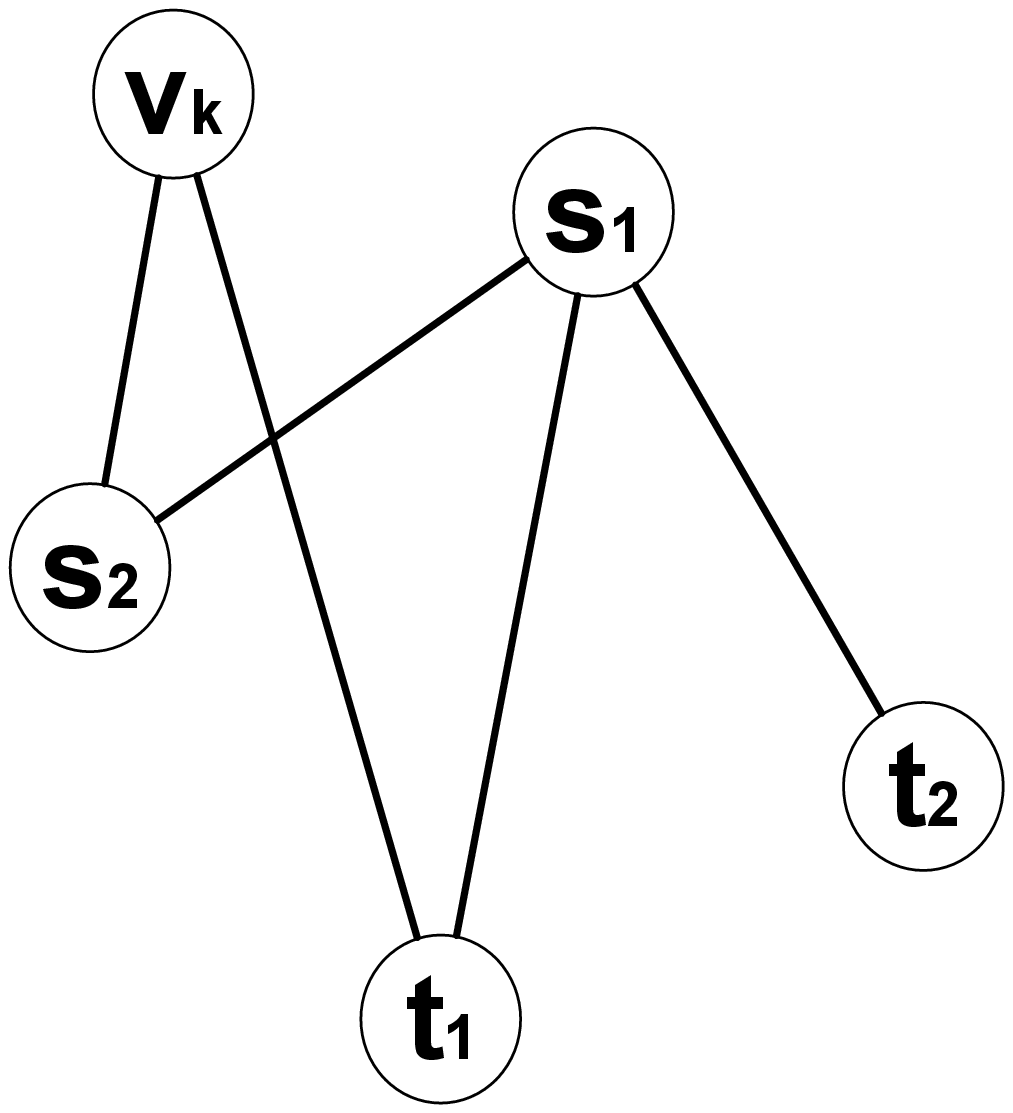}}
\subfigure[]{\includegraphics[width=1.42cm]{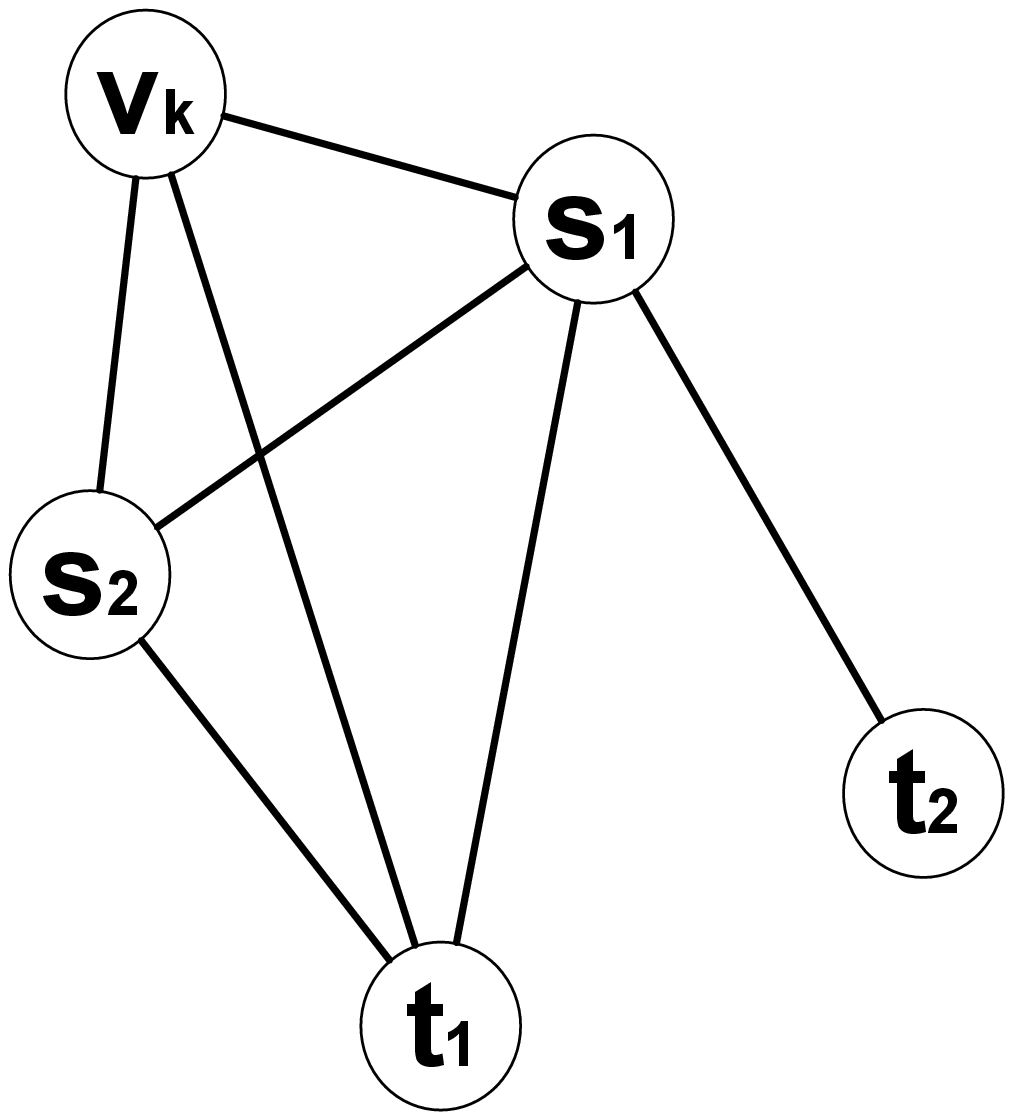}}
\subfigure[]{\includegraphics[width=1.43cm]{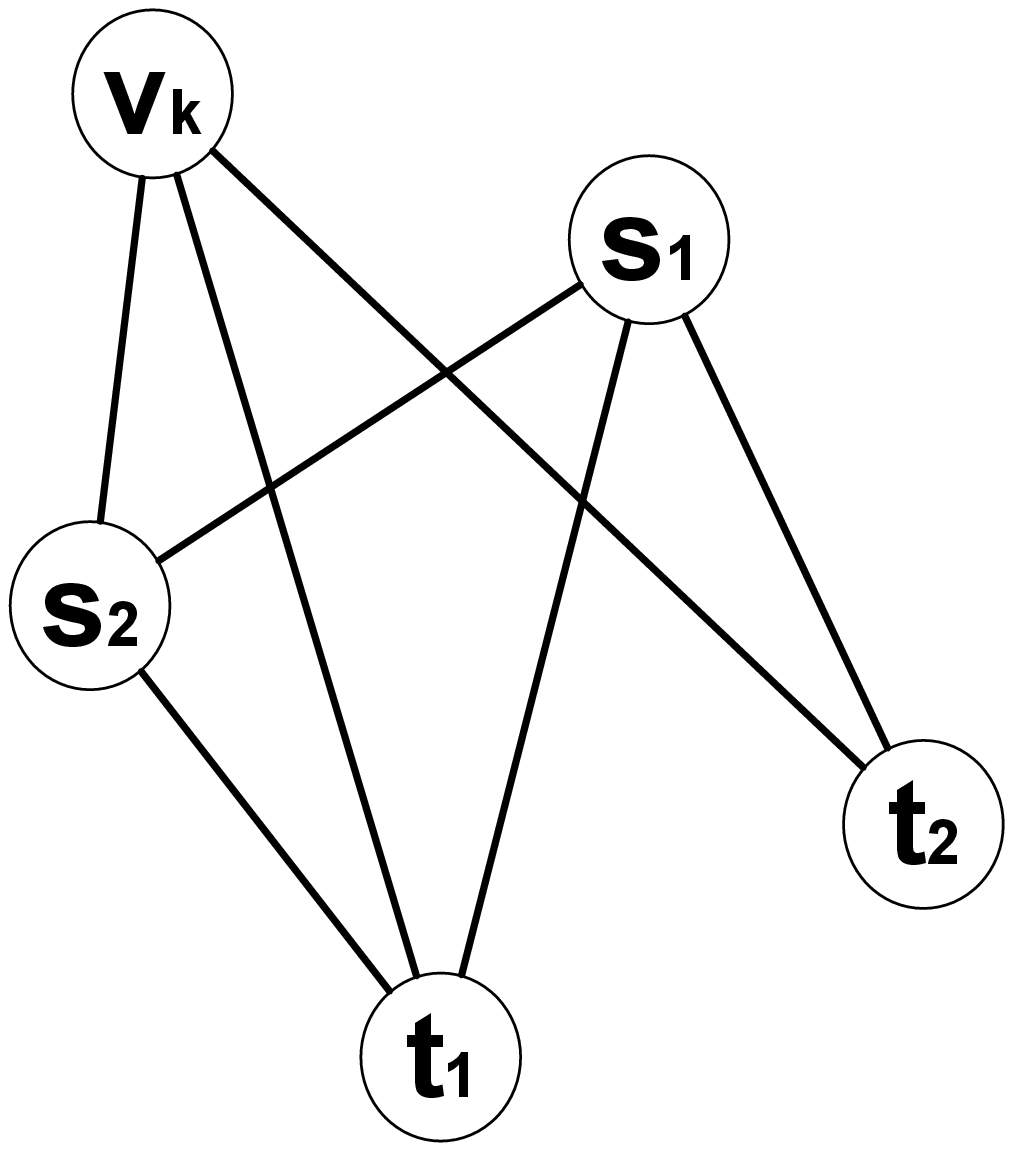}}
\subfigure[]{\includegraphics[width=1.43cm]{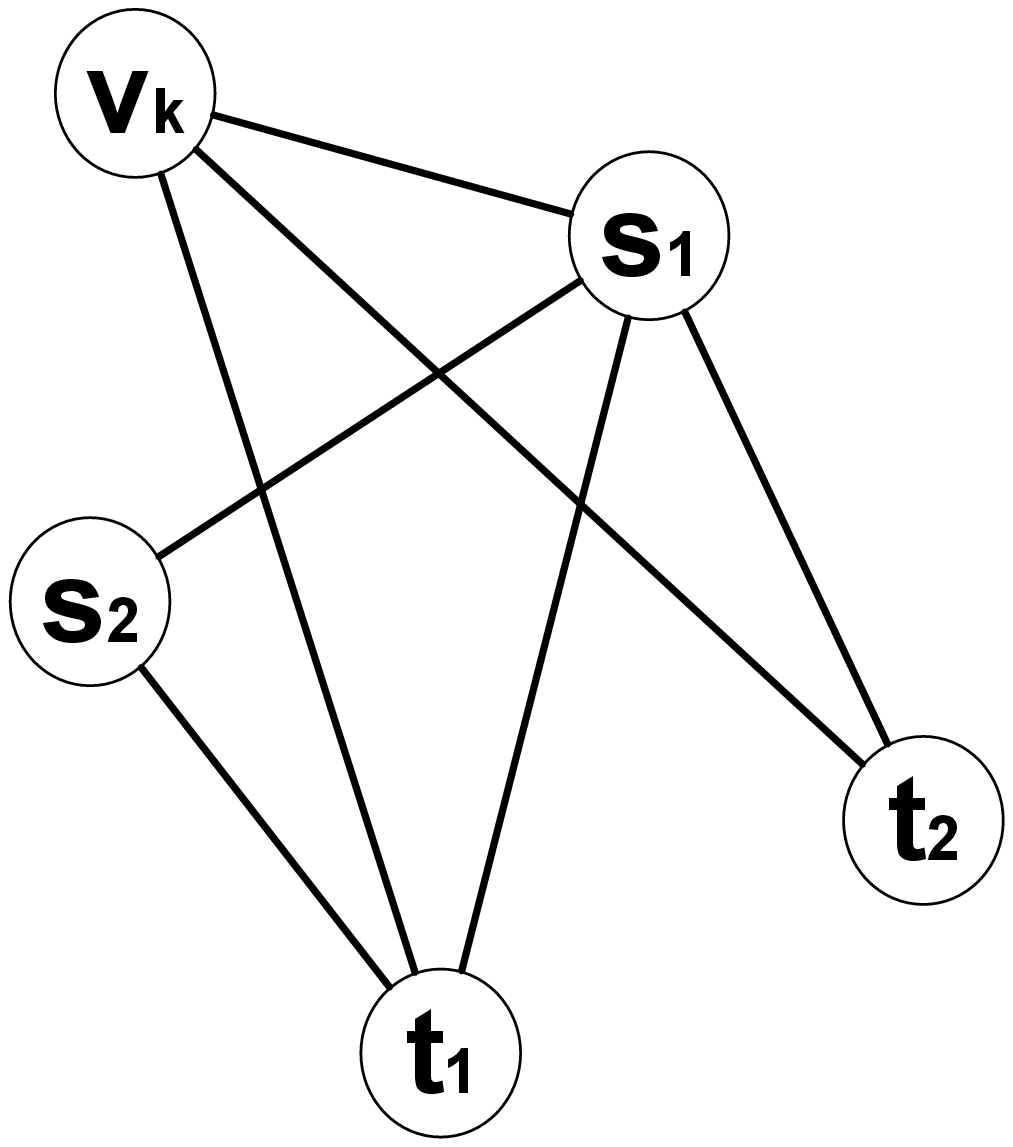}}
\subfigure[]{\includegraphics[width=1.43cm]{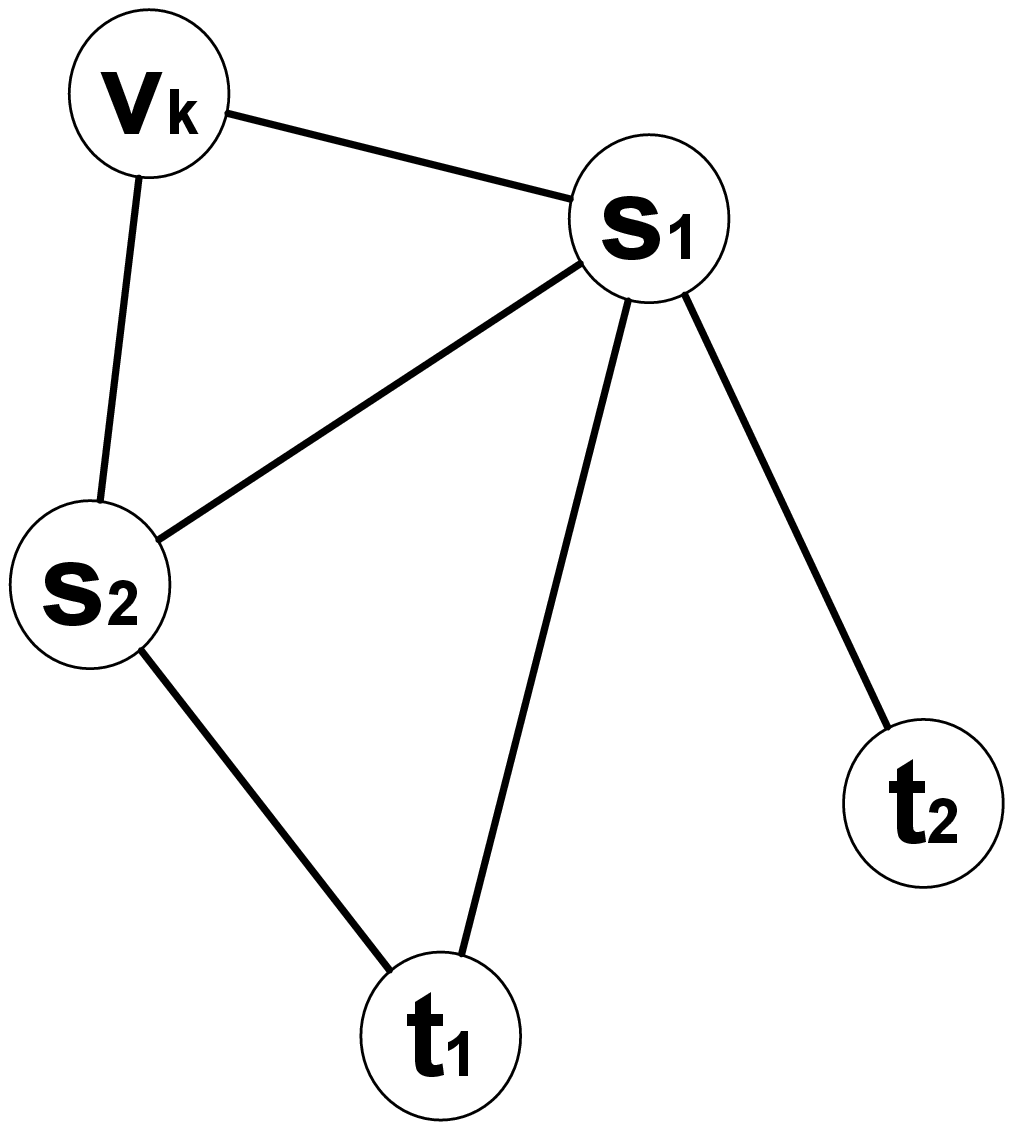}}
\subfigure[]{\includegraphics[width=1.43cm]{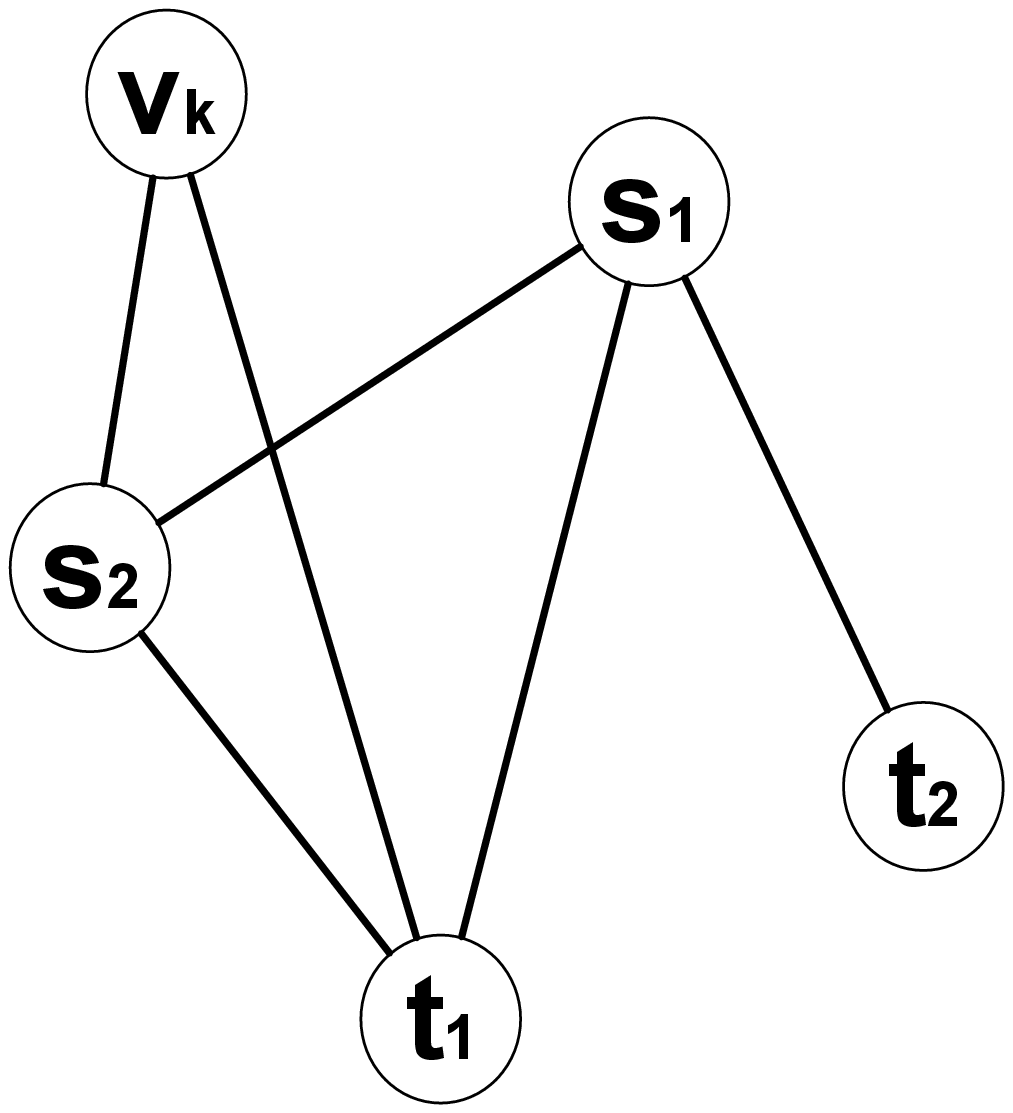}}
\subfigure[]{\includegraphics[width=1.43cm]{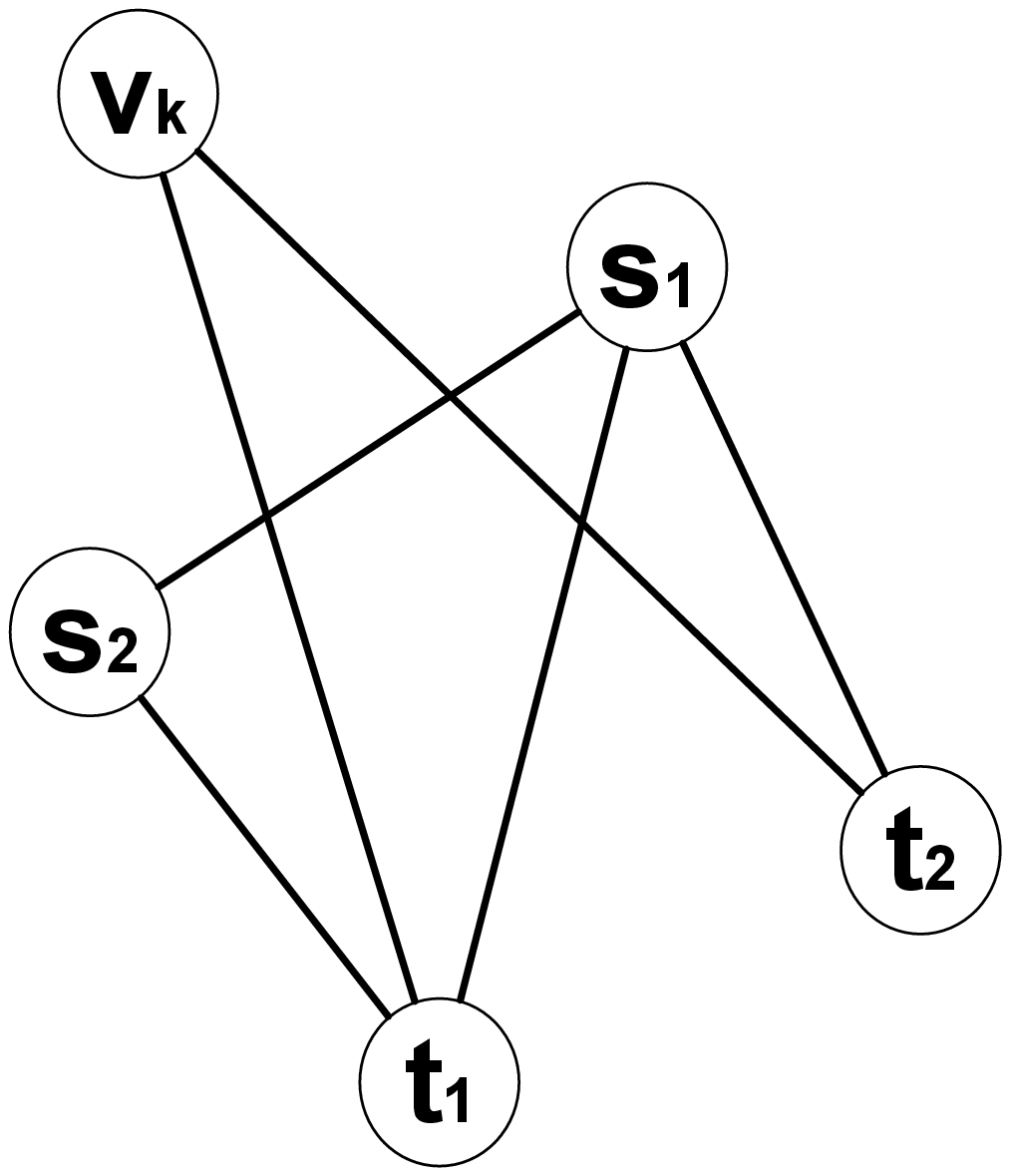}}
\subfigure[]{\includegraphics[width=1.43cm]{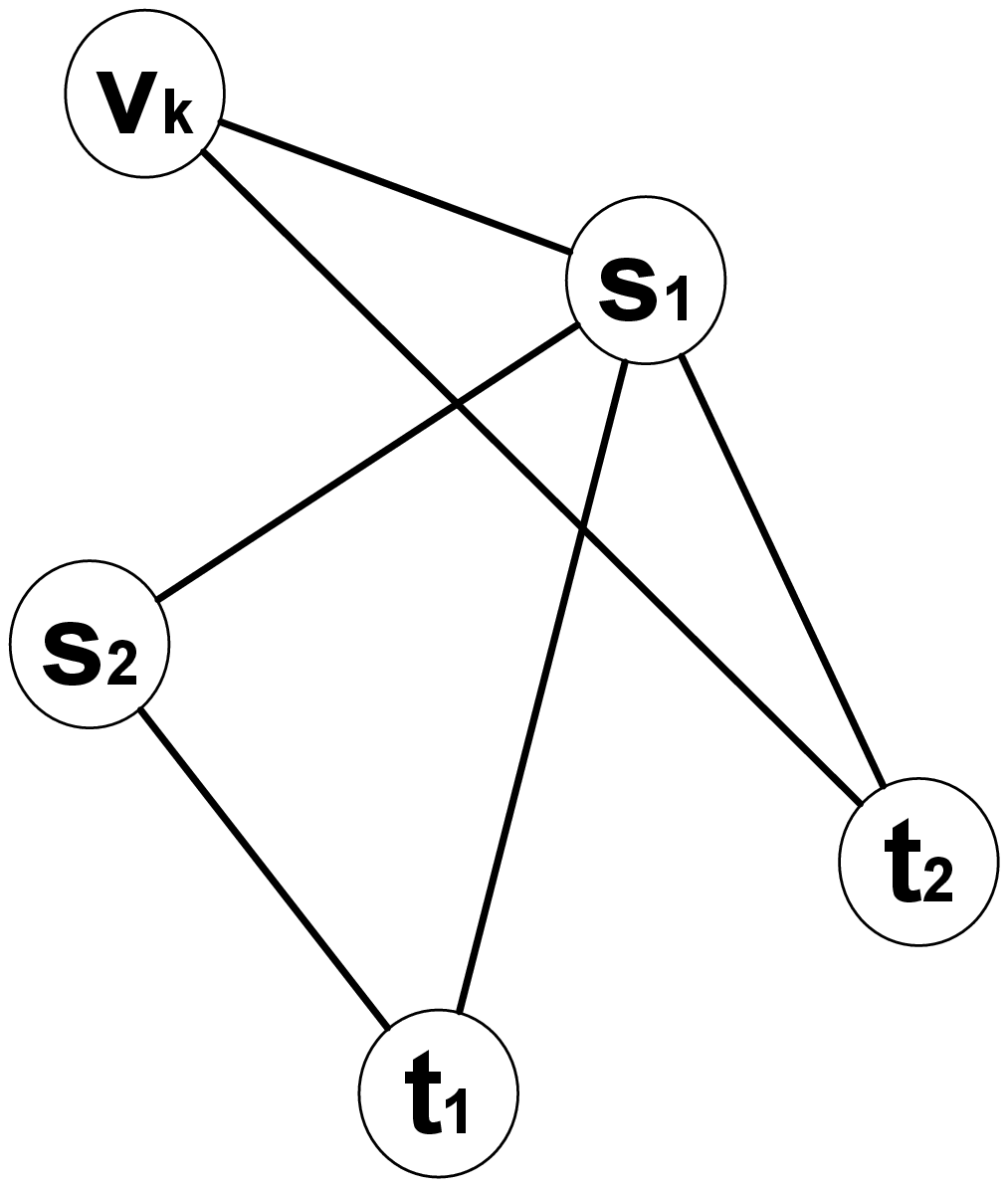}}
\caption{Graphs abiding by situation b) or c) of Proposition \ref{progeneve}, where (a)-(d) and (e)-(k) are designed, respectively, from the topology structures (a) and (b) of Fig. \ref{Fcdtopo4}.}
\label{Fivnodesb}
\end{center}
\end{figure}
\begin{figure}[H]
\begin{center}
\subfigure[]{\includegraphics[width=1.56cm]{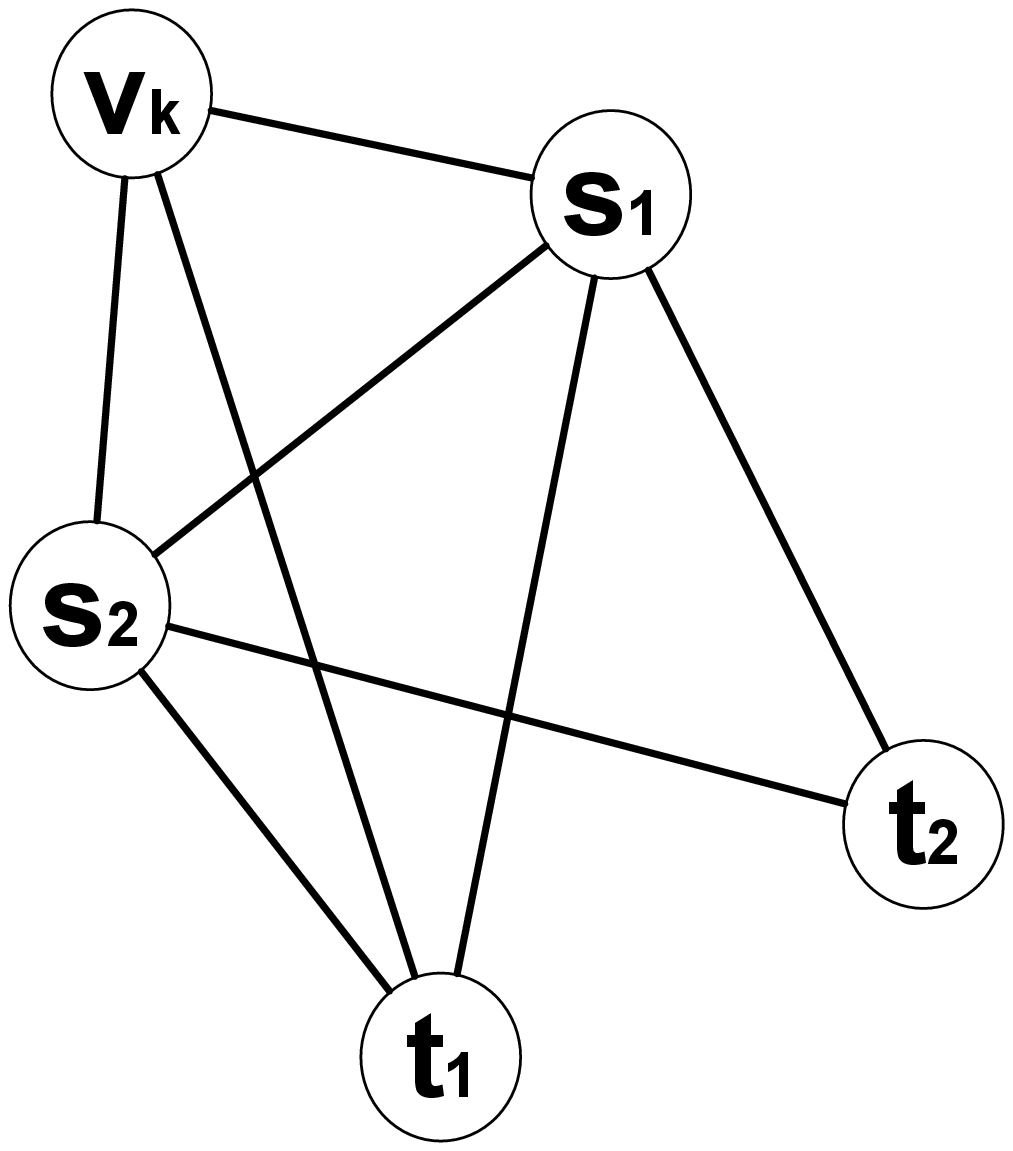}}
\subfigure[]{\includegraphics[width=1.56cm]{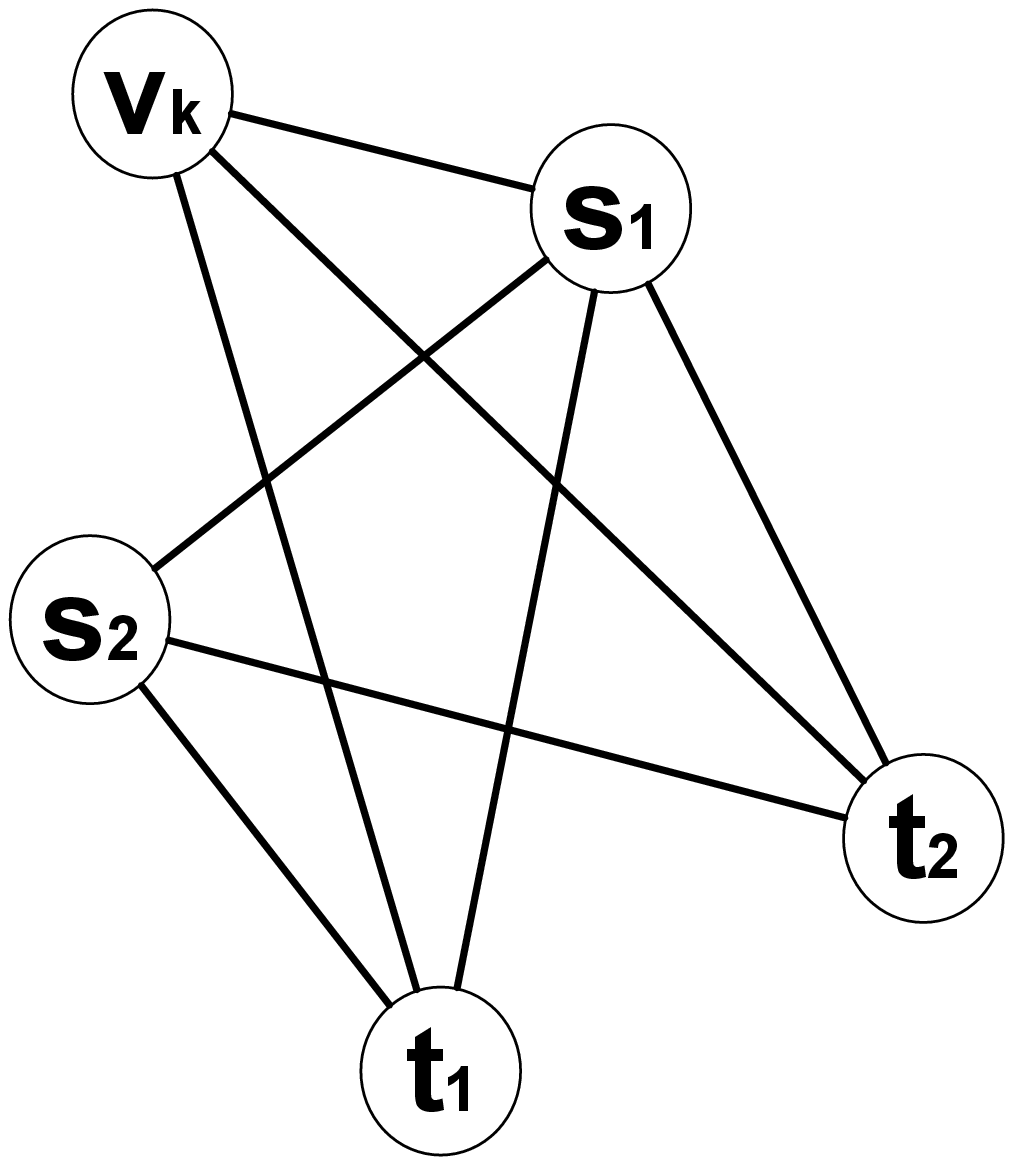}}
\subfigure[]{\includegraphics[width=1.56cm]{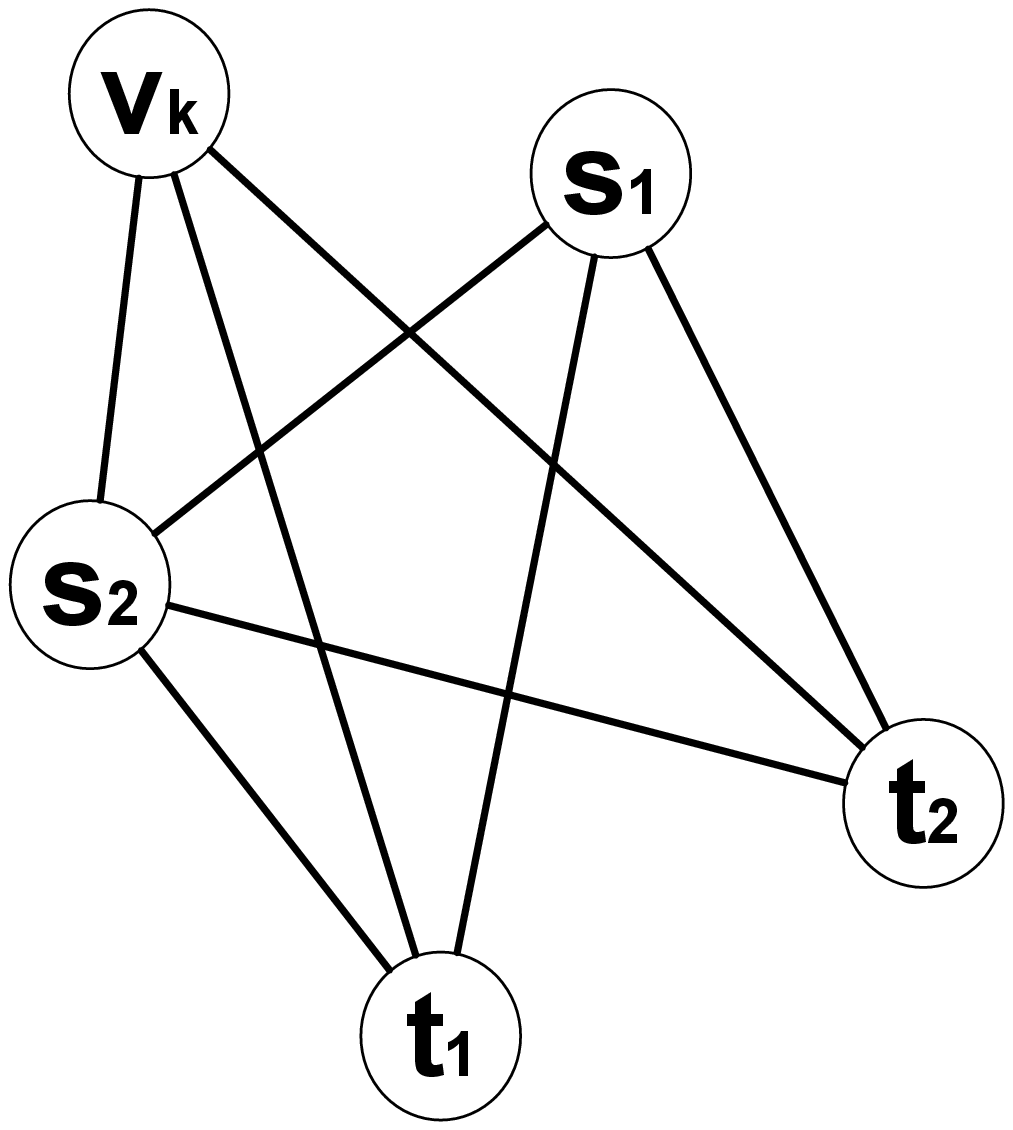}}
\subfigure[]{\includegraphics[width=1.56cm]{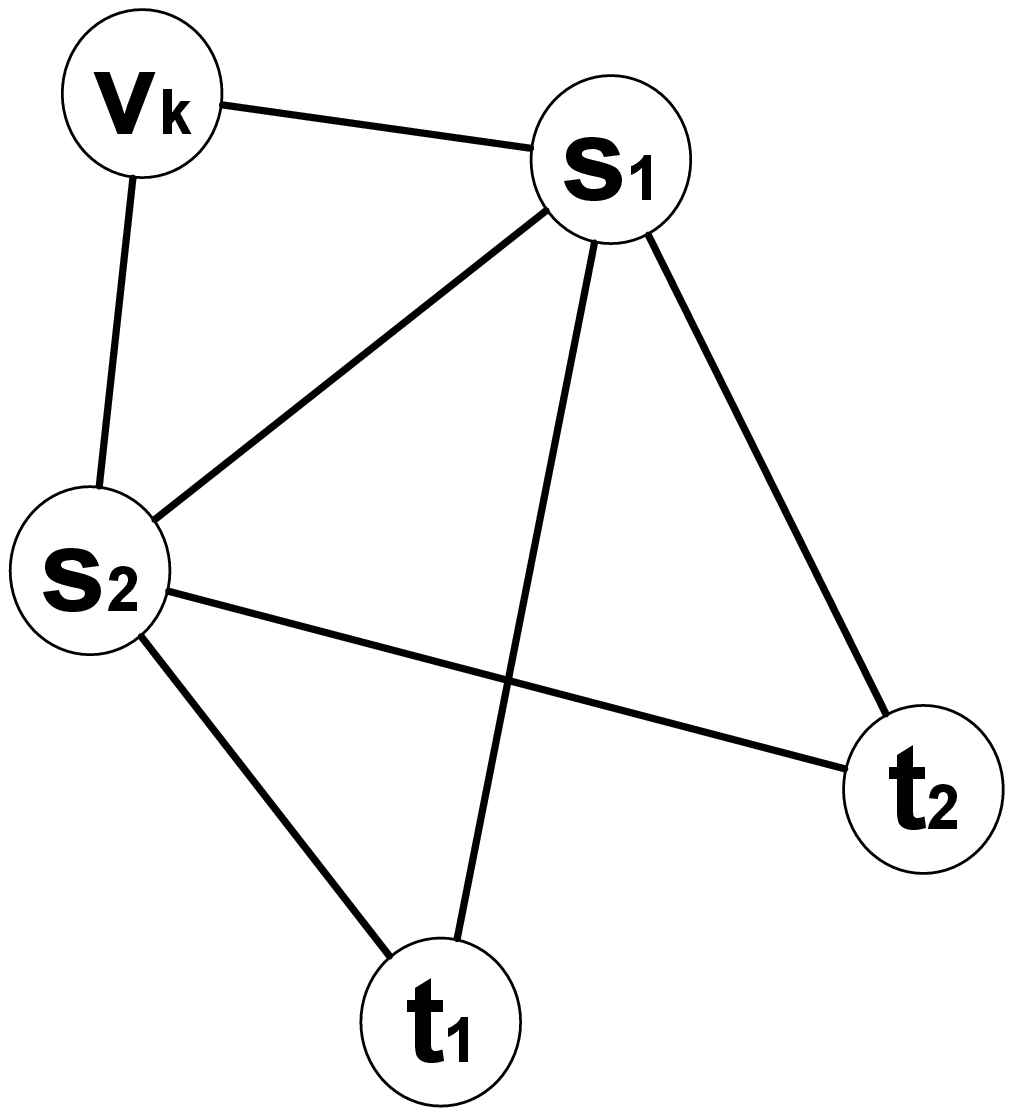}}
\subfigure[]{\includegraphics[width=1.56cm]{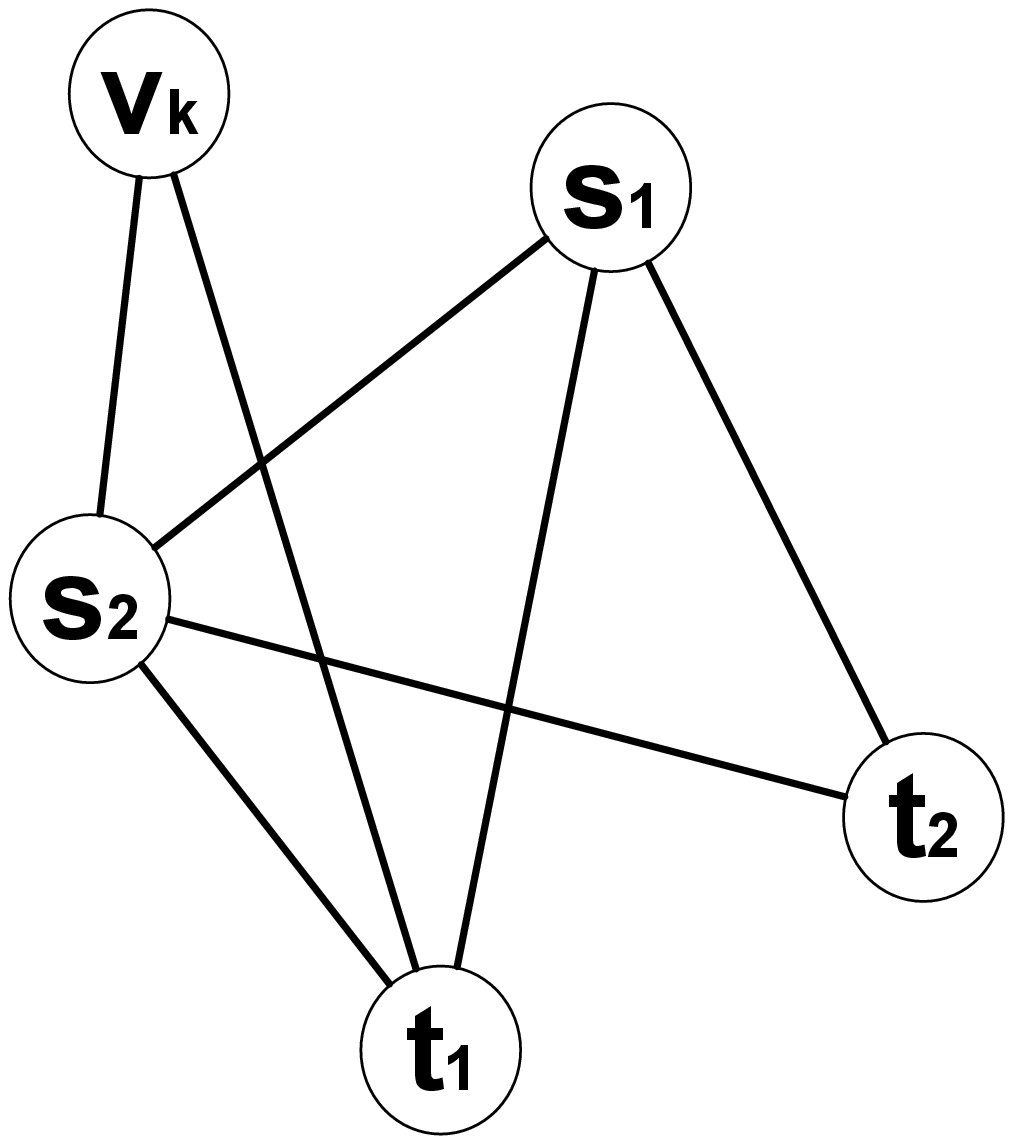}}
\subfigure[]{\includegraphics[width=1.56cm]{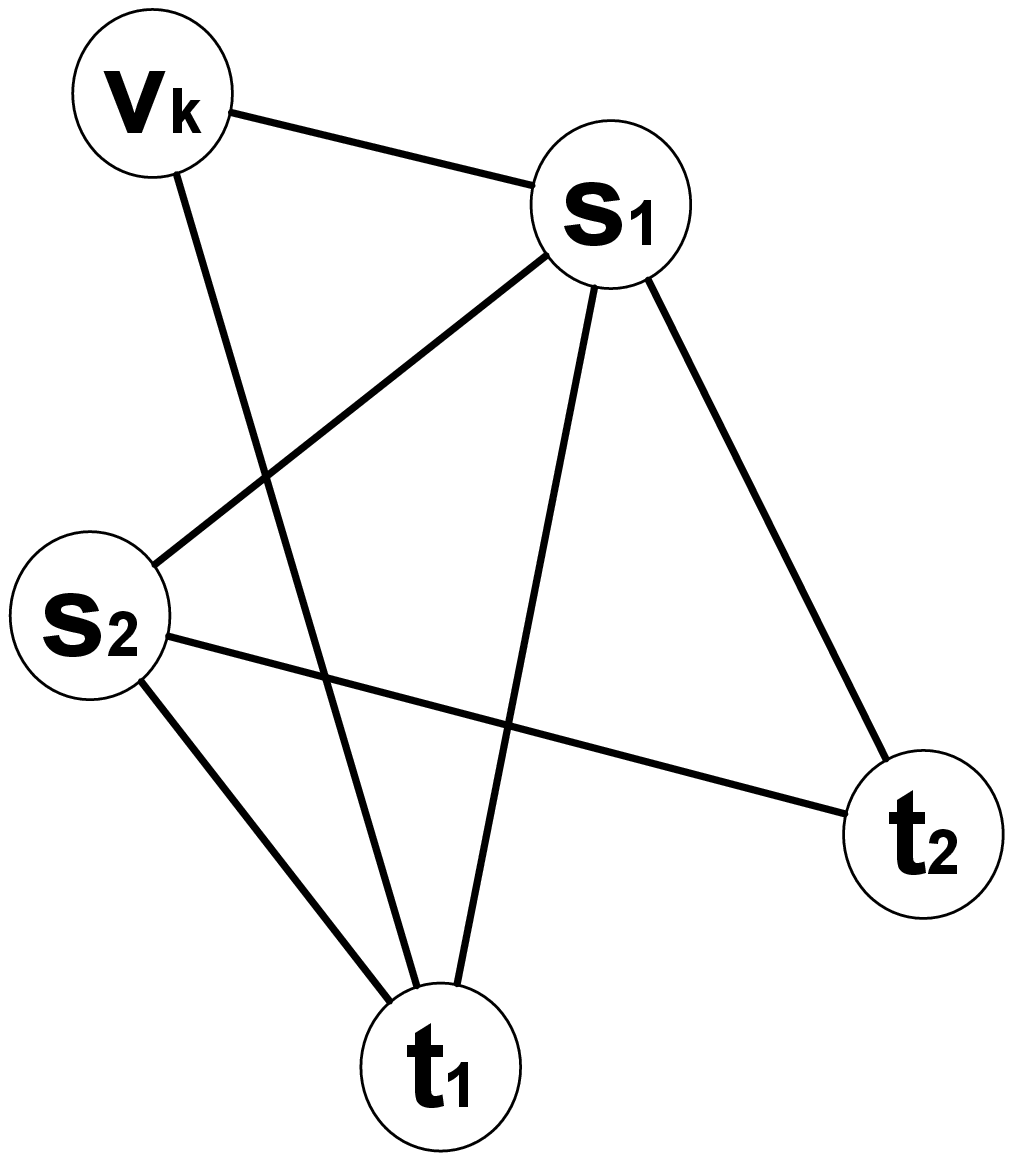}}
\subfigure[]{\includegraphics[width=1.56cm]{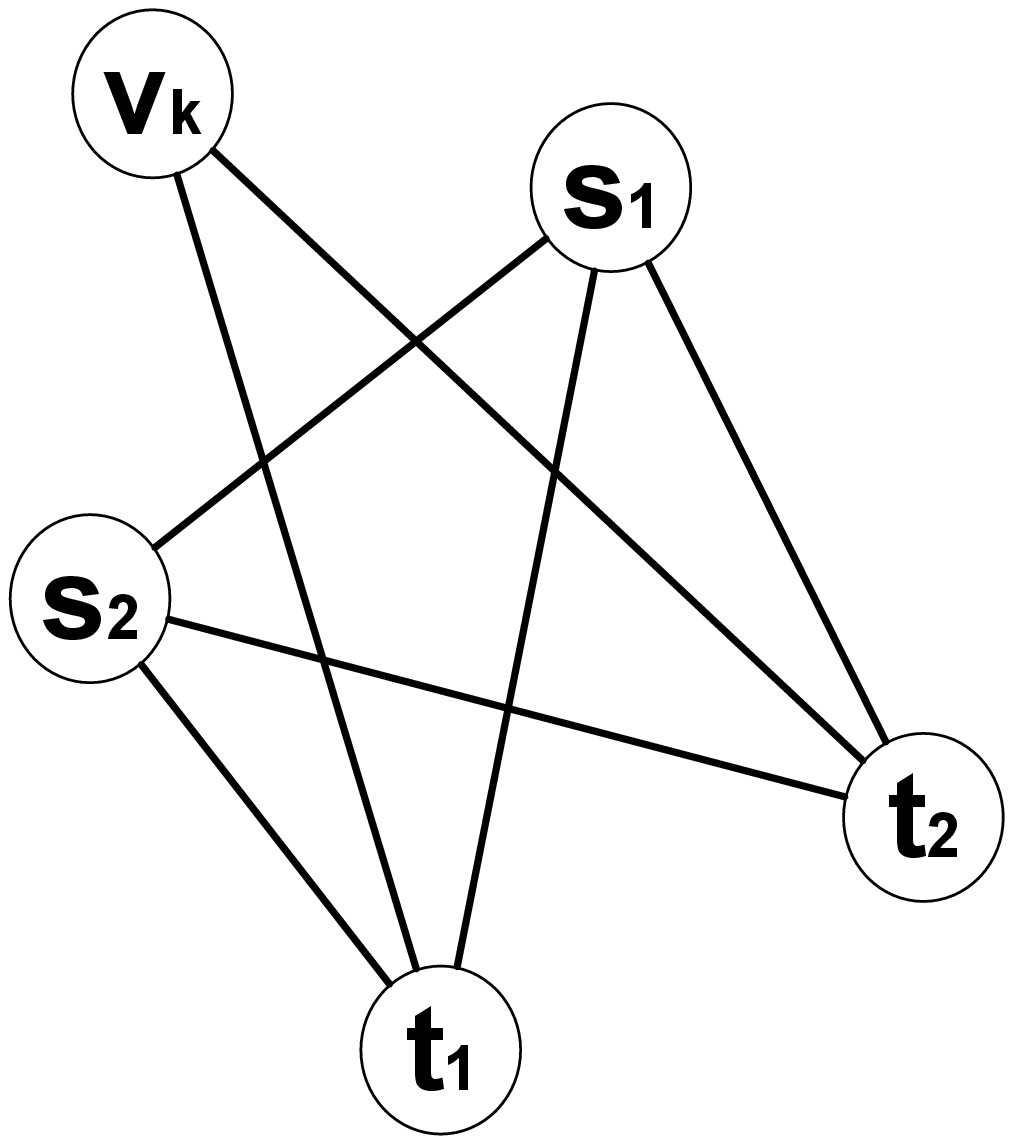}}
\subfigure[]{\includegraphics[width=1.56cm]{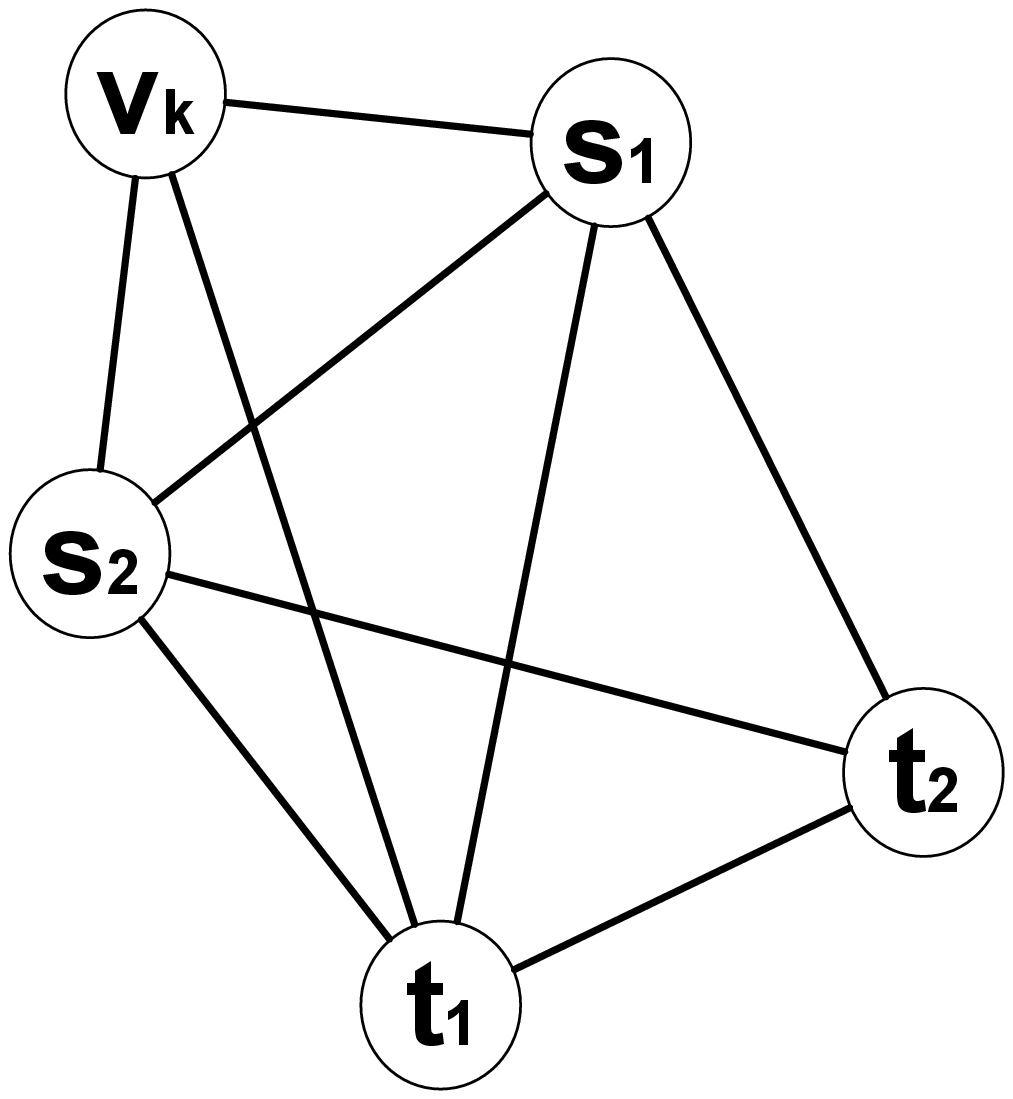}}
\subfigure[]{\includegraphics[width=1.56cm]{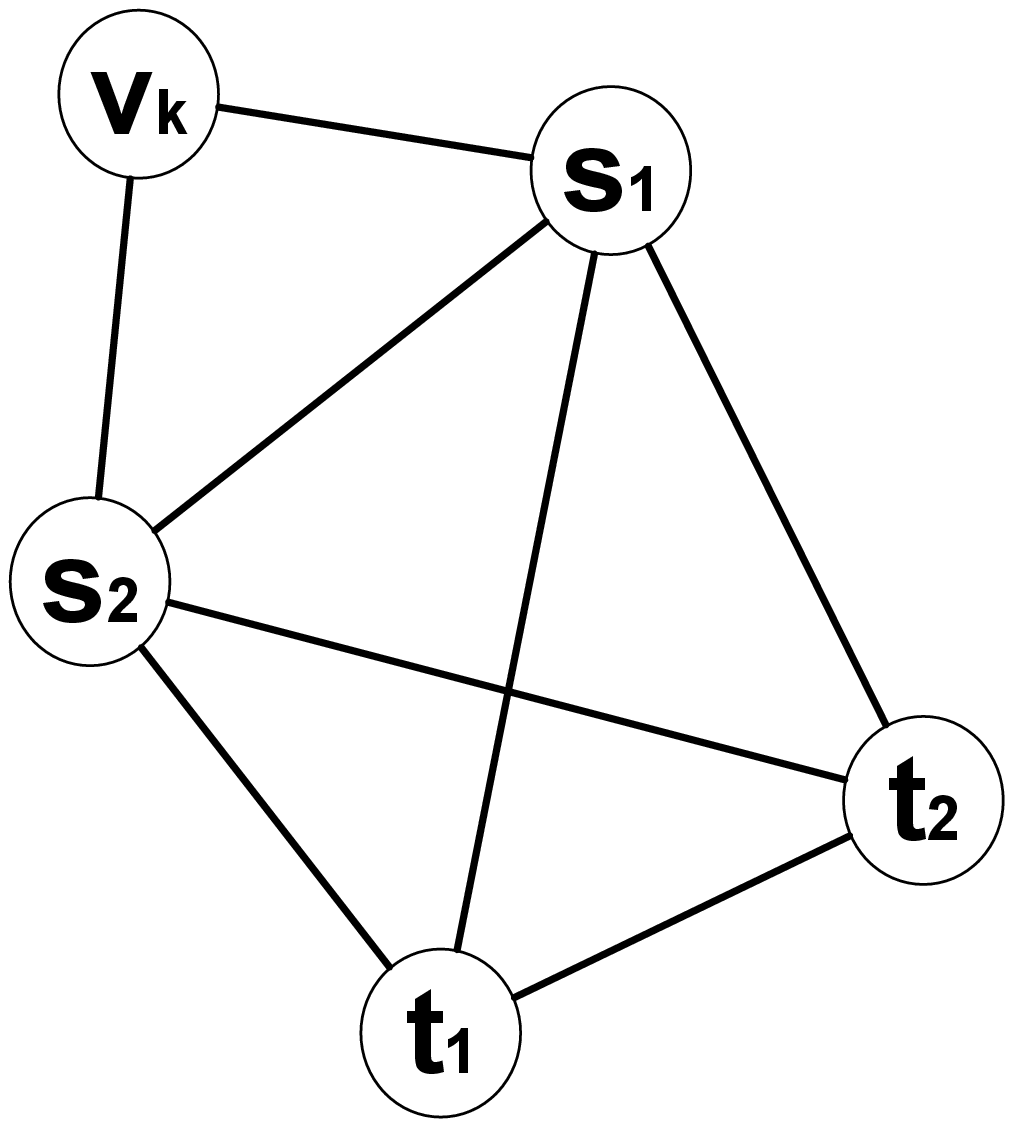}}
\caption{Graphs abiding by situation b) or c) of Proposition \ref{progeneve}, where (a)-(g) and (h)(i) are designed, respectively, from the topology structures (c) and (d) of Fig. \ref{Fcdtopo4}.}
\label{Fivnodesc}
\end{center}
\end{figure}
\begin{figure}[H]
\begin{center}
\subfigure[]{\includegraphics[width=1.56cm]{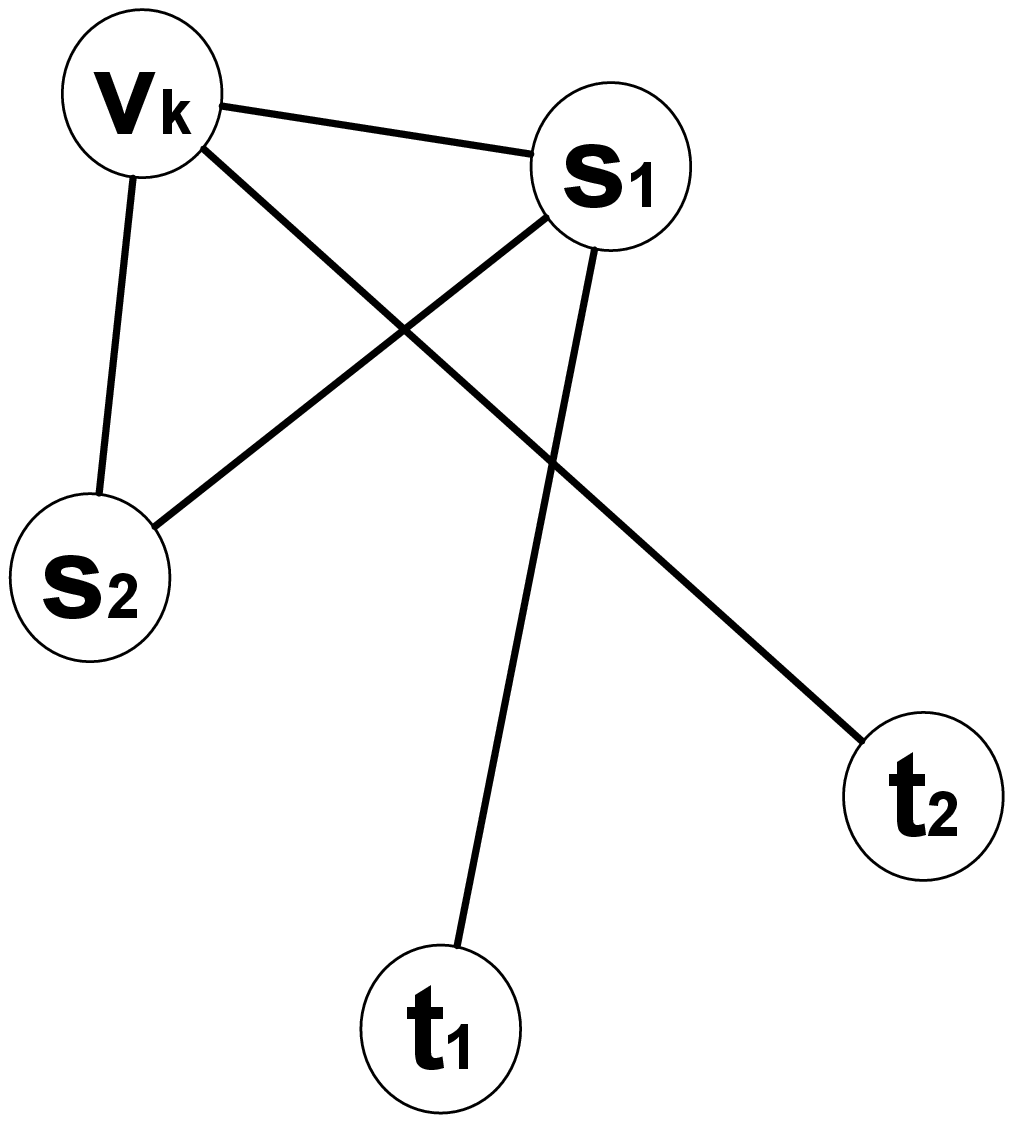}}
\subfigure[]{\includegraphics[width=1.56cm]{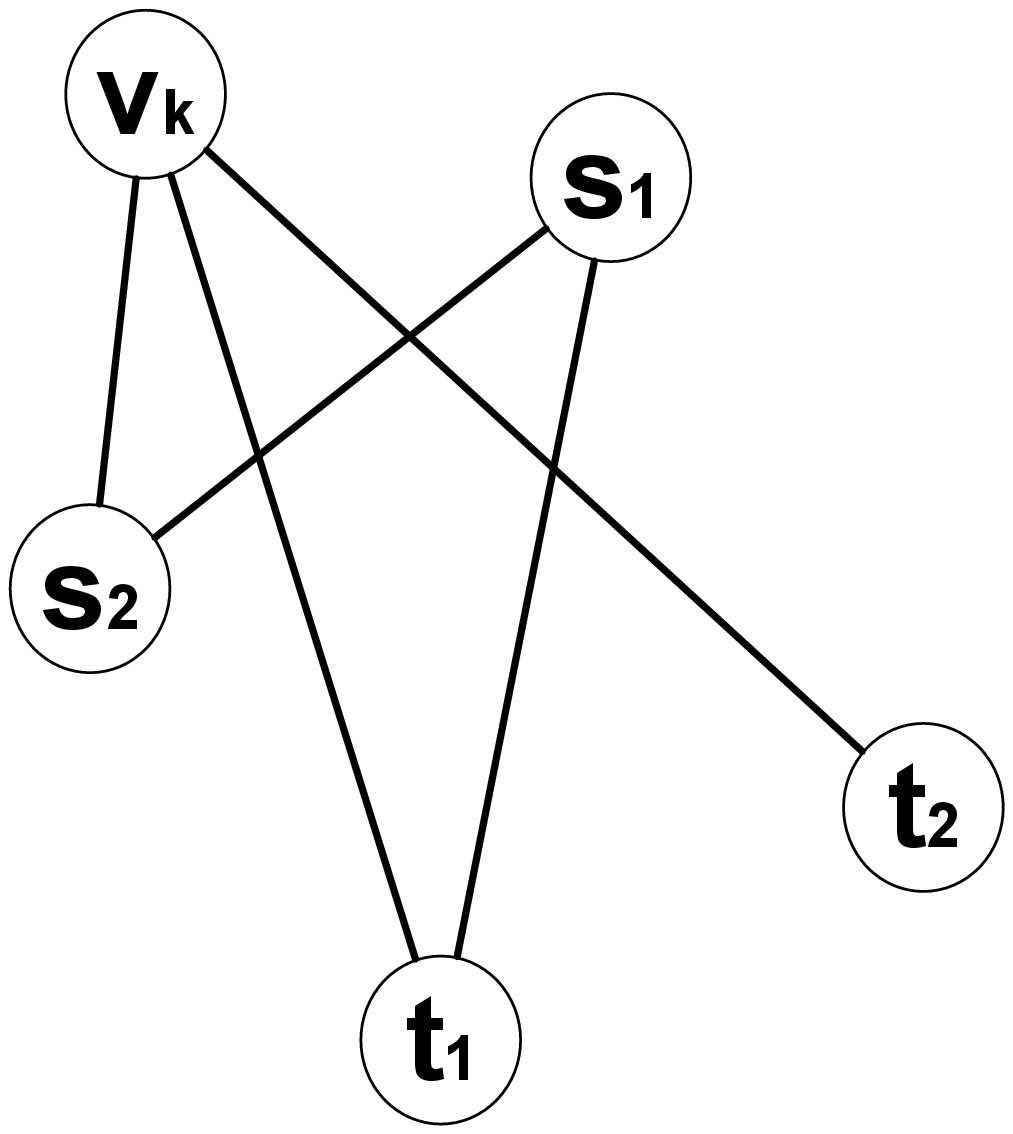}}
\subfigure[]{\includegraphics[width=1.56cm]{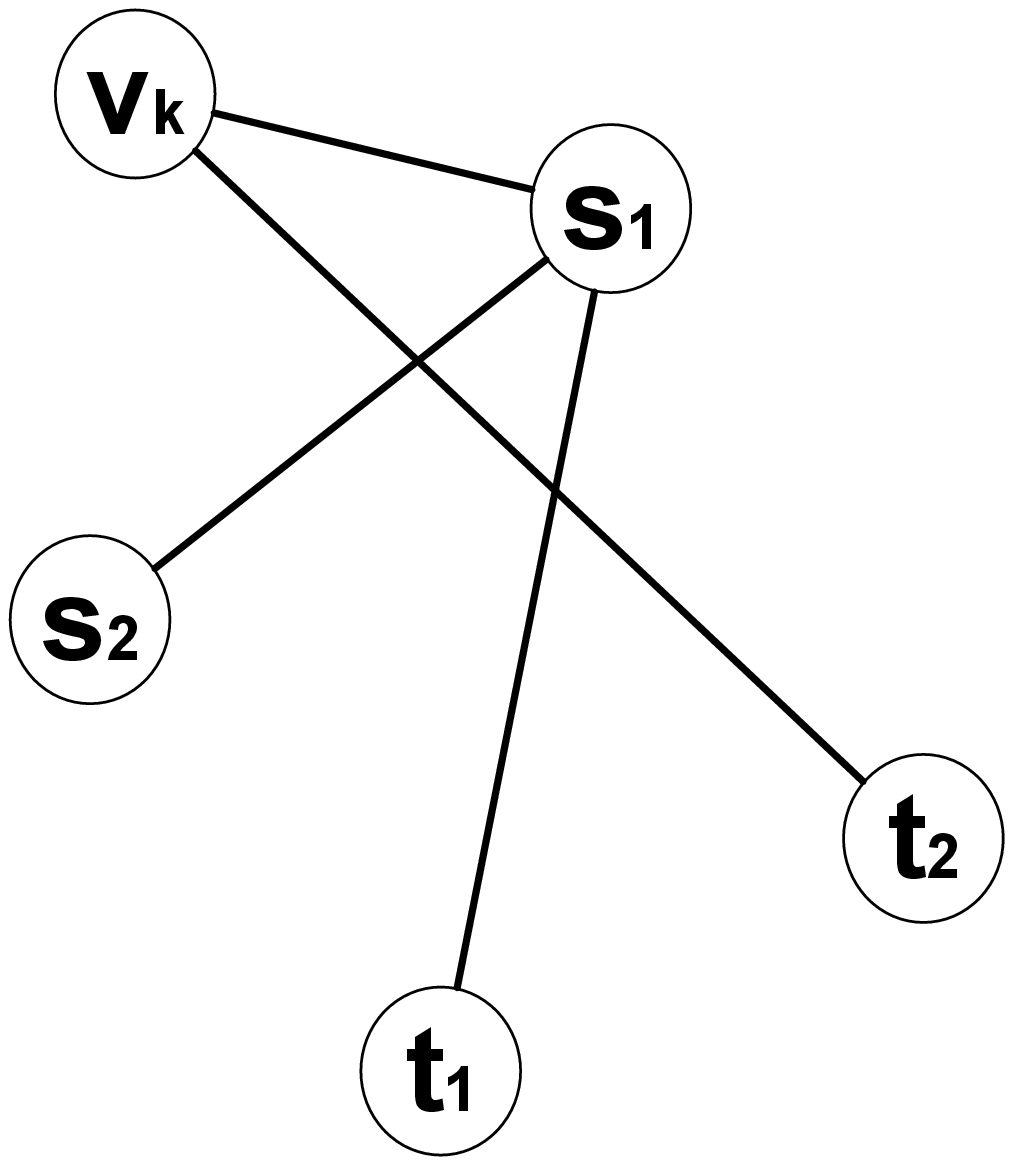}}
\subfigure[]{\includegraphics[width=1.56cm]{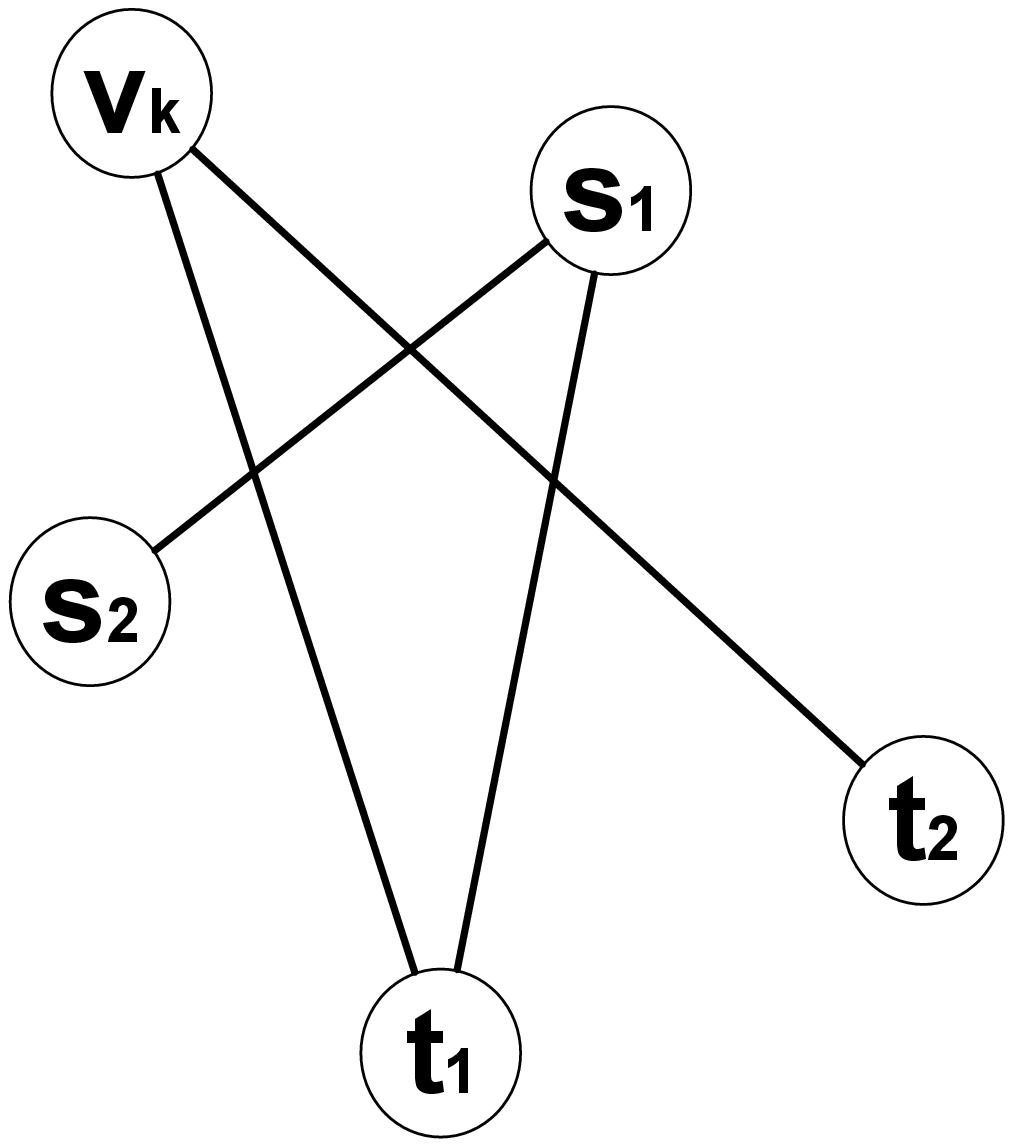}}
\subfigure[]{\includegraphics[width=1.56cm]{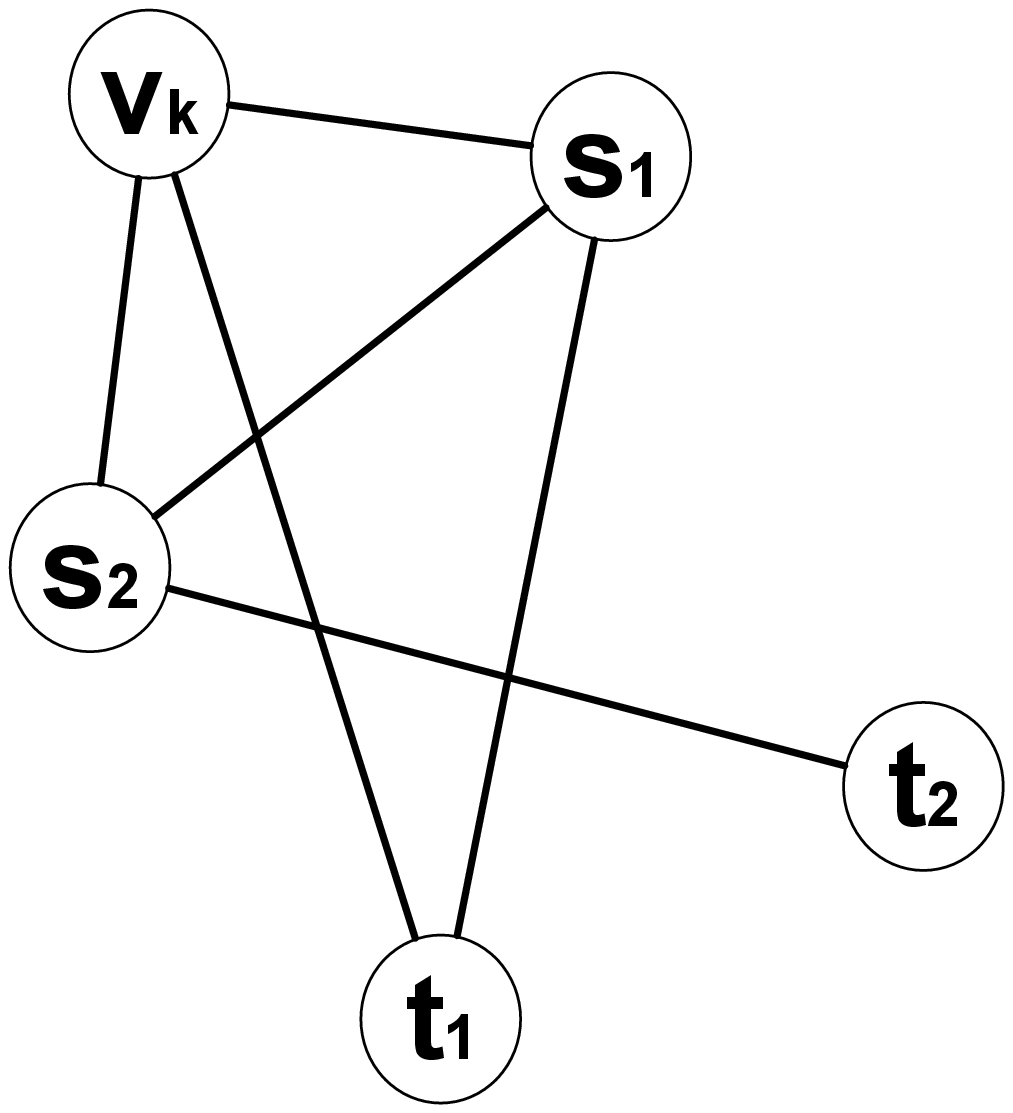}}
\subfigure[]{\includegraphics[width=1.56cm]{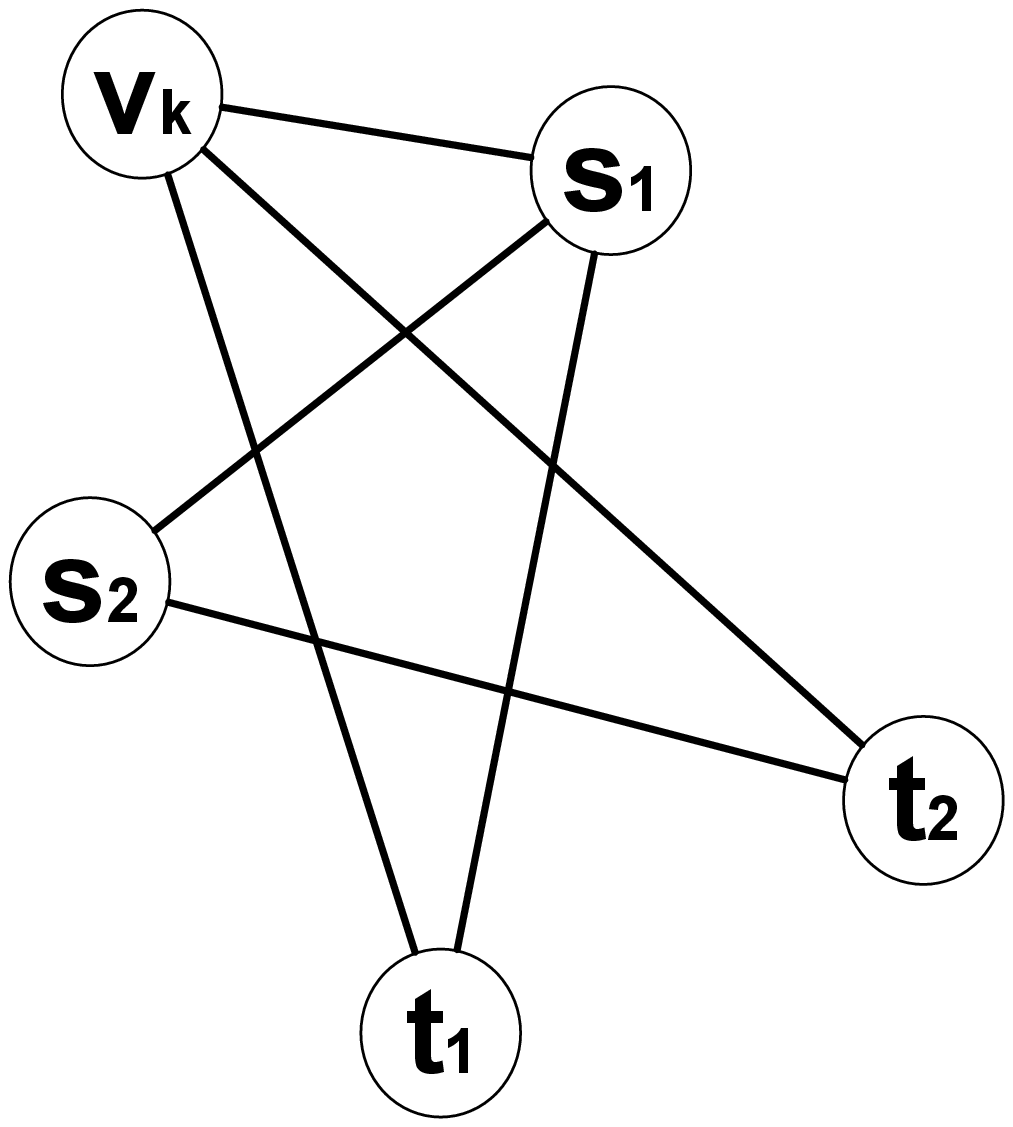}}
\subfigure[]{\includegraphics[width=1.56cm]{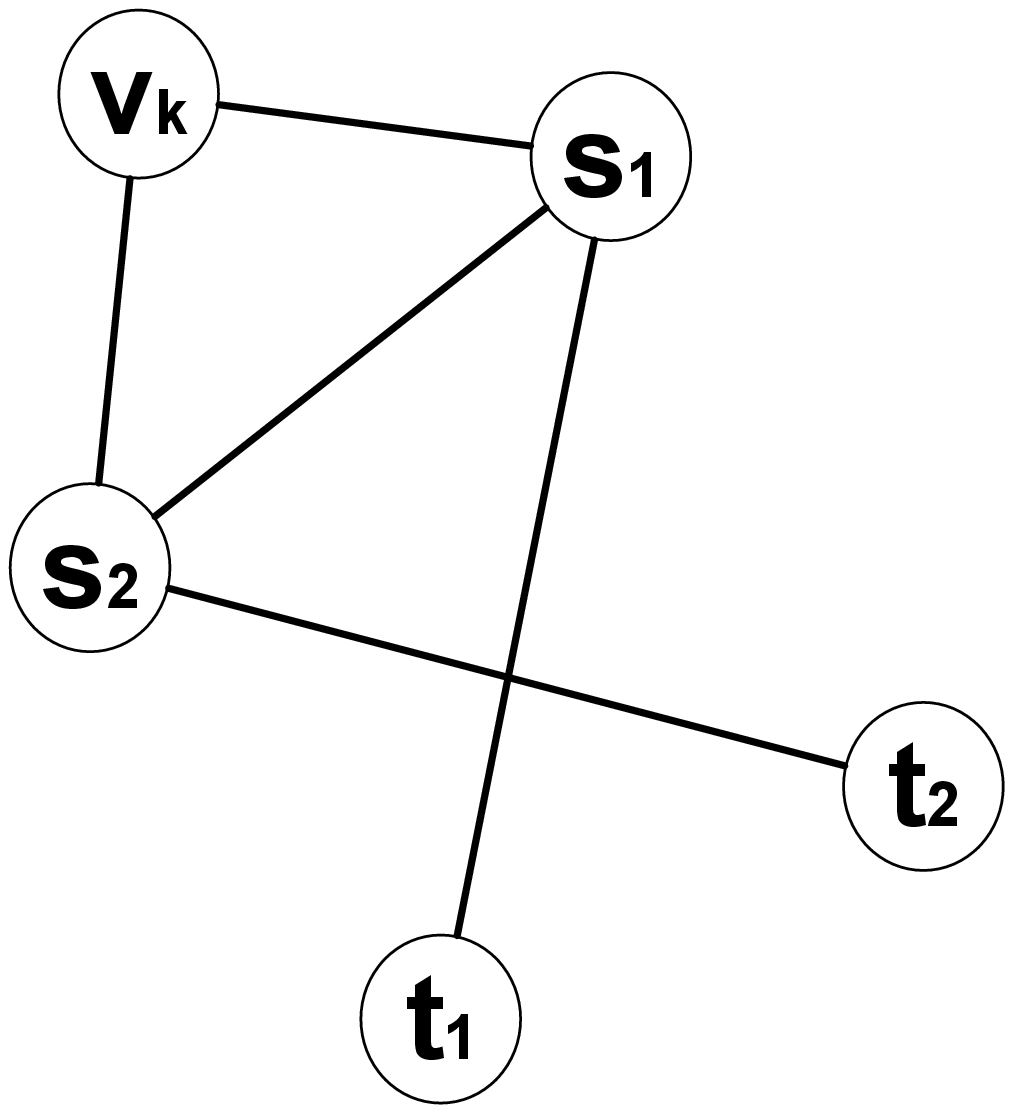}}
\subfigure[]{\includegraphics[width=1.56cm]{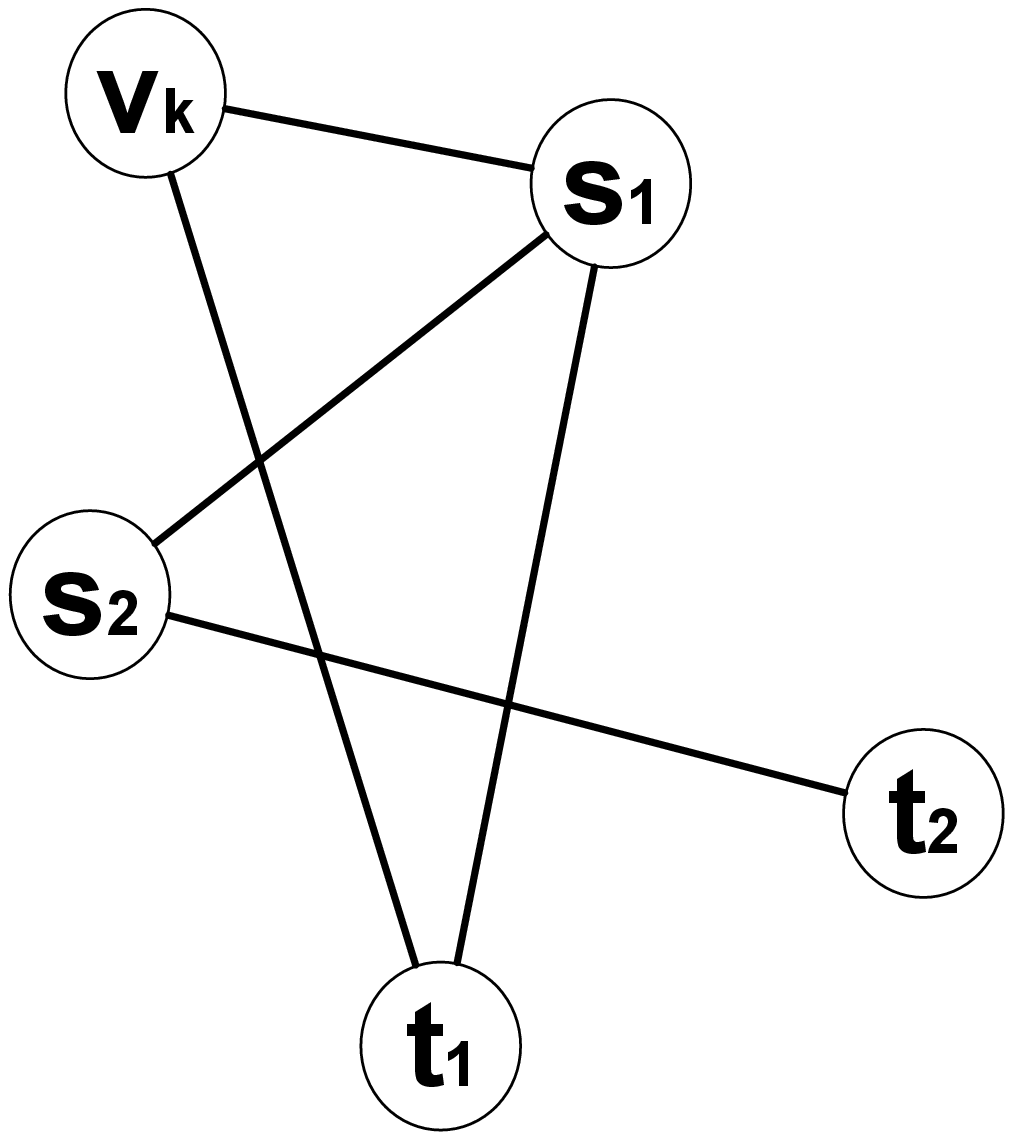}}
\subfigure[]{\includegraphics[width=1.56cm]{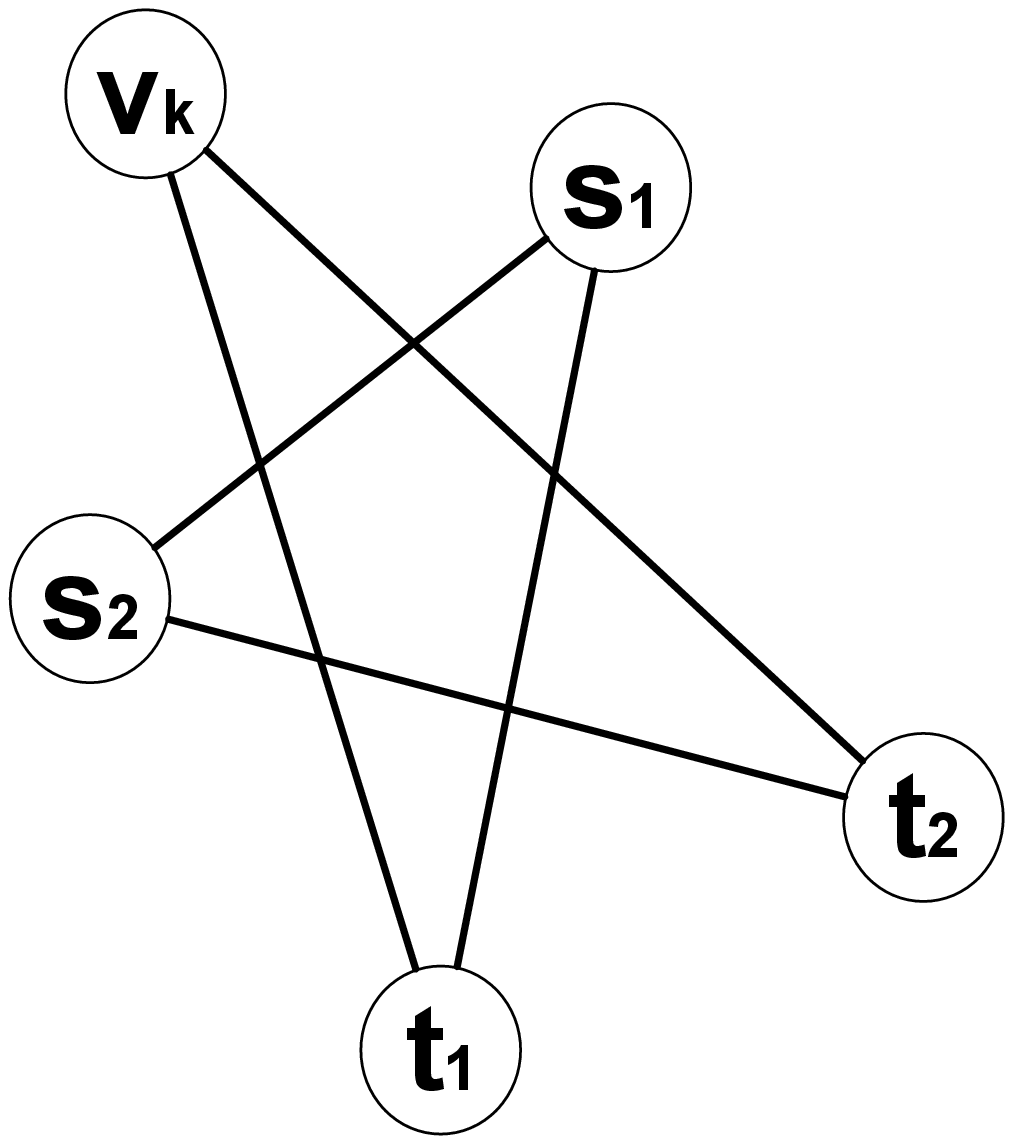}}
\subfigure[]{\includegraphics[width=1.56cm]{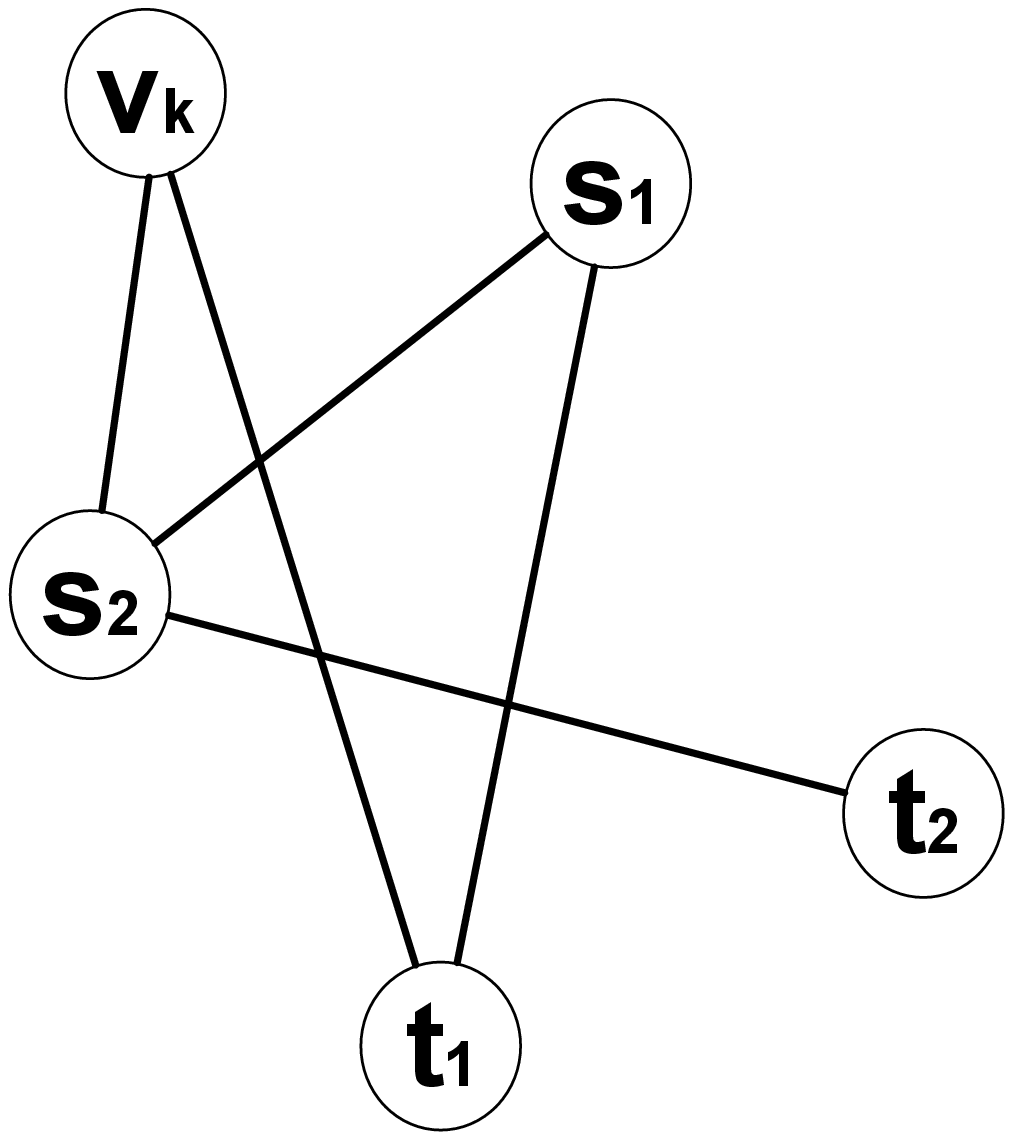}}
\caption{Graphs abiding by situation b) or c) of Proposition \ref{progeneve}, where (a)-(d) and (e)-(j) are designed, respectively, from  the topology structures (e) and (f) of Fig. \ref{Fcdtopo4}.}
\label{Fivnodesf}
\end{center}
\end{figure}
\begin{figure}[H]
\begin{center}
\subfigure[]{\includegraphics[width=1.56cm]{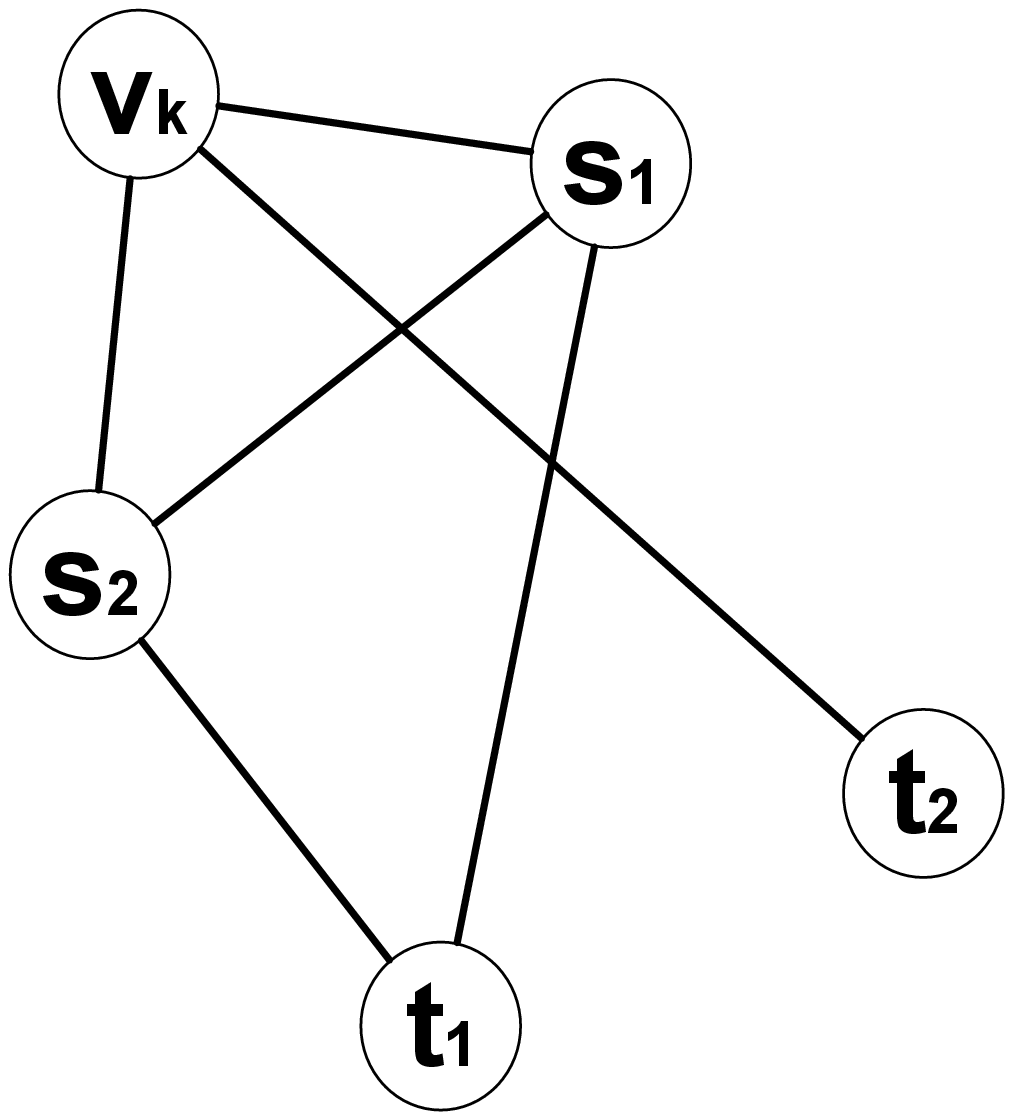}}
\subfigure[]{\includegraphics[width=1.56cm]{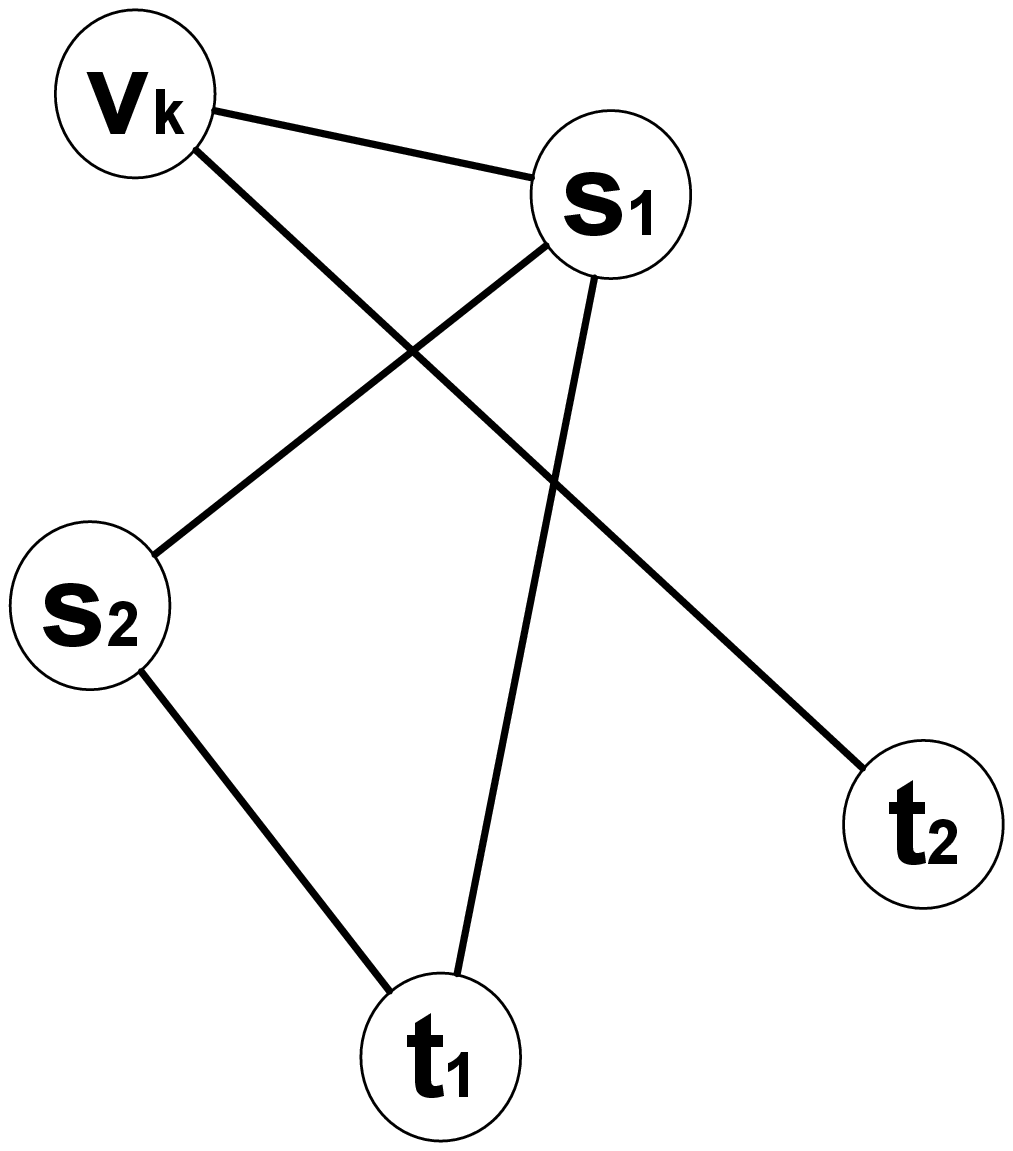}}
\subfigure[]{\includegraphics[width=1.56cm]{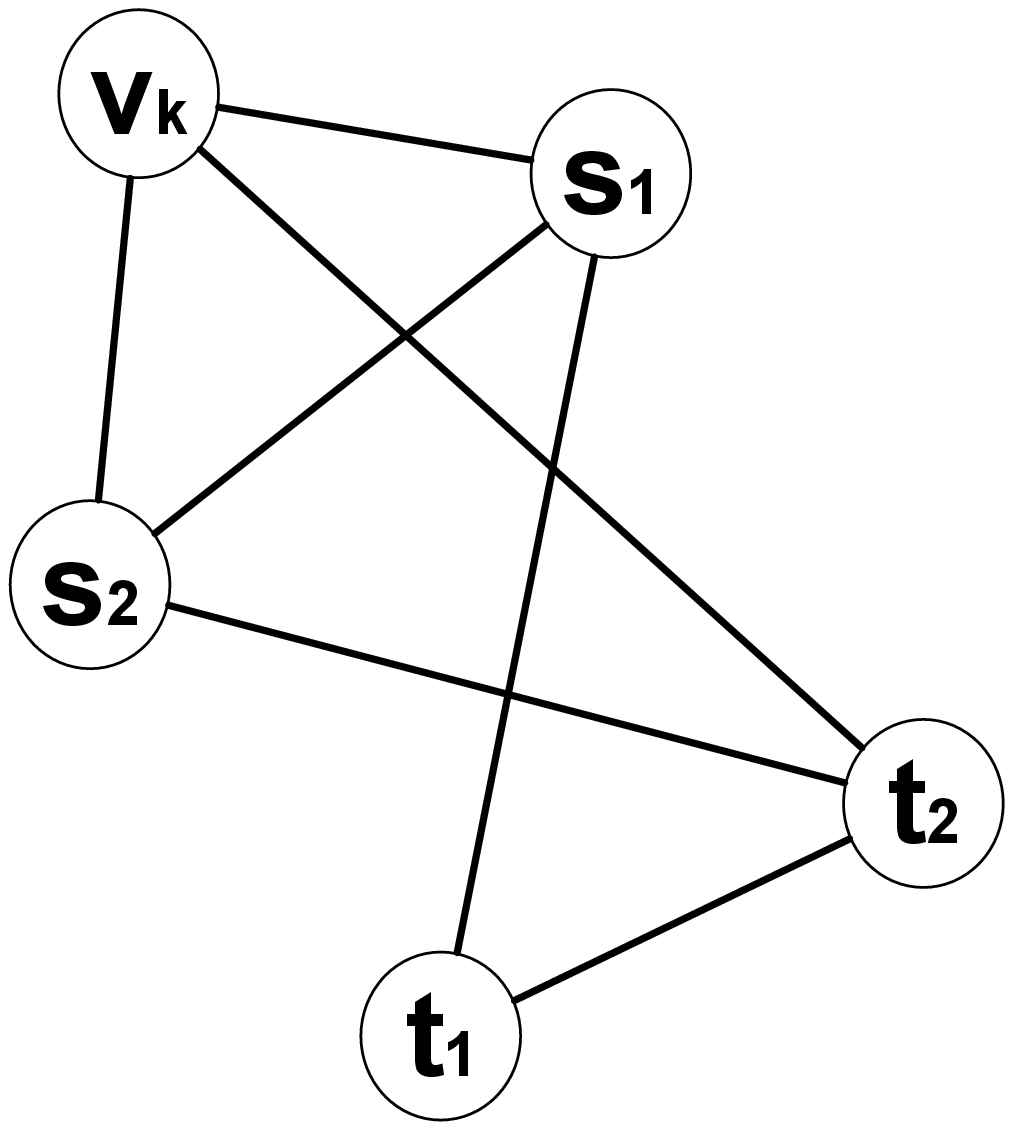}}
\subfigure[]{\includegraphics[width=1.56cm]{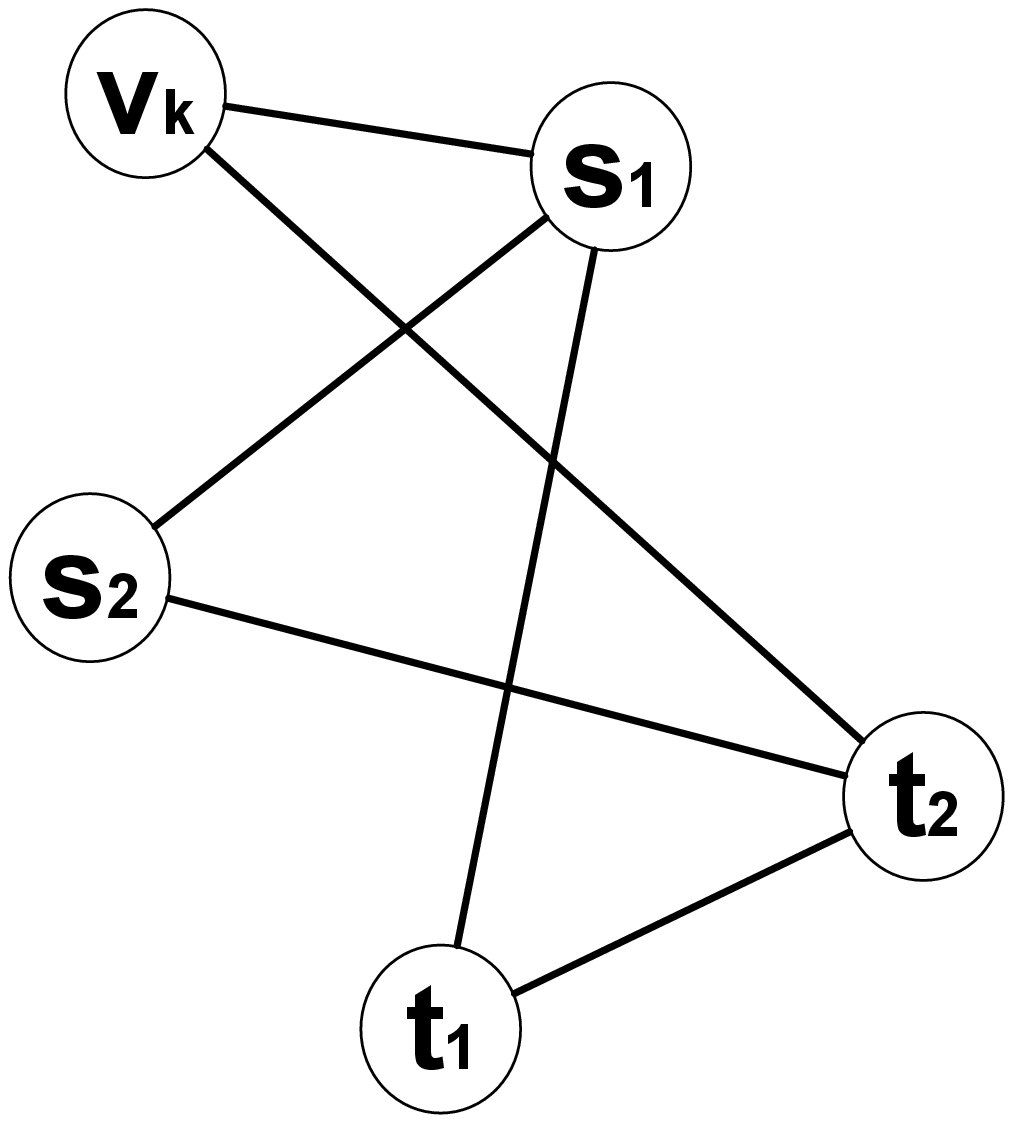}}
\subfigure[]{\includegraphics[width=1.56cm]{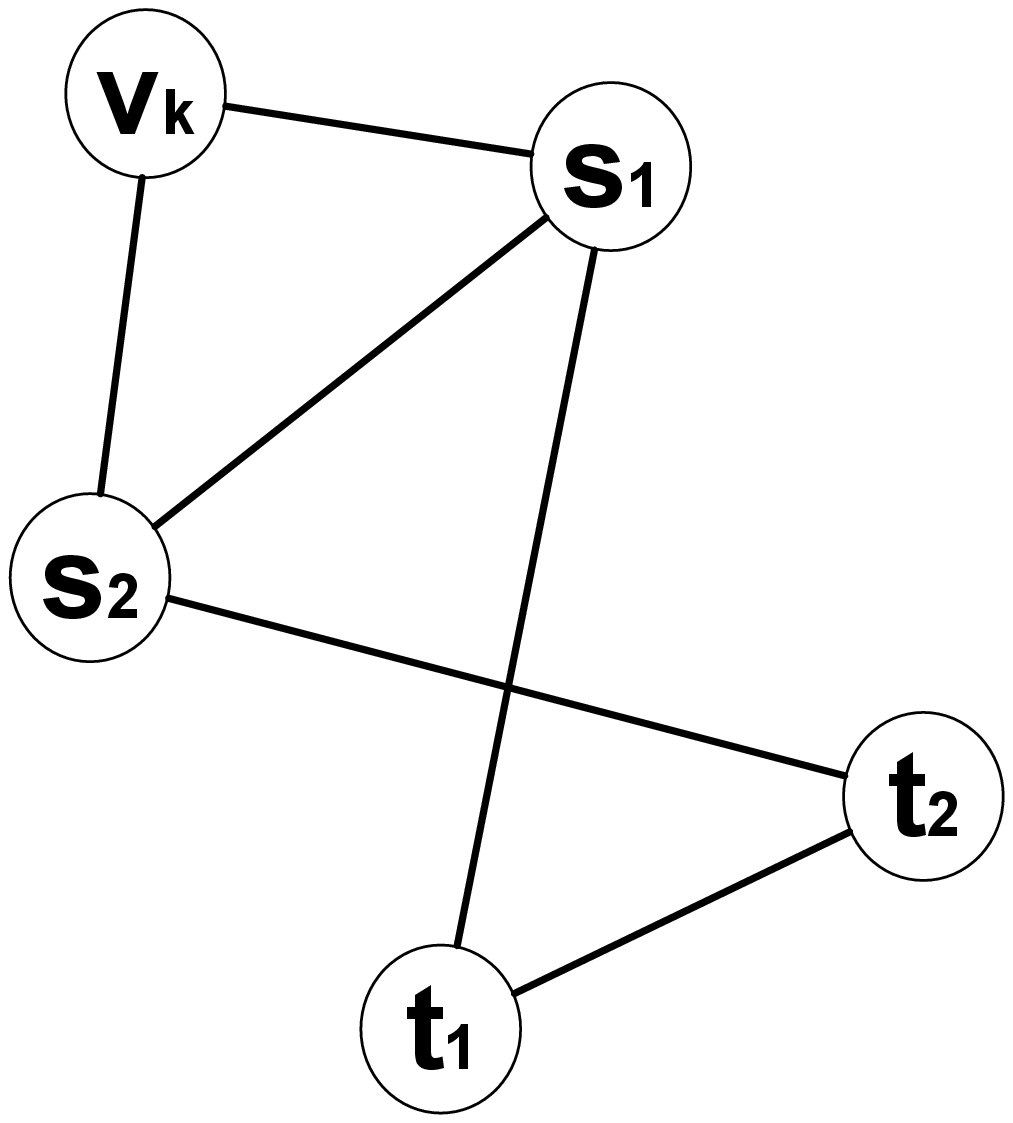}}
\subfigure[]{\includegraphics[width=1.56cm]{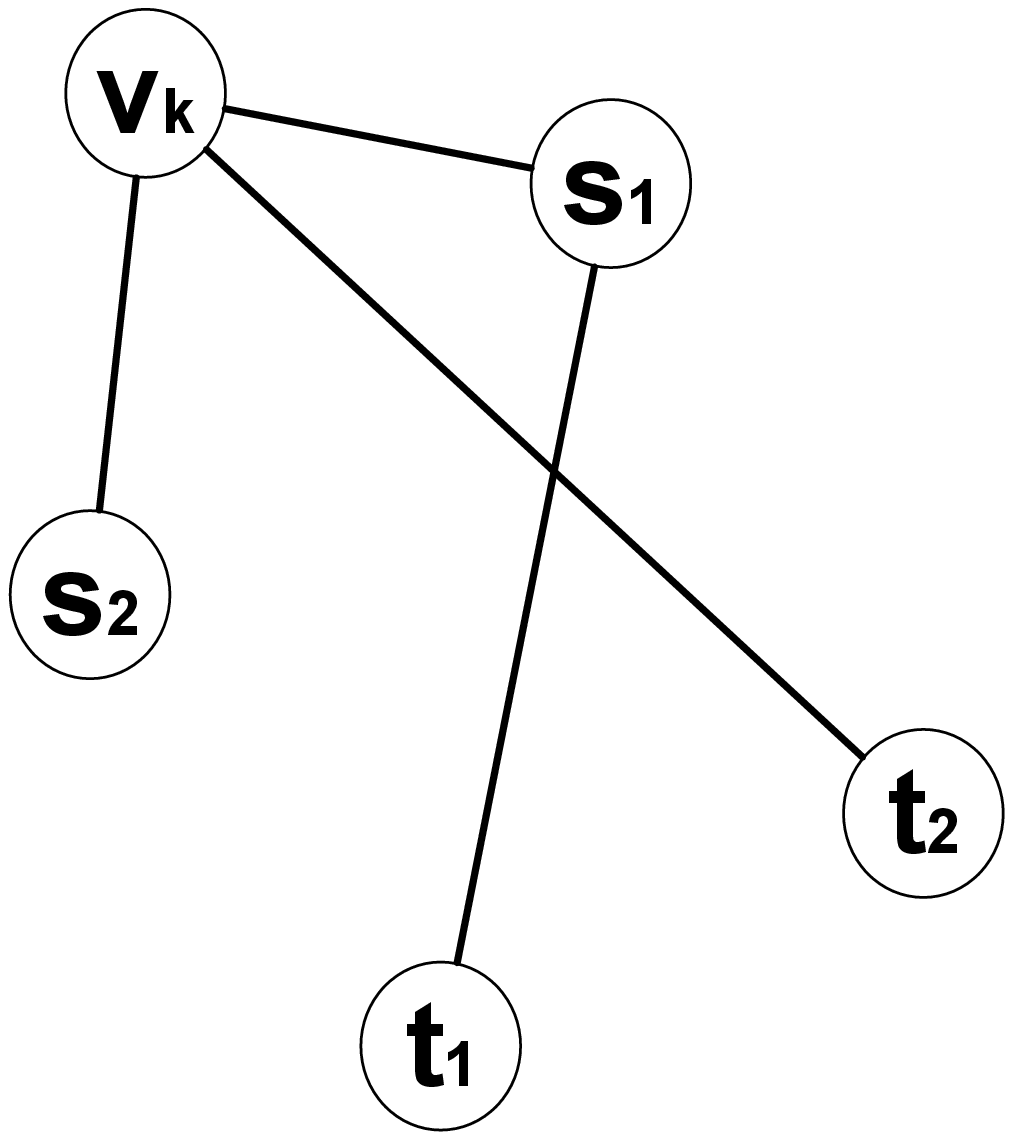}}
\subfigure[]{\includegraphics[width=1.56cm]{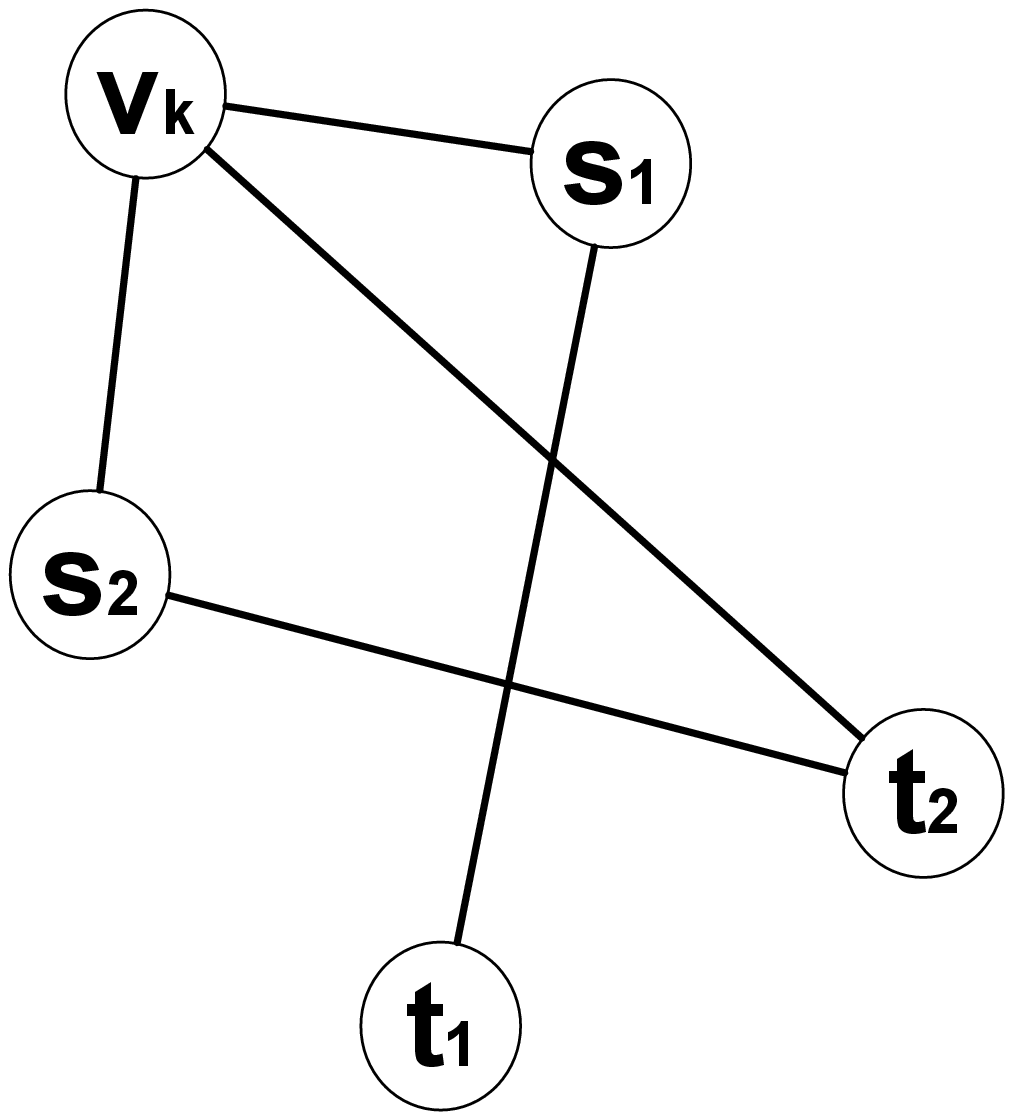}}
\subfigure[]{\includegraphics[width=1.56cm]{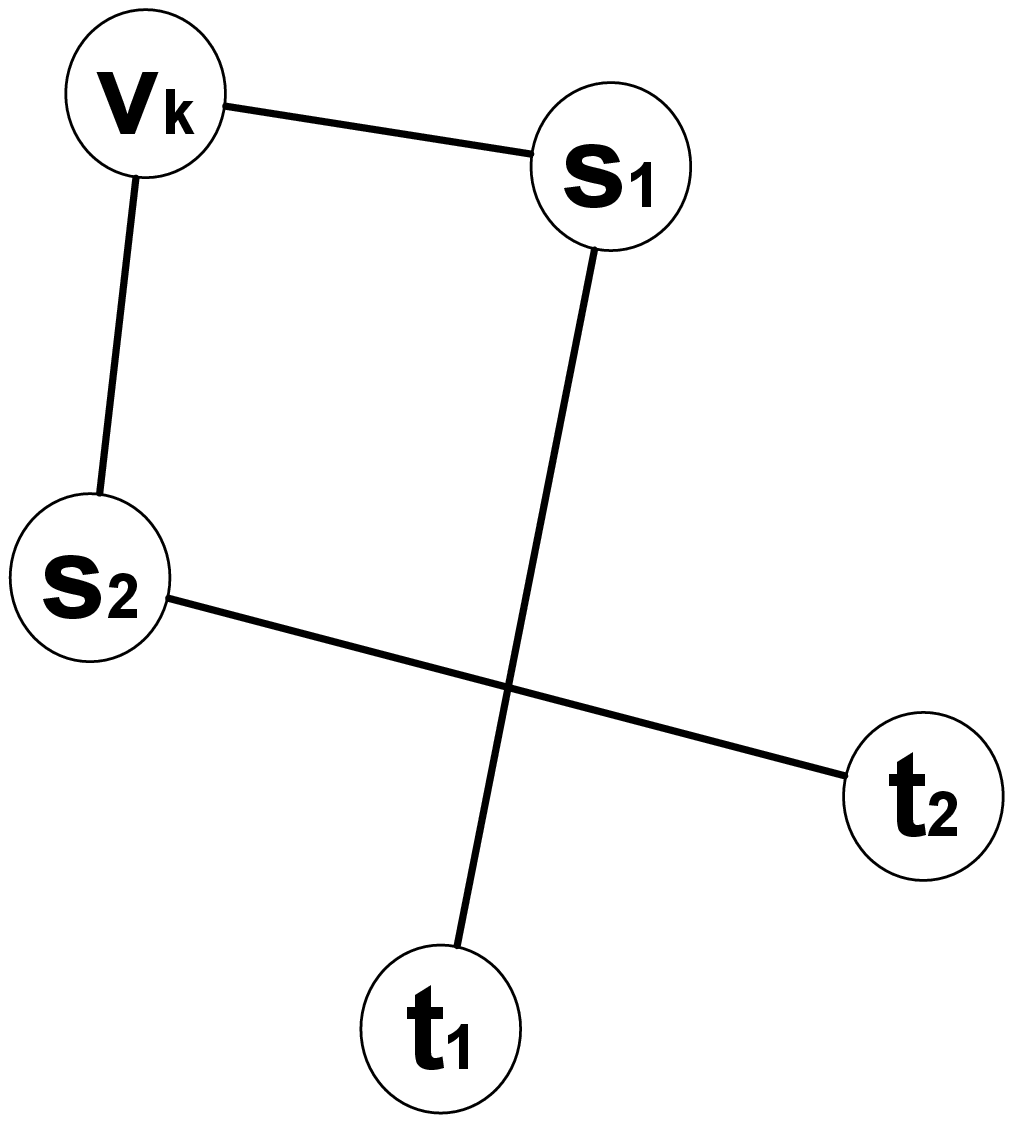}}
\caption{Graphs abiding by situation b) or c) of Proposition \ref{progeneve}, where (a)(b); (c)(d); (e); (f)(g)  are designed, respectively, from the topology structures (g) (h) (i) and (j) of Fig. \ref{Fcdtopo4}.}
\label{Fivnodesij}
\end{center}
\end{figure}

\begin{definition}\label{FCDdef}
For a graph consisting of five vertices $v_k, v_{s_1},v_{s_2},v_{t_1},v_{t_2}$, any four of them, say $v_{s_1},v_{s_2},v_{t_1},v_{t_2}$ are said to be quadruple controllability destructive (QCD) nodes if they conform to any of the following topologies:
\begin{itemize}
\item $v_{s_1},v_{s_2},v_{t_1},v_{t_2}$ take any of the topology structures of Fig. \ref{Fcdtopo4} with $v_k$ incident to all of them. In this case, the corresponding eleven graphs are depicted in Fig. \ref{Fivnodeelv}.

\item $v_{s_1},v_{s_2},v_{t_1},v_{t_2}$ take the topology structure (f) of Fig. \ref{Fcdtopo4} with $v_k$ incident to either $v_{s_1},v_{s_2}$ or $v_{t_1},v_{t_2}.$ The corresponding graphs are respectively (g) (i) of Fig. \ref{Fivnodesf}.

\item $v_{s_1},v_{s_2},v_{t_1},v_{t_2}$ take the topology structures (h) (j) of Fig. \ref{Fcdtopo4} with $v_k$ incident to $v_{s_1},v_{s_2}.$ In this case, the corresponding graphs are respectively (e) (h) of Fig. \ref{Fivnodesij}. 
\end{itemize}
\end{definition}

Relabel  $v_k=v_1, v_{s_1}=v_2,$ $v_{t_2}=v_3, v_{t_1}=v_4, v_{s_2}=v_5.$

\begin{lemma}\label{fivelem}
For a graph $\mathcal{G}$ consisting of five vertices, $\bar y=[0,y_2,y_3,y_4,y_5]$ with $y_2,y_3,y_4,y_5\neq 0$ is an eigenvector of $\mathcal{L}$ if and only if $v_2,v_3,v_4,v_5$ are QCD nodes of $\mathcal{G}.$ 
\end{lemma}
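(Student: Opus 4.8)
The plan is to determine, among all five-vertex graphs, exactly those for which the prescribed $\bar y$ (first entry zero, the other four nonzero) solves $\mathcal{L}\bar y = \lambda\bar y$, and then to match this set against Definition \ref{FCDdef}. First I would write out the eigen-condition $d_k y_k - \sum_{i\in\mathcal{N}_k} y_i = \lambda y_k$ for each $k=1,\ldots,5$. Because $y_1=0$ (recall $v_1=v_k$ after relabelling), the equation at $v_1$ collapses to $\sum_{i\in\mathcal{N}_1} y_i = 0$, a linear relation among the nonzero entries $y_2,y_3,y_4,y_5$ fixed by which of these $v_1$ is adjacent to. The four remaining equations couple the entries through the internal adjacencies of $\{v_2,v_3,v_4,v_5\}$ and their attachment to $v_1$.

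For necessity I would invoke Proposition \ref{progeneve}. Since $v_1$ is the unique vertex outside $\{v_2,v_3,v_4,v_5\}=\{v_{s_1},v_{s_2},v_{t_1},v_{t_2}\}$, the proposition forces the internal topology of these four nodes to be one of the eleven configurations of Fig. \ref{Fcdtopo4}, and forces $\mathcal{N}_{1f}$ to realize one of the cases a), b), c), e) (case d) being impossible). This reduces the problem to the finite list of candidate graphs already drawn in Figures \ref{Fivnodeelv}--\ref{Fivnodesij}. For each candidate I would impose the four eigen-equations: Lemma \ref{fivenolem} supplies the resulting degree relations when the graph is designed from topology (a), namely (\ref{situaeqn}), (\ref{foucases}) and (\ref{sitaceqn}), and I would derive the exact analogues for the remaining topologies by the same elimination of the entry ratios. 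Because on five vertices every degree is a small integer, these relations are easy to test: a candidate yields a genuine eigenvector with all four entries nonzero precisely when the degree relation admits a common eigenvalue and no entry is forced to vanish. Running this test through the whole list should leave exactly the graphs enumerated in Definition \ref{FCDdef}.

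For sufficiency I would take each QCD graph of Definition \ref{FCDdef} and exhibit the eigenvector explicitly: read off the required $\lambda$ (for instance $\lambda=d_{s_1}+1$ in the situation-a) cases, as in the proof of Lemma \ref{fivenolem}) and the entry ratios from (\ref{fifcda})--(\ref{s1eigcon}), then verify that the four eigen-equations hold simultaneously with $\sum_{i\in\mathcal{N}_1} y_i=0$. This amounts to a direct substitution for each listed graph.

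The main obstacle is the breadth of the case analysis in the necessity direction. The difficulty lies not in any single computation but in the bookkeeping: for each of the eleven internal topologies combined with each admissible attachment of $v_1$, one must set up the correct analogue of Lemma \ref{fivenolem}, solve the resulting degree/eigenvalue consistency condition over the integers, and confirm that the surviving graphs are precisely those in Definition \ref{FCDdef}. The elimination of the non-QCD candidates---showing their degree relations either share no common eigenvalue or force some $y_i=0$---is where the effort concentrates.
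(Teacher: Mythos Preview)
Your proposal is correct and follows essentially the same route as the paper: invoke Proposition~\ref{progeneve} to cut the search down to the finitely many candidates in Figures~\ref{Fivnodeelv}--\ref{Fivnodesij}, use the degree relations of Lemma~\ref{fivenolem} (and their analogues for the other internal topologies) to eliminate all non-QCD candidates, and for sufficiency exhibit the eigenvector explicitly on each surviving graph. One small point the paper makes explicit that you left implicit: case e) of Proposition~\ref{progeneve} must also be discarded, since with only $v_1$ outside $\{v_2,v_3,v_4,v_5\}$ it would isolate $v_1$ and disconnect $\mathcal{G}$; once that is noted, exactly one of a), b), c) applies, which is what produces the candidate list you rely on.
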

\begin{proof}
(Necessity)
Let $\bar y$ be an eigenvector of $\mathcal{L}.$ Since $\mathcal{V}\setminus\{v_{2}, v_{3}, v_{4}, v_{5}\}$ contains only one element $v_1$ for a graph of five vertices, situation e) of Proposition \ref{progeneve} cannot occur (or else, $v_1$ will be isolated), and any two of a) b) c) do not arise simultaneously. Thus all connected graphs complying with i) or ii) of Proposition \ref{progeneve} can be generated by just following one and only one of a) b) c), and accordingly, by Proposition \ref{progeneve}, constitute all the possible graphs of five nodes with $\bar y$ being an eigenvector. All these graphs are shown in Figures \ref{Fivnodeelv} to \ref{Fivnodesij}.  

First, consider graphs designed from (a) of Fig. \ref{Fcdtopo4}. Calculations show that the necessary condition (\ref{situaeqn}) of Lemma \ref{fivenolem} is met by graph (a) of Fig. \ref{Fivnodeelv}, and condition (\ref{foucases}) is not met by (a) (b) of Fig. \ref{Fivnodesb}, nor is condition (\ref{sitaceqn}) met by (c) (d) of Fig. \ref{Fivnodesb}. Thus (a) (b) (c) (d) of Fig. \ref{Fivnodesb} are excluded from the graphs with $\bar y$ being an eigenvector. For graphs designed from the other topologies of Fig. \ref{Fcdtopo4}, similar arguments yield that only (g) (i) of Fig. \ref{Fivnodesf} and (e) (h) of Fig. \ref{Fivnodesij} satisfy the associated necessary conditions of $\bar y$ being an eigenvector. Thus if $\bar y$ is an eigenvector, $v_{2}, v_{3}, v_{4}, v_{5}$ are QCD nodes. 

(Sufficiency)
For graph (a) of Fig. \ref{Fivnodeelv} with QCD nodes $v_2,v_3,v_4,v_5;$ $d_{1}=d_{2}=4,$ $d_{3}=d_{4}=d_{5}=2.$ For $v_1,$ the eigencondition requires  
$
4{y_1} - ({y_2} + {y_3} + {y_4} + {y_5}) = \lambda {y_1}.
$
Set $y_1=0,$ then ${y_2} + {y_3} + {y_4} + {y_5} = 0.$ For $v_5, v_4,$ the eigencondition respectively yields $2{y_5} - {y_2} = \lambda {y_5}$ and $2{y_4} - {y_2} = \lambda {y_4}.$ Thus
$
(2 - \lambda )({y_4} - {y_5}) = 0.
$
Similarly, for  $v_3,$ $(2 - \lambda )({y_3} - {y_4}) = 0$ and for $v_2,$ 
$
4{y_2} - ({y_3} + {y_4} + {y_5}) = \lambda {y_2}.
$
Take $y_3=y_4=y_5,$ the above arguments show that ${y_2} =  - 3{y_3}.$ Hence $\bar y=[0,-3,1,1,1]$ is an eigenvector of graph (a) of Fig. \ref{Fcdtopo4} with the corresponding eigenvalue $\lambda=5.$ It can be verified in the same way for the other graphs with QCD nodes that $\mathcal{L}$ has an eigenvector $\bar y.$ 
\end{proof}

\begin{theorem}\label{fivenosin}
For a communication graph consisting of five vertices, there is a single leader, denoted by $v_1,$ such that the multi-agent system with single-integrator dynamics (\ref{singmul}) is controllable if and only if the following three conditions are met simultaneously:
\begin{itemize}
\item $\mathcal{V}\setminus\{v_1\}=\{v_2,v_3,v_4,v_5\}$ do not constitute a group of QCD nodes;

\item any three of $v_2,v_3,v_4,v_5$ are not TCD nodes;

\item any two of $v_2,v_3,v_4,v_5$ are not DCD nodes.
\end{itemize}
\end{theorem}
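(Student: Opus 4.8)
The plan is to invoke Proposition \ref{singPro} to translate controllability into the non-existence of a single structured eigenvector of $\mathcal{L}$, and then to classify all such eigenvectors by how many follower coordinates are nonzero, matching each possibility to one of the three destructive-node lemmas. With $v_1$ the single leader and $\{v_2,v_3,v_4,v_5\}$ the followers, Proposition \ref{singPro}(ii) states that the system is controllable if and only if $\mathcal{L}$ has no eigenvector of the form $\bar y=[0,y_2,y_3,y_4,y_5]^T$ (zero in the leader coordinate). I would therefore establish that the three listed conditions are collectively equivalent to the absence of every such $\bar y$, treating the two directions of the biconditional separately.

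For the necessity of the three conditions I would argue contrapositively: presence of any forbidden structure among the followers forces uncontrollability. If some pair of $\{v_2,v_3,v_4,v_5\}$ are DCD nodes, then since the leader $v_1$ lies in the complement of that pair, Lemma \ref{doulem} supplies an eigenvector with exactly those two coordinates nonzero; if some triple are TCD nodes, Lemma \ref{trilem} supplies one with three nonzero coordinates; and if $\{v_2,v_3,v_4,v_5\}$ are QCD nodes, Lemma \ref{fivelem} supplies one with all four nonzero. In each case $\bar y$ has the form forbidden by Proposition \ref{singPro}, so controllability with leader $v_1$ indeed requires all three conditions to hold simultaneously.

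For sufficiency I would assume uncontrollability, extract via Proposition \ref{singPro} an eigenvector $\bar y=[0,y_2,y_3,y_4,y_5]^T$, and split on the number of nonzero entries among $y_2,y_3,y_4,y_5$. A single nonzero entry is impossible: as in the sufficiency arguments of Theorems \ref{doubDcd} and \ref{tripDcd}, it would force the corresponding node to be isolated, contradicting connectedness. Exactly two nonzero entries identify a DCD pair by Lemma \ref{doulem}, exactly three identify a TCD triple by Lemma \ref{trilem}, and all four nonzero identify QCD nodes by Lemma \ref{fivelem}. Hence at least one forbidden structure must be present, so if all three conditions hold the system cannot be uncontrollable, completing the equivalence.

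The main obstacle is bookkeeping rather than computation: one must verify that the case split on the number of nonzero coordinates is genuinely exhaustive, that the fixed leader $v_1$ always lies in the complement of the destructive node set so the leader-selection hypotheses of Lemmas \ref{doulem}, \ref{trilem}, and \ref{fivelem} are met verbatim, and that the zero/nonzero patterns align exactly with the eigenvector forms in those lemmas (recalling that, since $p,q,r,s_1,s_2,t_1,t_2$ are follower indices, the $\mathcal{N}_{kf}$ conditions of the lemmas coincide with the $\mathcal{N}_k$ conditions of the DCD, TCD, and QCD definitions). Because each lemma is already an \emph{if and only if} between the existence of its eigenvector and the associated destructive-node structure, no residual calculation remains once the enumeration of cases is settled.
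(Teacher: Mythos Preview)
Your proposal is correct and follows essentially the same route as the paper. The paper's own proof is the single line ``Based on Lemma \ref{fivelem}, the result can be proved in the same vein as Theorem \ref{tripDcd},'' and what you have written is precisely the unpacking of that sentence: invoke Proposition \ref{singPro}(ii), then for sufficiency case-split on the number of nonzero follower coordinates in $\bar y$ (ruling out one nonzero by connectedness, and matching two, three, four nonzero to Lemmas \ref{doulem}, \ref{trilem}, \ref{fivelem} respectively), and for necessity read the same lemmas in the forward direction.
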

\begin{proof}
Based on Lemma \ref{fivelem}, the result can be proved in the same vein as Theorem \ref{tripDcd}. 
\end{proof}

\begin{remark}
For a graph consisting of five vertices, Theorems \ref{doubDcd}, \ref{tripDcd}, \ref{fivenosin} conspire to answer the following question: with all different selections of leaders, what are the graph topology based necessary and sufficient conditions under which the system is controllable? Theorems \ref{fivenosin}, \ref{tripDcd}, \ref{doubDcd} answer this question with respect to, respectively, the case of single, double and triple leaders. In this sense, these three theorems together provide a complete graphical characterization for the controllability with communication graphs consisting of five vertices.
\end{remark}

\section{Conclusion}

The increasingly widespread use of networks calls for reasonable design and organization of network topologies.  For controllability of multi-agent networks, the problem was tackled in the paper by identifying 
the topology structures formed by the proposed controllability destructive nodes. These discovered communication structures not only reveal uncontrollable topologies but also result in several necessary and sufficient graphical conditions on controllability. The results exhibit a novel method of coping with controllability by which a complete graph based characterization is presented for graphs consisting of five nodes.

\end{document}